\documentclass[a4paper,UKenglish,cleveref,autoref]{lipics-v2021}

\pdfoutput=1

\title{Generalized Decidability via Brouwer Trees}

\author{Tom {de Jong}}{School of Computer Science, University of Nottingham, United Kingdom \and \url{http://www.tdejong.com}}{tom.dejong@nottingham.ac.uk}{https://orcid.org/0000-0003-1585-3172}{Royal Society [URF{\textbackslash}R1{\textbackslash}191055, URF{\textbackslash}R{\textbackslash}241007]}

\author{Nicolai Kraus}{School of Computer Science, University of Nottingham, United Kingdom \and \url{https://nicolaikraus.github.io}}{nicolai.kraus@nottingham.ac.uk}{https://orcid.org/0000-0002-8729-4077}{Royal Society [URF{\textbackslash}R1{\textbackslash}191055, URF{\textbackslash}R{\textbackslash}241007], Engineering and Physical Sciences Research Council [EP/Z000602/1]}

\author{Aref Mohammadzadeh}{School of Computer Science, University of Nottingham, United Kingdom \and \url{https://www.arefmohammadzadeh.com/}}{aref.mohammadzadeh@nottingham.ac.uk}{https://orcid.org/0009-0001-6884-7975}{}

\author{Fredrik {Nordvall Forsberg}}{Computer \& Information Sciences, University of Strathclyde, United Kingdom \and \url{https://fredriknf.com/}}{fredrik.nordvall-forsberg@strath.ac.uk}{https://orcid.org/0000-0001-6157-9288}{Engineering and Physical Sciences Research Council [EP/Y000455/2]}

\authorrunning{T. de Jong and N. Kraus and A. Mohammadzadeh and F. Nordvall Forsberg} 

\Copyright{Tom de Jong and Nicolai Kraus and Aref Mohammadzadeh and Fredrik Nordvall Forsberg} 

\ccsdesc[500]{Theory of computation~Type theory}
\ccsdesc[500]{Theory of computation~Constructive mathematics}

\keywords{Decidability in constructive mathematics, homotopy type theory, ordinals, Brouwer trees, countable choice}

\newcommand{\repo}{https://bitbucket.org/nicolaikraus/constructive-ordinals-in-hott/}
\newcommand{\baseurl}{https://arxiv.org/src/2602.10844v2/anc/html/Paper.html}

\supplement{Formalization in Cubical Agda}
\supplementdetails[swhid={swh:1:snp:af7a3d40f000eeb74aa0f06222f84f76e2046841;origin=https://bitbucket.org/nicolaikraus/constructive-ordinals-in-hott.git}]{Repository}{\repo}
\supplementdetails{HTML rendering}{\baseurl}

\acknowledgements{We are grateful to Chuangjie Xu for early discussions on the
  topic of \(\alpha\)\nobreakdash-decidability.}

\nolinenumbers 

\EventEditors{Claudia Faggian and Joost-Pieter Katoen}
\EventNoEds{2}
\EventLongTitle{41st Annual Symposium on Logic in Computer Science (LICS 2026)}
\EventShortTitle{LICS 2026}
\EventAcronym{LICS}
\EventYear{2026}
\EventDate{July 20--23, 2026}
\EventLocation{Lisbon, Portugal}
\EventLogo{}
\SeriesVolume{380}
\ArticleNo{59}

\crefformat{enumi}{(#2#1#3)}
\crefformat{enumii}{(#2#1#3)}
\crefformat{enumiii}{(#2#1#3)}
\crefformat{enumiv}{(#2#1#3)}

\usepackage[dvipsnames]{xcolor}
\usepackage{mathtools}
\usepackage{tikz-cd}

\newcommand{\bzero}{\mathsf{zero}}
\DeclareMathOperator{\bsuc}{\mathsf{succ}}
\DeclareMathOperator{\blim}{\mathsf{limit}}

\newcommand{\jump}[1]{#1^\uparrow}
\newcommand{\unjump}[1]{#1^\downarrow}
\newcommand{\doublejump}[1]{#1^\Uparrow}

\newcommand{\bbisim}{\mathsf{bisim}}
\newcommand{\lzero}{\mathord\leq\mbox{-}\mathsf{zero}}
\newcommand{\ltrans}{\mathord\leq\mbox{-}\mathsf{trans}}
\newcommand{\lsuccmono}{\mathord\leq\mbox{-}\mathsf{succ}\mbox{-}\mathsf{mono}}
\newcommand{\lcocone}{\mathord\leq\mbox{-}\mathsf{cocone}}
\newcommand{\llimiting}{\mathord\leq\mbox{-}\mathsf{limiting}}
\newcommand{\leqtrunc}{\mathord\leq\mbox{-}\mathsf{trunc}}
\newcommand{\brwtrunc}{\mathsf{trunc}}

\newcommand{\toincr}[1]{\xrightarrow{#1}}

\newcommand{\datakw}{\textcolor[HTML]{CD6600}{\mathsf{data}}}
\newcommand{\wherekw}{\textcolor[HTML]{CD6600}{\mathsf{where}}}
\newcommand{\AC}[1]{\textcolor[HTML]{008B00}{#1}}
\newcommand{\AT}[1]{\textcolor[HTML]{0000CD}{#1}}

\newcommand{\LPO}{\mathsf{LPO}}

\newcommand{\MP}{\mathsf{MP}}

\newcommand{\Prop}{\mathsf{Prop}}

\newcommand{\Set}{\mathsf{Set}}
\newcommand{\UU}{\mathcal{U}}
\newcommand{\N}{\mathbb N}
\newcommand{\Bool}{\mathbf{2}}
\newcommand{\Unit}{\mathbf{1}}

\newcommand{\bff}{\mathsf{false}}
\newcommand{\btt}{\mathsf{true}}
\newcommand{\Brw}{\mathsf{Brw}}
\newcommand{\Brwzl}{\Brw^{zl}}

\newcommand{\limMin}{\mathsf{limMin}}
\newcommand{\limMax}{\mathsf{limMax}}
\newcommand{\limMiF}[2]{\mathsf{limMin} \, {#1} \; {#2}}
\newcommand{\limMaF}[2]{\mathsf{limMax} \, {#1} \; {#2}}

\newcommand{\Psistruc}{\Psi^+}
\newcommand{\isSemiDec}{\mathsf{isSemiDec}}
\newcommand{\SemiDec}{\mathsf{SemiDec}}
\newcommand{\Psitrunc}{\Psi^*}
\newcommand{\FChoice}{\mathsf{FC}}

\newcommand{\defeq}{\mathrel{\vcentcolon\mspace{-1mu}\equiv}}

\newcommand{\Sierp}{\mathcal S}
\newcommand{\Svee}{\bigvee}

\newcommand{\psiapprox}[2]{\overline{\Psi_{#1}}(#2)}

\newcommand{\gendec}[2]{#1\mbox-\mathsf{decidable} \, #2}

\DeclareMathOperator{\upa}{\Uparrow}
\DeclareMathOperator{\dwa}{\Downarrow}

\DeclarePairedDelimiter{\trunc}{\|}{\|}
\DeclarePairedDelimiter{\totrunc}{|}{|}

\newcommand{\limMinRel}[3]{\limMin[#1,#2]{\sim}#3}
\newcommand{\limsucc}{\textsf{lim-succ}}

\newenvironment{fequation}[1]
{
  \newtagform{formalized-eq}{(\flink{#1}\,}{)}%
  \usetagform{formalized-eq}%
  \begin{equation}%
}
{
  \end{equation}%
  \usetagform{default}\ignorespacesafterend%
}

\newcommand{\formalized}{{\color{NavyBlue!75!White}{\raisebox{-0.5pt}{\scalebox{0.8}{\faCog}}}}}
\newcommand{\flinkurl}[1]{\href{#1}{\formalized}}
\newcommand{\flink}[1]{\flinkurl{\baseurl\##1}}

\begin{document}

\maketitle

\begin{abstract}
  In the setting of constructive mathematics, we suggest and study a framework
  for decidability of properties, which allows for finer distinctions than just
  ``decidable, semidecidable, or undecidable''.  We work in homotopy type theory
  and use Brouwer tree ordinals to specify the level of decidability of a
  property. In this framework, we express the property that a proposition is
  $\alpha$\nobreakdash-decidable, for an ordinal $\alpha$, and show that
  it generalizes decidability and semidecidability. Further generalizing known
  results, we show that $\alpha$\nobreakdash-decidable propositions are closed
  under binary conjunction, and discuss for which $\alpha$ they are closed under
  binary disjunction. We prove that if each $P(i)$ is semidecidable, then the
  countable meet $\forall i\in \N. P(i)$ is $\omega^2$-decidable, and similar
  results for countable joins and iterated quantifiers. We also discuss the
  relationship with countable choice.
  All our results are formalized in Cubical Agda.
\end{abstract}

\section{Introduction}

Decidability and semidecidability are at the heart of computer science.
For example, it is decidable whether a natural number $p$ is prime, and whether the pair $(p,p+2)$ is a twin prime pair.
Given a natural number $n$, it is semidecidable whether there is a twin prime pair above $n$, i.e., whether there is a number $p > n$ such that both $p$ and $p+2$ are prime.
A long-standing open problem in number theory, the famous \emph{twin prime conjecture}, asks whether there is a twin prime pair above $n$ for every natural number $n$.
Despite significant efforts and successes \cite{zhang2014bounded,polymath2014bounded}, the question remains unsolved.%
\footnote{For the avoidance of doubt, we do not make any progress on the twin prime conjecture --- it is merely 
  an example of a proposition in the context of (generalized) decidability. Moreover, once the conjecture 
  gets resolved, it becomes a decidable problem, so that we can no longer use it as an interesting example.} 

Furthermore, there is no known algorithm that could semidecide the twin prime conjecture.
While one could dovetail the searches and thus run the search for every $n$ at some point, a positive result would require all searches to be successful, which (at least on an intuitive level) would take ``too long.''
The current paper suggests and studies a framework in which the duration of such ``algorithms'' can be made precise and proved.

We work in constructive mathematics, where one commonly used~\cite{Rosolini1986,bauer2006first} definition of the statement ``the property $P$ is semidecidable'' is that there exists a binary sequence $s : \N \to \Bool$ such that $P$ holds if and only if there exists an index $i$ such that $s_i = 1$.
Checking whether such a binary sequence contains a $1$ corresponds to checking whether a given \emph{conatural number} (also known as \emph{extended natural number}) is finite; in fact, one way to define the set of conatural numbers is as the collection of non-decreasing binary sequences~\cite{Escardo2013}.
Classically, the ordinal corresponding to the set of conatural numbers is $\omega + 1$, and a semidecidable proposition is one for which a positive answer can be discovered in $\omega$-many steps, assuming this positive answer exists.

For the twin prime conjecture, there are $\omega$-many semidecidable sub-problems, namely one for each possible input $n$, and each of these sub-problems lets us discover a positive answer within $\omega$ steps (assuming this answer exists). Thus, one might expect (and we will prove) that the twin prime conjecture's ordinal is $\omega^2$.

Our general definition is that a proposition $P$ is $\alpha$-decidable, for a given ordinal~$\alpha$, if there exists an ordinal $x$ such that $P \leftrightarrow (x \geq \alpha)$.
Naturally, this definition and everything we can prove only matches our expectation if we choose a suitable notion of ordinals which ensures, for example, that the $(\omega+1)$-decidable propositions coincide with the semidecidable ones.
For instance, if we take our set of ordinals to be given by some set of ``syntactic'' representations~\cite{buchholz:notation,schuette:book,Takeuti1987}
such as Cantor Normal Forms (i.e., binary trees with a certain condition), then the inequality $x \geq \alpha$ is always decidable, forcing $\alpha$-decidability to be identical to decidability.
Similarly, if we take ordinals to mean ``sets with a wellfounded, transitive, and extensional relation''~\cite{Powell1975,AczelRathjen2010,Grayson1978,Grayson1982,hott-book,TypeTopologyOrdinals} in a constructive meta-theory, then the inequality $x \geq \omega + 1$ cannot be semidecidable.

A notion of ordinals that works for us are Brouwer tree ordinals~\cite{church:1938,kleene:notation-systems} as previously formulated \cite{kraus2023ordinals} in homotopy type theory, a setting that we adopt.
The type $\Brw$ of such ordinals is, together with the relation ${\leq} : \Brw \to \Brw \to \Prop$, given by a quotient inductive-inductive construction~\cite{altenkirch2016qiit}.
The constructors for $\Brw$ are $\bzero$, $\bsuc$, and $\blim$, constructing zero, successors, and limits of sequences, together with a path constructor to ensure that bisimilar sequences give equal limits, and constructors for the relation $\leq$ to ensure that the expected equations hold.
An important design choice for the current work is that the constructor $\blim$ only accepts \emph{strictly} increasing sequences as its argument.
As a consequence, it is decidable whether a given ordinal is finite or infinite: $\bzero$ is finite, $\blim f$ is always infinite, and $\bsuc \beta$ is finite if and only if $\beta$ is.
Similarly, it is semidecidable whether $\blim f \geq \omega + 1$ because the inequality holds if and only if we can find an index $i$ such that $f_i$ is infinite, which is equivalent to finding the digit $1$ in a binary sequence.

\subsubsection*{Setting}
We work in homotopy type theory, using standard terminology and notation~\cite{hott-book}.
In particular, we write $\trunc A$ for the propositional truncation, and $\exists x.\, P(x)$ for $\trunc{\Sigma x:X.\, P(x)}$.
If $P$ and $Q$ are propositions (i.e., types with at most one element), then we write $P \land Q$ instead of $P \times Q$, and $P \lor Q$ instead of $\trunc{P + Q}$. We write $\Prop$ for the subtype of propositions.%
\footnote{Everything can be assumed to take place in a fixed (e.g.\ the lowest) universe, and $\Prop$ is a subtype of this fixed universe. The Agda formalization is universe-parametric. Propositional resizing is not assumed.}

All the (small) types we consider are sets, i.e., have no interesting higher equalities.
Nevertheless, concepts from homotopy type theory are required for our development as we build on the framework introduced in \cite{kraus2021connecting,kraus2023ordinals},
which includes the set-truncated higher (a.k.a.\ quotient) inductive-inductive type $\Brw$ of Brouwer tree ordinals and its relation $\leq$.
While no further higher inductive types are used (apart from in \cref{subsec:Sierpinski}),
univalence is heavily relied on to establish the desired properties of $\Brw$~\cite{kraus2023ordinals}.

In principle, it should be possible to carry out our work in a plain version of Martin-L\"of Type Theory or another constructive foundation, as long as function extensionality holds, propositional truncation is available, and the type $\Brw$ with its relation~$\leq$ are assumed together with their properties. (\cref{subsec:Sierpinski} would require a similar additional assumption.)

\subsubsection*{Summary of main results}

Defining an arbitrary proposition $P$ to be \emph{$\alpha$\nobreakdash-decidable} if $\exists x:\Brw. \, P \leftrightarrow x \geq \alpha$,
we first establish that $\alpha$-decidability generalizes existing notions of decidability, in the sense that $P$ is decidable if and only if it is $1$-decidable (if and only if it is $\omega$-decidable), and semidecidable if and only if it is $(\omega + 1)$-decidable. We show that the set of $\alpha$-decidable propositions is closed under binary conjunctions, and the smaller set of $(\omega \cdot n + k)$-decidable propositions is closed under binary disjunctions.
We further study families $P : \N \to \Prop$ of semidecidable propositions and construct their \emph{characteristic ordinal}, which allows us to determine that their countable meet is $\omega^2$-decidable, an instance of which is the twin prime conjecture.
The countable join of the family is only $(\omega \cdot 3)$-decidable, unless countable choice is assumed which, as is well-known, makes it semidecidable.
In a related vein, we study the notion of \emph{Sierpi\'nski-semidecidability}, an alternative to semi\-decidability that is closed under countable joins without the assumption of countable choice, and its relation to $\alpha$-decidability.
Using constructive taboos such as the Limited Principle of Omniscience or Markov's principle, we are able to separate classes of $\alpha$-decidability, and studying the effect of quantifier alternations allows us to establish that the search for a counter-example to the twin prime conjecture, assuming it exists, is successful within $\omega^2+\omega$ steps.

\subsubsection*{Agda formalization}
We have formalized and checked all our results in Cubical Agda~\cite{cubicalAgdapaper}, building on a previous formalization of constructive notions of ordinals~\cite{kraus2023ordinals}.
Using Cubical Agda was convenient since it supports quotient inductive-inductive types natively (i.e., without tricks or encodings), but any proof assistant used for homotopy type theory (e.g.\ Rocq) would have worked.

The source code of the formalization can be found in the repository
\url{\repo}, type-checks with Agda 2.8.0, and is archived on Software Heritage (cf.~\emph{Supplementary Material} on the title page).
A~browsable rendering can be found at
\url{\baseurl}. Throughout this paper, each definition, lemma, proposition, and theorem is annotated with the symbol~\formalized, which is a link to the corresponding Agda code in that HTML file.

\subsubsection*{Related work}

Our work lies at the boundary of \emph{synthetic computability theory}, a subject pioneered by Bauer~\cite{bauer2006first} and influenced by Rosolini~\cite{Rosolini1986}, Bridges and Richman~\cite{Bridges_Richman_1987}, and many others.
The approach of synthetic computability theory is to \emph{not} fix a model of computation (such as Turing machines), but to use the notion of computability that is built into the logic one uses.
For example, in synthetic computability theory every function $f : \N \to \N$ is (tautologically) computable,
and a standard definition of semidecidability says that $P$ is semidecidable if there exists a binary sequence that contains the digit $1$ if and only if $P$ holds.

The theory of \emph{infinite time Turing machines} initiated by Hamkins and
Lewis~\cite{ittm} investigates a notion of Turing machine which can run
for infinite ordinal time. While there are similarities in motivation,
the resulting development is quite different from ours. Infinite time
Turing machines are a specific model of computation, couched in a
classical metatheory, whereas our notion of decidability is synthetic,
and developed constructively. It may be possible to describe our development
as a synthetic version of infinite time Turing machines that follows the approach
of constructive mathematics. However, the foundational differences mean that a
precise connection could only be made by interpreting our theory into an effective model, which, while of interest, is beyond the scope of this paper.

Similarly, \emph{Limit Computable Mathematics}~\cite{Gold,Hayashi} is concerned with problems that can be solved in the limit of a possibly infinite sequence of ``guesses'', i.e., by infinitely long decision procedures. This gives rise to constructive interpretations of classical proofs~\cite{Berardi}, which our framework of ordinal decidability is not aiming to do.

Within constructive type theories, Brouwer tree ordinals have been investigated a number of times,
e.g.\ by Coquand, Hancock and Setzer~\cite{CoquandHancockSetzer1997}.
The most relevant work for us is the one by Kraus, Nordvall Forsberg, and Xu~\cite{kraus2023ordinals}, which suggests the version of Brouwer tree ordinals that are used in this paper.
Quotient inductive-inductive types were introduced in the homotopy type theory book~\cite{hott-book} and further studied by Altenkirch and collaborators~\cite{altenkirch2016qiit}.

A preliminary variant of the definition of \(\alpha\)-decidability, and its relation to semidecidability, was presented in a talk at TYPES~2022~\cite{kraus2022decidability}. The present paper develops and refines the  theory suggested in that talk.

\subsubsection*{Contents}

We review Brouwer tree ordinals in \cref{sec:Brouwer}, and connect our generalized decidability with existing decidability notions in \cref{sec:ord-dec}.
\cref{sec:reduction-to-limits} shows that it suffices to consider $\alpha$-decidability with $\alpha$ a limit ordinal, a result that is of limited interest in itself but helpful for the remaining technical development.
\cref{sec:BinConj,sec:bin-disj} discuss binary conjunctions $\land$ and disjunctions $\lor$ of $\alpha$-decidable propositions, while \cref{sec:semidec-and-quantifiers} establishes results about countable meets $\forall$ and joins $\exists$. \cref{sec:CC} presents connections with countable choice, while \cref{sec:Agda-technical-discussion} explains formalization-specific strategies, and \cref{sec:conclusions} concludes.

\section{Brouwer Tree Ordinals}\label{sec:Brouwer}

The notion of ordinals that this work is based on are the Brouwer tree ordinals introduced in
Kraus, Nordvall Forsberg, and Xu~\cite{kraus2021connecting,kraus2023ordinals}.
In this section, we recall their constructions in two stages.
We first explain how the type $\Brw$ with the relation $\leq$ is constructed in \cref{subsec:constr-of-Brw}, and recall the important properties in \cref{subsec:Brw-properties}.
The reader is free to skip the first subsection as only the properties, but not the implementation, are referred to later.

\subsection{Construction of Brouwer Tree Ordinals}\label{subsec:constr-of-Brw}

The type of Brouwer tree ordinals, denoted $\Brw$, is defined mutually with its order relation $\leq \, : \Brw \to \Brw \to \Prop$ as a quotient inductive-inductive type~\cite{altenkirch2016qiit}. This means the type and its relation are generated simultaneously by a list of point- and path-constructors.
In the rest of the paper, whenever we talk of \emph{ordinals}, we refer to elements of the type $\Brw$.

The point constructors for the type $\Brw$ are $\bzero$, $\bsuc$, and $\blim$; the important thing to note here is that $\blim$ constructs the limit of a \emph{strictly increasing} sequence, meaning that the input of $\blim$ is a function $f : \N \to \Brw$ such that $\forall n. \, \bsuc(f_n) \leq f_{n + 1}$.
This ``design choice'' has important consequences with respect to the decidability of $\Brw$, which we will discuss in \cref{subsec:Brw-properties}.
The path constructors of $\Brw$ ensure that $\Brw$ is a set, and that limits of bisimilar sequences are equal.
Here, a sequence $f$ is simulated by a sequence $g$ if $\forall i. \, \exists j. \, f_i \leq g_j$, and two sequences are bisimilar if each simulates the other.
The constructors of the relation $\leq$ ensure that $\bzero$ is minimal, $\bsuc$ is monotone, $\blim$ really constructs limits, and the relation is transitive and valued in propositions.
The constructors are given in pseudo-Agda notation in \cref{fig:Brw-def}.

\begin{figure}
\[
  \begin{array}{lll}
  \multicolumn{3}{l}{\datakw \; \AT{\Brw} : \AT{\Set} \; \wherekw}\\
    \quad \AC{\bzero} &:& \AT{\Brw} \\
  \quad \AC{\bsuc} &:& \AT{\Brw} \to \AT{\Brw} \\
  \quad \AC{\blim} &:& (\AT{\N} \toincr{<} \AT{\Brw}) \to \AT{\Brw} \\
  \quad \AC{\bbisim}  &:&  f\, \AT{\approx}\, g \to \AC{\blim} \, f = \AC{\blim} \, g\\
  \quad \AC{\brwtrunc} &:& (p \, q : x\, =\, y) \to p\, =\, q
  \end{array}
\quad
\begin{array}{lll}
  \multicolumn{3}{l}{\datakw \; \leq {} : \AT{\Brw} \to \AT{\Brw} \to \AT{\Prop} \; \wherekw} \\
  \quad \AC{\lzero} &:& \AC{\bzero} \leq x \\
  \quad \AC{\ltrans} &:& x \leq y \to y \leq z \to x \leq z\\
  \quad \AC{\lsuccmono} &:&x \leq y \to \AC{\bsuc} \, x \leq \AC{\bsuc} \, y \\
  \quad \AC{\lcocone} &:& (\exists k. x \leq f\,k) \to x \leq \AC{\blim} \, f \\
  \quad \AC{\llimiting} &:& (\forall k. f\,k \leq x) \to \AC{\blim} \, f \leq x \\
  \quad \AC{\leqtrunc} &:& (p \, q : x \leq y) \to p = q
\end{array}
\]
\caption{The construction of $(\Brw, \leq)$ as a quotient inductive-inductive type as presented in \cite{kraus2023ordinals}.
The variables $x, y, z : \Brw$ and $f, g : \N \toincr{<} \Brw$ are implicitly quantified over.
Note that $\N \toincr{<} \Brw$ refers to strictly increasing sequences, and $f \approx g$ stands for bisimilarity.}\label{fig:Brw-def}
\end{figure}

For this definition, one can define addition, multiplication, and exponentiation, with the usual properties.
Specific examples such as finite ordinals $n$, the least infinite ordinal $\omega$, and~$\epsilon_0$ can be defined straightforwardly.
The details are given in
~\cite{kraus2023ordinals}.

\subsection{Properties of Brouwer Tree Ordinals}\label{subsec:Brw-properties}

The definition of $\Brw$ and its relation~$\leq$ have several useful consequences.
When proving a \emph{property} of ordinals, i.e., constructing an element of a proposition-valued type family over $\Brw$, it suffices to consider the point constructors ($\bzero$, $\bsuc$, $\blim$).
With the following lemma, we recall several core observations that we need for the current paper from \cite{kraus2023ordinals}.

\begin{lemma}[\flink{Lemma-1}]\label{lem:recall-basic-Brw-properties}
We have the following basic properties of $\Brw$ and its relation $\leq$, with $\alpha < \beta$ being defined as $\bsuc\,\alpha \leq \beta$.
\begin{romanenumerate}
	\item \label{item:classification}
	\emph{Classification of ordinals:} Every ordinal is either zero, or a successor, or a limit.

	\item \label{item:finite-is-dec}
	\emph{Decidability of finiteness:}
	Related to the classification above, it is decidable whether a given ordinal $\alpha$ is finite (i.e., $\alpha < \omega$) or infinite (i.e., $\alpha \geq \omega$).
	If $\alpha$ is finite, then we can explicitly find the natural number $n$ such that the canonical embedding $\N \to \Brw$ sends $n$ to $\alpha$.
	However, being able to decide $\alpha \leq \beta$ and $\alpha = \beta$ in general is a constructive taboo.

	\item \emph{Properties of the relation:} The relation $\alpha < \beta$ is transitive, wellfounded, and extensional.

          \newcounter{properties-counter}
          \setcounter{properties-counter}{\value{enumi}}
\end{romanenumerate}
The following properties characterize the relation $\leq$\textup{:}
\begin{romanenumerate}
  \setcounter{enumi}{\value{properties-counter}}
	\item \label{lem:suc-mono-inv}
	$\bsuc\,\alpha \leq \bsuc\,\beta$ if and only if $\alpha \leq \beta$.
	\item \label{lem:x-under-limit-means-x-under-element}
	If $\alpha < \blim f$ then $\exists n. \, \alpha < f_n$.
	\item \label{lem:lim-under-suc}
	If $\blim \, f \leq \bsuc\,\alpha$ then $\blim f \leq \alpha$, and if $\alpha < \blim f$ then $\bsuc\,\alpha < \blim f$.
	\item \label{lem:lim-under-lim-simulation}
	$\blim f \leq \blim g$ implies that $f$ is simulated by $g$. \qed
\end{romanenumerate}
\end{lemma}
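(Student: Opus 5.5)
This lemma collects results already established by Kraus, Nordvall Forsberg, and Xu~\cite{kraus2023ordinals}, so the plan is to recall how each item is derived from the quotient inductive-inductive presentation in \cref{fig:Brw-def}. The central tool is a \emph{structural characterisation of $\leq$} via an encode--decode argument. We define by double recursion on $\Brw$ a proposition-valued relation $\mathsf{Code}(x,y)$ that computes case by case:
\begin{itemize}
\item $\mathsf{Code}(\bzero, y)$ is $\Unit$;
\item $\mathsf{Code}(\bsuc\,x, \bzero)$ is empty, and $\mathsf{Code}(\bsuc\,x, \bsuc\,y)$ is $\mathsf{Code}(x,y)$;
\item $\mathsf{Code}(\bsuc\,x, \blim g)$ is $\exists n.\,\mathsf{Code}(\bsuc\,x, g_n)$;
\item $\mathsf{Code}(\blim f, y)$ is $\forall k.\,\mathsf{Code}(f_k, y)$.
\end{itemize}
Defining this requires checking that each clause respects the $\bbisim$ path constructor. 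For the clauses with a limit in the second argument, this uses that the sequences are strictly increasing together with the simulation data. One then proves $\mathsf{Code}(x,y) \leftrightarrow x \leq y$. The direction from $\mathsf{Code}$ to $\leq$ follows by induction using the constructors. The direction from $\leq$ to $\mathsf{Code}$ follows by induction on the derivation of $\leq$, for which the main nontrivial case is transitivity of $\mathsf{Code}$. Once $\leq$ is characterised by $\mathsf{Code}$, items \cref{lem:suc-mono-inv}--\cref{lem:lim-under-lim-simulation} are read off directly:
\begin{itemize}
\item \cref{lem:suc-mono-inv} is the $\bsuc$/$\bsuc$ clause.
\item \cref{lem:x-under-limit-means-x-under-element} is the $\bsuc$/$\blim$ clause, since $\alpha<\blim f$ means $\bsuc\alpha\leq\blim f$.
\item \cref{lem:lim-under-lim-simulation} follows by combining the $\blim$ clause with the previous item and strictness: from $f_k<f_{k+1}\leq \blim g$ we obtain $f_k \leq g_n$ for some $n$.
\item For \cref{lem:lim-under-suc}, $\blim f\leq\bsuc\alpha$ gives $f_{k+1}\leq\bsuc\alpha$, hence $\bsuc f_k\leq\bsuc\alpha$, so $f_k\leq\alpha$ for all $k$. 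The second half uses \cref{lem:x-under-limit-means-x-under-element} and strictness: from $\alpha<f_n$ we get $\bsuc\alpha\leq f_n<f_{n+1}$.
\end{itemize}

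For the classification \cref{item:classification}, we note that the predicate ``$x$ is zero, or a successor, or a limit'' can be made a proposition: the three cases are mutually exclusive by the $\mathsf{Code}$ computations, and the witnesses can be stated as equalities in the set $\Brw$. We then prove it by induction on the point constructors, which is trivial. For \cref{item:finite-is-dec} we proceed by induction as well. $\bzero$ is finite. $\bsuc\beta$ is finite exactly when $\beta$ is, using $\bsuc\beta<\omega$ iff $\bsuc(\bsuc\beta)\leq\omega$, hence $\bsuc\beta\leq\omega_n$ for some $n$ by \cref{lem:x-under-limit-means-x-under-element}. $\blim f\geq\omega$ always holds, because strictness yields $n\leq f_n$ by an easy induction. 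For a finite $\alpha$, the natural number $n$ is extracted from the finiteness proof. The taboo statement follows by encoding an arbitrary binary sequence $s$ as a limit whose value is $\omega$ or $\omega+1$ depending on whether $s$ contains a $1$. Deciding equality with $\omega$ would then decide $\LPO$.

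For the third item, transitivity of $<$ is immediate from $\ltrans$ and $x\leq\bsuc x$. Wellfoundedness is proved by showing, by induction on $y$, that every $x<y$ is accessible; the limit case uses \cref{lem:x-under-limit-means-x-under-element}. Extensionality is the main obstacle. It requires showing that $\leq$ is antisymmetric and that $(\forall z.\,z<x\leftrightarrow z<y)$ implies $x\leq y$; in the paper this is where univalence/set-level reasoning about the $\bbisim$ constructor enters. I would first prove $x\leq y \leftrightarrow \forall z.(z<x\to z<y)$ by case analysis using the classification and \cref{lem:lim-under-lim-simulation}. Antisymmetry in the limit/limit case then reduces to $\bbisim$, and in the mixed cases to a contradiction via the $\mathsf{Code}$ computations. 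As the statement marks with \qed, all of this is cited from \cite{kraus2023ordinals} rather than reproved here.
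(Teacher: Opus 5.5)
The paper itself gives no proof here: it recalls these facts from \cite{kraus2023ordinals}. Your encode--decode outline is a sensible reconstruction of that kind of argument. The read-offs of \cref{lem:suc-mono-inv}--\cref{lem:lim-under-lim-simulation} from your $\mathsf{Code}$ clauses are correct, including where strictness is used.

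The gap is in the step that carries all the difficulty, namely defining $\mathsf{Code}$ on the quotient inductive-inductive type. The data of $f \approx g$ are proofs of the QIIT relation $\leq$, e.g.\ $g_n \leq g'_m$. So to get $\exists n.\,\mathsf{Code}(\bsuc x, g_n) \to \exists m.\,\mathsf{Code}(\bsuc x, g'_m)$ you need $\mathsf{Code}(\bsuc x, -)$ to be monotone along $\leq$. Dually, for the clause $\mathsf{Code}(\blim f, y) = \forall k.\,\mathsf{Code}(f_k, y)$ you need $\mathsf{Code}(-, y)$ to be antitone along $\leq$. You cannot obtain these afterwards from $x \leq y \to \mathsf{Code}(x,y)$, since that presupposes $\mathsf{Code}$ is already defined. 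Strictness of the sequences is irrelevant at this point.

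The missing idea is to construct $\mathsf{Code}$ \emph{simultaneously} with these (anti)monotonicity properties. Concretely, use the eliminator with the $\leq$-constructors interpreted in the target: for instance, monotone predicates $\Brw \to \Prop$ under reverse pointwise implication for the outer recursion, and $\Prop$ under implication for the inner one. Propositional extensionality, a consequence of univalence, supplies the paths in $\Prop$. This is the same mutual structure the paper describes for $\limMin$ in \cref{sec:Agda-technical-discussion}. Once it is in place, $x \leq y \to \mathsf{Code}(x,y)$ follows at once from $\mathsf{Code}(x,x)$ and monotonicity, without a separate transitivity argument.

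There are also two smaller errors.

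In the taboo part of \cref{item:finite-is-dec}, no limit can equal $\omega + 1$. The encoding $\blim{\jump{s}}$ equals $\omega$ if $s$ is constantly $0$, and $\omega \cdot 2$ if $s$ contains a $1$. Moreover, deciding $\blim{\jump{s}} = \omega$ only gives $\mathsf{WLPO}$: a negative answer yields $\neg\forall i.\, s_i = 0$ rather than an index with $s_i = 1$. Deciding $\omega + 1 \leq \blim{\jump{s}}$, or equality with $\omega \cdot 2$, gives $\LPO$. Both are taboos, so your conclusion survives.

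For \cref{item:classification}, exclusivity of the successor and limit cases is not a bare $\mathsf{Code}$ computation. From $\bsuc x = \blim f$ you need \cref{lem:lim-under-suc} to get $\bsuc x \leq \blim f \leq x$, and then irreflexivity of $<$. Also, if the successor case is phrased as $\Sigma y.\, x = \bsuc y$, its being a proposition needs injectivity of $\bsuc$. That comes from \cref{lem:suc-mono-inv} together with antisymmetry, so the classification has to be proved after the antisymmetry part of the third item, not before it.
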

We omit the (known) proofs of these properties.
Key for the classification of ordinals is the choice of only allowing limits of strictly increasing sequences.
This also gives decidability of finiteness, because $\bzero$ is finite, $\blim f$ is always infinite, and $\bsuc \beta$ is finite if and only if $\beta$ is.
Note that, in the formulation of the classification property, a limit ordinal is an ordinal $\alpha$ such that there \emph{exists} ($\| \Sigma \ldots \|$) a sequence~$f$; asking for the sequence $f$ itself would be too strong for the previous statement because of the constructor $\bbisim$.

In the literature on ordinals, some authors consider zero a limit.
Our definition does not allow us to do this, but it often happens that we want to express that an ordinal is either zero or a limit.
We write $\Brwzl$ for the subtype of $\Brw$ consisting of zero and limits.
Justified by the decidability property \cref{lem:recall-basic-Brw-properties}\cref{item:classification}, we refer to such ordinals as \emph{non-successor ordinals}.

\subsection{Rounding to Limits}

As a final preparation for the study of ordinal decidability, we note that we can \emph{round} every Brouwer tree ordinal to a non-successor ordinal.
We can round up or down, and we write $\upa, \dwa : \Brw \to \Brwzl$ for these two functions, defined in the obvious way:
\begin{alignat*}{4}
	&\upa \bzero &&\defeq \bzero \qquad\qquad &&\dwa \bzero &&\defeq \bzero \\
	&\upa (\bsuc x) &&\defeq x + \omega \qquad\qquad &&\dwa (\bsuc x) &&\defeq \dwa x \\
	&\upa (\blim f) &&\defeq \blim f  \qquad\qquad &&\dwa (\blim f) &&\defeq \blim f
\end{alignat*}
These definitions extend straightforwardly to the path constructors of $\Brw$.
The functions $\upa$ and $\dwa$ calculate non-successors, which is automatic for most cases; for the successor cases, note that $\upa (\bsuc x)$ is the limit of the sequence $\lambda n. x + n$, and $\dwa (\bsuc x)$ is a non-successor by induction on $x$.

\begin{lemma}[\flink{Lemma-2}]\label{cor:y<=xThen-RDy<=RDx}
	The functions $\upa$ and $\dwa$ preserve $\leq$. \qed
\end{lemma}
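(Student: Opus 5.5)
The goal is to show that $x \leq y$ implies both $\upa x \leq \upa y$ and $\dwa x \leq \dwa y$. Since $\leq$ is proposition-valued, the paper's remark applies: when proving a property of ordinals it suffices to treat the point constructors. The obvious first idea is induction on the derivation of $x \leq y$ (the constructors $\lzero$, $\ltrans$, $\lsuccmono$, $\lcocone$, $\llimiting$). That fails for $\upa$, however, because of $\lcocone$. From $x \leq f_k$ we would have to get $\upa x \leq \blim f$, and in general $\upa x \leq \upa f_k$ need not lie below $\blim f$. For example, $x = f_k = 5$ gives $\upa 5 = \omega$, and this is not $\leq \blim f$ in general. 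It is, though, because $\blim f$ is a limit of a strictly increasing sequence and hence $\geq \omega$. So I will not induct on the derivation. Instead I prove each claim by induction on $x$ and $y$, using the characterization properties in \cref{lem:recall-basic-Brw-properties}.

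\textbf{Plan for $\dwa$.} First I show $\dwa x \leq x$ (by induction on $x$). Then I show that every non-successor $z$ with $z \leq y$ satisfies $z \leq \dwa y$, by induction on $y$:
\begin{itemize}
\item If $y = \bzero$, then $z \leq \bzero$ forces $z = \bzero$.
\item If $y = \bsuc y'$, there are two subcases. If $z = \blim g \leq \bsuc y'$, then \cref{lem:recall-basic-Brw-properties}\cref{lem:lim-under-suc} gives $z \leq y'$. If $z = \bzero$, the claim is trivial. In both subcases the induction hypothesis then yields $z \leq \dwa y' = \dwa y$.
\item If $y$ is a limit, then $\dwa y = y$ and there is nothing to show.
\end{itemize}
Combining the two facts gives $\dwa x \leq x \leq y$, so $\dwa x \leq \dwa y$.

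\textbf{Plan for $\upa$.} First I show $x \leq \upa x$, and that $\upa z = z$ for every non-successor $z$. The key lemma is: if $z$ is a non-successor and $x \leq z$, then $\upa x \leq z$. I prove it by cases on $x$.
\begin{itemize}
\item The cases $x = \bzero$ and $x = \blim f$ are immediate.
\item If $x = \bsuc x'$, then $z$ cannot be $\bzero$, so $z = \blim g$ and $x' < \blim g$. I must show $x' + \omega \leq \blim g$. By \cref{lem:recall-basic-Brw-properties}\cref{lem:lim-under-suc}, iterating ``$\alpha < \blim g$ implies $\bsuc \alpha < \blim g$'' gives $x' + n < \blim g$ for every $n$. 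Since $x' + \omega$ is the limit of $\lambda n.\, x' + n$, the constructor $\llimiting$ finishes the case.
\end{itemize}
With the key lemma in hand, suppose $x \leq y$. Then $x \leq y \leq \upa y$, and $\upa y$ is a non-successor, so the key lemma gives $\upa x \leq \upa y$.

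\textbf{Main obstacle.} The delicate step is the successor case of the key lemma, together with making these inductions well-defined in the presence of the path constructors. The latter is harmless here because all goals are propositions. I also need the routine fact that $x' + \omega$ computes to $\blim(\lambda n.\, x'+n)$, which I take from the arithmetic of \cite{kraus2023ordinals}.
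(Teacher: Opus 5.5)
Your proof is correct. The paper gives no argument for this lemma (it is stated with \qed as routine), so there is no proof to compare against step by step. Your two key facts, for a non-successor $z$, are that $x \leq z$ implies $\upa x \leq z$, and that $z \leq y$ implies $z \leq \dwa y$. These are exactly the nontrivial directions of \cref{lemma:RU23}\cref{item:up-1} and \cref{item:up-2}. Together with $\dwa x \leq x \leq \upa x$, you are in effect deriving monotonicity as a corollary of those Galois-connection-style characterizations of rounding, which is a clean way to organize it and fits the lemma's label.

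Two small remarks:
\begin{itemize}
\item Ruling out $z = \bzero$ in the successor case of your key lemma needs $\bsuc x' \not\leq \bzero$. This is not listed in \cref{lem:recall-basic-Brw-properties}, but it follows from irreflexivity of $<$: $\bsuc \bzero \leq \bsuc x' \leq \bzero$ would give $\bzero < \bzero$.
\item Your stated reason for dismissing induction on the derivation of $x \leq y$ is not accurate. The $\lcocone$ case is no obstruction, since $f_k \leq \blim f$ gives $\upa f_k \leq \blim f$ by your own key lemma, as your example already shows.
\end{itemize}
Neither point affects the argument you actually give.
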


Using the rounding down function $\dwa$, we can split every ordinal as the sum of an infinite (or zero) and a finite ordinal:

\begin{lemma}[\flink{Lemma-3}]\label{item:spit-ordinal}
Every $\alpha$ can be written as $\alpha =  \lambda + n$, where $\lambda$ is a non-successor, and $n$ is finite (i.e., given by a natural number). \qed
\end{lemma}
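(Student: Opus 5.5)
The plan is to prove, by induction on $\alpha$, the statement: there exists $n : \N$ with $\alpha = \dwa\alpha + n$. Since $\dwa\alpha$ is a non-successor by construction, taking $\lambda \defeq \dwa \alpha$ then gives the lemma. The target is an equation in the set $\Brw$, hence a proposition, and so is its truncated existential. We may therefore use the remark from \cref{subsec:Brw-properties} and handle only the point constructors $\bzero$, $\bsuc$ and $\blim$; the path constructors need no separate treatment.

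The cases are as follows.
\begin{itemize}
\item For $\alpha = \bzero$, we have $\dwa\bzero \defeq \bzero$, and $\bzero = \bzero + 0$ by the defining equation of addition, so we take $n = 0$.
\item For $\alpha = \blim f$, we have $\dwa(\blim f) \defeq \blim f$, and again $n = 0$ works since $x + 0 = x$.
\item For $\alpha = \bsuc x$, the induction hypothesis gives $m$ with $x = \dwa x + m$. Then
\[
\bsuc x = \bsuc(\dwa x + m) = \dwa x + (m+1) = \dwa(\bsuc x) + (m+1),
\]
using that addition is defined by recursion on its second argument, so that $y + \bsuc z \defeq \bsuc(y+z)$ and the embedding of $\N$ sends $m+1$ to $\bsuc m$. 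The last step holds by the definition $\dwa(\bsuc x) \defeq \dwa x$. We take $n = m+1$.
\end{itemize}

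The one point requiring care is that the finite part must be an actual natural number, not merely a finite ordinal. This is why the statement is set up with $n : \N$ and the canonical embedding, and the successor case above produces such an $n$ directly. Because the goal is a proposition, the truncation in the existential causes no difficulty: we may eliminate the truncated induction hypothesis when proving a truncated conclusion.

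If one wants $n$ untruncated, we can argue separately that $n$ is unique, so the $\Sigma$-type is already a proposition. Since $\lambda$ is fixed as $\dwa\alpha$, it suffices to show that $\lambda + n = \lambda + m$ implies $n = m$. This follows by repeatedly applying injectivity of $\bsuc$ (\cref{lem:recall-basic-Brw-properties}\cref{lem:suc-mono-inv}, together with antisymmetry) until one side reaches $\lambda$. A non-successor $\lambda$ can never equal $\lambda + k$ for $k > 0$, since $\lambda < \lambda + k$ and $<$ is irreflexive by wellfoundedness. This uniqueness step is the main, though still minor, obstacle; the existence proof itself is routine.
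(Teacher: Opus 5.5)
Your proof is correct and takes the same route as the paper: set $\lambda \defeq \dwa\alpha$ and verify $\alpha = \dwa\alpha + n$ by induction on the point constructors, with the successor case using $\dwa(\bsuc x) \defeq \dwa x$ and $y + \bsuc z \defeq \bsuc(y+z)$. Your additional uniqueness argument for removing the truncation on $n$ is also sound, and it does not need $\lambda$ to be a non-successor, since $\lambda < \lambda + k$ holds for every $\lambda$ and every $k > 0$.
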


Limit ordinals cannot distinguish between $\alpha$ and $\upa \alpha$, in the following sense:
\begin{lemma}[\flink{Lemma-4}]\label{lemma:RU23}
	For a non-successor ordinal $\lambda : \Brwzl$ and any ordinal $\alpha : \Brw$, we have:
	\begin{romanenumerate}
		\item\label{item:up-1} $\alpha \leq \lambda \longleftrightarrow \upa \alpha \leq \lambda$
		\item\label{item:up-2} $\lambda \leq \alpha \longleftrightarrow \lambda \leq \dwa \alpha$
		\item\label{item:up-3} $\lambda < \alpha \longleftrightarrow \lambda < \upa \alpha$ \quad (i.e., $\lambda + 1 \leq \alpha \longleftrightarrow \lambda + 1 \leq \upa \alpha$).
	\end{romanenumerate}
\end{lemma}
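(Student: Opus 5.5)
Since all three statements are propositions about ordinals, I will prove them by induction on $\alpha$, treating only the point constructors $\bzero$, $\bsuc$, $\blim$. The $\bzero$ and $\blim$ cases are trivial in every item, because $\upa$ and $\dwa$ act as the identity there. All the work is in the case $\alpha = \bsuc x$, where $\upa(\bsuc x) = x + \omega$ and $\dwa(\bsuc x) = \dwa x$. In each item one direction is immediate from $\dwa \alpha \leq \alpha \leq \upa \alpha$. These two inequalities I establish first, by the same induction, using $x < x + \omega$ (as $x + \omega$ is the limit of $x + n$) and $\dwa x \leq x \leq \bsuc x$.

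For \cref{item:up-1}, the remaining direction assumes $\bsuc x \leq \lambda$ and must show $x + \omega \leq \lambda$. Since $\lambda$ is a non-successor and strictly above $x$, it cannot be $\bzero$, so by \cref{lem:recall-basic-Brw-properties}\cref{item:classification} it is $\blim f$ for some $f$ (the goal is a proposition, so we may unpack the truncation). By \llimiting\ it suffices to show $x + n \leq \blim f$ for every $n$. I do this by induction on $n$ using \cref{lem:recall-basic-Brw-properties}\cref{lem:lim-under-suc}: from $x + n < \blim f$ we obtain $\bsuc(x+n) < \blim f$, i.e.\ $x + (n+1) < \blim f$. So I will prove the stronger claim $x + n < \blim f$ for all $n$, with base case the hypothesis $x < \blim f$.

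For \cref{item:up-2}, assume $\lambda \leq \bsuc x$ and show $\lambda \leq \dwa x$. If $\lambda = \bzero$ this is trivial. If $\lambda = \blim f$, the first half of \cref{lem:recall-basic-Brw-properties}\cref{lem:lim-under-suc} gives $\lambda \leq x$. The induction hypothesis for $x$ (quantified over all $\lambda$) then gives $\lambda \leq \dwa x = \dwa(\bsuc x)$. This is why the induction should be set up with $\lambda$ universally quantified inside the motive.

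For \cref{item:up-3}, assume $\lambda < x + \omega$ and show $\lambda < \bsuc x$, i.e.\ $\lambda \leq x$. By \cref{lem:recall-basic-Brw-properties}\cref{lem:x-under-limit-means-x-under-element}, there is some $n$ with $\lambda < x + n$, i.e.\ $\lambda \leq x + (n-1)$ or $\bsuc\lambda \leq x + n$. I then peel off successors by induction on $n$. If $\lambda \leq \bsuc y$ with $\lambda$ a non-successor, then $\lambda \leq y$: for limits this is \cref{lem:recall-basic-Brw-properties}\cref{lem:lim-under-suc}, and for zero it is trivial. Iterating gives $\lambda \leq x$, hence $\lambda < \bsuc x$.

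I expect the main obstacle to be bookkeeping rather than ideas. The relevant pieces are the definitional unfolding of $x + \omega$ as a limit (which needs the strict increase of $\lambda n.\, x+n$) and the auxiliary facts $\dwa\alpha \leq \alpha \leq \upa\alpha$. The conceptual core, that a non-successor bounded by a successor is bounded by its predecessor, is supplied by \cref{lem:recall-basic-Brw-properties}\cref{lem:lim-under-suc}.
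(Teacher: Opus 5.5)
Your proof is correct. It is essentially the ``straightforward inductive argument'' the paper alludes to without giving details: zero and limit cases of $\alpha$ are trivial, and the successor case is handled by \cref{lem:recall-basic-Brw-properties}\cref{lem:suc-mono-inv}--\cref{lem:lim-under-suc} plus an auxiliary induction on $n$ for $x+\omega = \blim(\lambda n.\, x+n)$, with the induction hypothesis on $\alpha$ genuinely needed only for item~\cref{item:up-2} and for $\dwa\alpha \leq \alpha$. Generalizing over $\lambda$ in the motive is harmless but unnecessary, since the hypothesis is only ever applied at the same $\lambda$.
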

\begin{proof}
	These properties are all proved by straightforward inductive arguments.
\end{proof}

\section{Ordinal Decidability}\label{sec:ord-dec}

Based on the notion of ordinals recalled above, we define ordinal decidability:

\begin{definition}[\flink{Definition-5} $\alpha$-Decidability]\label{def:alpha-decidability}
	Given $\alpha : \Brw$, a proposition~$P$ is said to be \emph{$\alpha$-decidable} if there exists $y$ such that $P$ and $\alpha \leq y$ are equivalent:
	\begin{equation*}
		\gendec \alpha P \; \defeq \; \exists y : \Brw. (P \leftrightarrow \alpha \leq y).
	\end{equation*}
\end{definition}

\subsection{Connection with Existing Notions}

Ordinal decidability generalizes the notions of decidability and semidecidability in the following sense:

\begin{proposition}[\flink{Proposition-6}]\label{prop:DecOmegaEquivNew}
	A proposition $P$ is decidable if and only if it is $1$-decidable;
	and it is semidecidable if and only if it is $(\omega + 1)$-decidable:
	\begin{align*}
		P \lor \neg P \; & \longleftrightarrow \; \gendec 1 P \\
		\left(\exists (s : \N \to \Bool).(P \leftrightarrow \exists i. s_i = 1)\right) \; &\longleftrightarrow \; \gendec {(\omega + 1)} P
	\end{align*}
\end{proposition}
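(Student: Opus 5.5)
For the first equivalence, I will treat each direction separately and use that the conclusion is a proposition, so truncations can be eliminated. Suppose $P \lor \neg P$. If $P$ holds, take $y \defeq 1$; then $1 \leq y$ holds, so $P \leftrightarrow 1 \leq y$. If $\neg P$, take $y \defeq \bzero$. We need $\neg(1 \leq \bzero)$. This follows from wellfoundedness and irreflexivity of $<$ (\cref{lem:recall-basic-Brw-properties}), since $1 \leq \bzero$ means $\bzero < \bzero$. Conversely, given $y$ with $P \leftrightarrow 1 \leq y$, I will use the classification \cref{lem:recall-basic-Brw-properties}\cref{item:classification}. If $y = \bzero$, then $1 \leq y$ fails, so $\neg P$. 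If $y$ is a successor $\bsuc x$, then $1 \leq \bsuc x$ holds by $\lsuccmono$ and $\lzero$. If $y$ is a limit $\blim f$, then $\bzero \leq f_0 < f_1$ gives $1 \leq f_1 \leq \blim f$. Hence $1 \leq y$ is decidable, and so is $P$.

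For the second equivalence, the forward direction starts from a binary sequence $s$. I first make it monotone, $s'_n \defeq \max_{i \leq n} s_i$, so that $\exists i.\, s_i = 1 \leftrightarrow \exists i.\, s'_i = 1$. I then build a strictly increasing sequence $f : \N \to \Brw$ with $f_n \defeq n$ if $s'_n = 0$, and $f_n \defeq \omega + n$ if $s'_n = 1$. It is strictly increasing: if $s'_{n+1}=0$ then $s'_n = 0$ by monotonicity, so we need $n+1 \leq n+1$; if $s'_n=0$ and $s'_{n+1}=1$ we need $n+1 \leq \omega + n+1$; otherwise we need $\omega + n + 1 \leq \omega + n + 1$. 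Set $y \defeq \blim f$. If $s'_i = 1$, then $\omega + 1 \leq \omega + i \leq f_i \leq y$ via $\lcocone$, and the case $i = 0$ is handled by using $f_{i+1}$ instead. Conversely, if $\omega + 1 \leq \blim f$, then $\omega < \blim f$, so \cref{lem:recall-basic-Brw-properties}\cref{lem:x-under-limit-means-x-under-element} gives $\exists n.\, \omega < f_n$. If $s'_n = 0$ then $f_n = n$ is finite, and $\omega < n$ contradicts decidability of finiteness and wellfoundedness. Hence $s'_n = 1$.

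The backward direction is where I expect the main obstacle. Given $y$ with $P \leftrightarrow \omega+1 \leq y$, I must produce a binary sequence without being able to compare $y$ with $\omega+1$ directly. I case on the classification of $y$. If $y = \bzero$, then $\omega + 1 \leq y$ is false, and I take $s$ constantly $0$. If $y = \bsuc x$, then by \cref{lem:recall-basic-Brw-properties}\cref{lem:suc-mono-inv}, $\omega+1\le y$ iff $\omega \leq x$. This is decidable by \cref{item:finite-is-dec}, so I take a constant sequence. If $y = \blim f$, with the existential eliminated since the goal is a proposition, I claim $\omega + 1 \leq \blim f \leftrightarrow \exists n.\, \omega \leq f_n$. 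The forward implication is as above, and the backward one follows from strictness: $\omega + 1 \leq f_{n+1} \leq \blim f$. I then define $s_n \defeq 1$ iff $f_n$ is infinite, which is decidable by \cref{item:finite-is-dec}.

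The delicate points are that this case analysis is well-defined only under truncation, which is fine because the goal is a proposition, and the small lemma that $\omega < n$ is impossible for finite $n$. I would derive that lemma from $n < \omega$ together with transitivity and irreflexivity of $<$.
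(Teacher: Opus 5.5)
Your proposal is correct and follows essentially the same route as the paper. Your monotonized sequence $f_n \in \{n, \omega + n\}$ is just a variant of the jumping sequence $\jump s$ (bisimilar to it; it jumps to $\omega + k$ at the first index $k$ with $s_k = 1$ rather than to $\omega$ at index $k+1$), your backward direction is exactly the paper's case split on $y$ with $\unjump f$ in the limit case, and the first equivalence, which the paper leaves unproved, you handle correctly by the same classification argument.
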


Before giving the proof, we recall the ``jumping sequence construction'' (as well as a straightforward extension) that allows switching between binary sequences and sequences of ordinals, as introduced in \cite[Thm 17]{kraus2023ordinals}:

\begin{definition}[\flink{Definition-7}]\label{def:jumping}
	If $s : \N \to \Bool$ is a binary sequence, the strictly increasing sequence $\jump s : \N \to \Brw$ is defined by
        \[
		\jump s_0 \defeq 0 \quad\text{and}\quad
		\jump s_{n+1} \defeq
		\begin{cases}
			\omega & \text{if $n$ is minimal such that $s_n = 1$} \\
			\jump s_n +1 & \text{else.}
		\end{cases}
        \]
	Given a sequence $t : \N \to \Brw$, the sequence $\unjump t : \N \to \Bool$ is defined by
	\begin{equation*}
		\unjump t_n \defeq
		\begin{cases}
			0 & \text{if $t_n$ is finite (a decidable property by \cref{lem:recall-basic-Brw-properties}\cref{item:finite-is-dec})} \\
			1 & \text{if $t_n$ is infinite.}
		\end{cases}
	\end{equation*}
\end{definition}

\begin{proof}[Proof of \cref{prop:DecOmegaEquivNew}]
	We only sketch the proof of the second property.

	``$\rightarrow$'':
	Assume that $P$ is semidecidable, witnessed by a sequence {$s : \N \to \Bool$}.
	We set $y \defeq \blim {\jump s}$.
	We claim $P \leftrightarrow \omega + 1 \leq y$: if $P$ holds, then there is $i$ with $s_i = 1$, thus $\jump {s_{i+1}} \geq \omega$ and therefore $\blim {\jump s} > \omega$.
	On the other hand, if $\omega + 1 \leq \blim {\jump s}$, then there exists $i$ such that ${\jump s_i} \geq \omega$ (\cref{lem:recall-basic-Brw-properties}\cref{lem:x-under-limit-means-x-under-element}), and we can find $1$ among the values $\{s_0, \ldots, s_{i-1}\}$.

	``$\leftarrow$'':
	Assume that $P$ is $(\omega + 1)$-decidable with witness $y : \Brw$.
	We want to construct $s : \N \to \Bool$ such that $\omega + 1 \leq y$ if and only if  $\exists i. s_i = 1$.
	We construct the sequence $s$ by case distinction on $y$:
	\begin{itemize}
		\item If $y \equiv \bzero$, we set $s$ to be constantly $0$.
		\item If $y \equiv \bsuc\,y'$, we check whether $y'$ is finite (\cref{lem:recall-basic-Brw-properties}\cref{item:finite-is-dec}); if so, we set $s$ to be constantly~$0$, and otherwise, constantly~$1$.
		\item If $y \equiv \blim f$, we set $s \defeq \unjump f$ which has the desired property due to \cref{lem:recall-basic-Brw-properties}\cref{lem:x-under-limit-means-x-under-element}. \qedhere
	\end{itemize}
\end{proof}

\begin{remark}[\flink{Remark-8}]\label{Remark-intuition}
	We note two points:
	\begin{romanenumerate}
		\item For sufficiently small ordinals $\alpha$, we can intuitively understand the statement ``$P$ is $\alpha$\nobreakdash-decidable'' as ``if evidence for~$P$ exists, then it can be discovered in less than $\alpha$ steps.''
		Thus, the second part of \cref{prop:DecOmegaEquivNew} means that a proposition is semidecidable if and only if its evidence, assuming it exists, can be discovered in at most $\omega$ steps.

		As one would intuitively expect, a proposition is decidable if and only if its evidence (or lack thereof) can be discovered in finitely many steps.
		Since our abstract framework does not consider concrete algorithms, it also cannot distinguish between numbers of steps that only differ by a finite amount (cf.~\cref{theorem:RUDec}), so the intuitive statement about decidable properties is captured by the first part of \cref{prop:DecOmegaEquivNew}.
		For the further development, we do not use the terminology involving steps and refer to the technically simpler concept of $\alpha$-decidability instead.

		\item As we will see in \cref{sec:reduction-to-limits}, it is sufficient to consider $\alpha$-decidability for non-successor ordinals $\alpha$; if we phrase \cref{prop:DecOmegaEquivNew} with this in mind, and extend it with the trivial observation that $0$-decidability simply means that a proposition holds, we get:
		\begin{align*}
			\text{$P$ holds} \; & \longleftrightarrow \; \text{$P$ is $(\omega \cdot 0)$-decidable}, \\
			\text{$P$ is decidable} \; & \longleftrightarrow \; \text{$P$ is $(\omega \cdot 1)$-decidable}, \\
			\text{$P$ is semidecidable} \; & \longleftrightarrow \; \text{$P$ is $(\omega\cdot 2)$-decidable}.
		\end{align*}
	\end{romanenumerate}
\end{remark}

\subsection{First Steps Towards a Hierarchy of Decidability}\label{subsec:first-steps-towards-hierarchy}

The intuition implied in the introduction and in \cref{Remark-intuition} is that the notion of $\alpha$-decidability gets weaker the larger $\alpha$ is, generalizing the fact that semidecidability is weaker than decidability.
One way to make this intuition precise would be to conjecture that, if $\alpha \leq \beta$, then any $\alpha$-decidable proposition is $\beta$-decidable.
Unfortunately, we do not know whether this statement holds in this generality.
We can prove the following first approximations:

\begin{proposition} [\flink{Proposition-9}]
For any $k: \N$, every $(\omega \cdot k)$-decidable proposition is $(\omega \cdot (k+1))$-decidable.
\end{proposition}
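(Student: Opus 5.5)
The plan is to shift the witness up by $\omega$. Let $P$ be $(\omega\cdot k)$-decidable. Since we are proving a proposition, we may take an actual $y : \Brw$ with $P \leftrightarrow \omega\cdot k \leq y$.

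The case $k = 0$ is separate and trivial. Then $\omega \cdot 0 = \bzero \leq y$ always holds, so $P$ holds. Taking $y' \defeq \omega$ gives $P \leftrightarrow \omega \leq y'$, since both sides are true.

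For $k \geq 1$, $\omega\cdot k$ is a limit, hence a non-successor. By \cref{lemma:RU23}\cref{item:up-2} we have $\omega\cdot k \leq y \leftrightarrow \omega \cdot k \leq \dwa y$. So we may replace $y$ by the non-successor $\lambda \defeq \dwa y$, and we then have $P \leftrightarrow \omega\cdot k \leq \lambda$. We set $y' \defeq \lambda + \omega$, and the whole task reduces to proving
\[ \omega\cdot k \leq \lambda \;\longleftrightarrow\; \omega\cdot k + \omega \leq \lambda + \omega . \]

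The forward direction is monotonicity of $-+\omega$ in its left argument. This is among the standard arithmetic properties of $\Brw$ from \cite{kraus2023ordinals}.

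The backward direction is the main obstacle, since we cannot cancel additions in general constructively. I would argue via limits, as follows.
\begin{itemize}
  \item By definition of addition, $\omega\cdot k + \omega = \blim (\lambda n.\, \omega\cdot k + n)$ and $\lambda + \omega = \blim(\lambda n.\, \lambda + n)$. Both sequences are strictly increasing, so these limits are legitimate.
  \item By \cref{lem:recall-basic-Brw-properties}\cref{lem:lim-under-lim-simulation}, the first sequence is simulated by the second. Taking index $0$, there exists $j$ with $\omega\cdot k \leq \lambda + j$.
  \item Since $\omega \cdot k$ is a non-successor, \cref{lemma:RU23}\cref{item:up-2} gives $\omega\cdot k \leq \dwa(\lambda + j)$.
  \item We have $\dwa(\lambda + j) = \lambda$, by induction on $j$ using $\dwa(\bsuc x) = \dwa x$ and $\dwa \lambda = \lambda$ for non-successors $\lambda$. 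Hence $\omega\cdot k \leq \lambda$.
\end{itemize}
We are proving a proposition, so eliminating the existential over $j$ is unproblematic.

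Combining the reduction with this equivalence gives $P \leftrightarrow \omega\cdot(k+1) \leq y'$, using $\omega\cdot(k+1) = \omega\cdot k + \omega$. Hence $P$ is $(\omega\cdot(k+1))$-decidable.
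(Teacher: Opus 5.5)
Your proof is correct and differs from the paper's only in the backward direction. The shared part is shifting the witness by $\omega$ and getting the forward direction from monotonicity of addition.

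For $\omega\cdot k + \omega \leq \lambda + \omega \to \omega\cdot k \leq \lambda$, the paper runs a fresh induction on $k$ and the witness $x$. You instead extract $\omega\cdot k \leq \lambda + j$ from the simulation property \cref{lem:recall-basic-Brw-properties}\cref{lem:lim-under-lim-simulation} at index $0$. You then discard the finite part using \cref{lemma:RU23}\cref{item:up-2} together with $\dwa(\lambda + j) = \lambda$.

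Your route reuses lemmas already established and avoids a new double induction. It also never uses the specific shape $\omega\cdot k$, only that it is a non-successor. So it in fact shows that every $\mu$-decidable proposition is $(\mu+\omega)$-decidable for an arbitrary non-successor $\mu$. The paper's induction, by contrast, is phrased specifically for ordinals of the form $\omega\cdot k$.

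Two parts of your argument can be simplified:
\begin{itemize}
\item The case $k = 0$ need not be treated separately, since $\bzero$ is itself a non-successor.
\item Rounding $y$ down beforehand is unnecessary. With the witness $y + \omega$, the same argument gives $\omega\cdot k \leq \dwa(y+j) = \dwa y$, and one more application of \cref{lemma:RU23}\cref{item:up-2} yields $\omega\cdot k \leq y$.
\end{itemize}
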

\begin{proof}
    Suppose $x: \Brw$ is evidence that $P$ is $(\omega  \cdot k)$-decidable. We want to prove that $x' = x + \omega$ is evidence that $P$ is $(\omega \cdot k + \omega)$-decidable. To do this,  it suffices to show that $(\omega  \cdot k) \leq x$ if and only if $(\omega  \cdot k + \omega) \leq x + \omega$. The ``only if'' case holds since addition is monotone, and the ``if'' case holds by induction on $k$ and $x$.
\end{proof}
\begin{proposition} [\flink{Proposition-10}]
For any $k: \N$, every $\omega ^ k$-decidable proposition is $\omega ^ {k+1}$-decidable.
\end{proposition}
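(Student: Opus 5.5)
Given evidence $x : \Brw$ with $P \leftrightarrow \omega^k \leq x$, the plan is to take $x' \defeq x \cdot \omega$ as the witness for $\omega^{k+1}$-decidability, mirroring the proof of the previous proposition where $x + \omega$ was used. Since $\omega^{k+1} = \omega^k \cdot \omega$, it then suffices to prove
\[
  \omega^k \leq x \;\longleftrightarrow\; \omega^k \cdot \omega \leq x \cdot \omega .
\]
The forward direction is immediate from monotonicity of multiplication in its left argument, which is among the standard arithmetic properties established in \cite{kraus2023ordinals}.

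For the backward direction, I first dispose of the case $x = \bzero$: then $x \cdot \omega = \bzero$ and $\omega^{k+1} \leq \bzero$ is impossible. Otherwise $x \cdot \omega = \blim(\lambda n.\, x \cdot n)$, since $x \cdot n$ is strictly increasing when $x \geq 1$. From $\omega^k \cdot \omega \leq \blim(\lambda n.\, x\cdot n)$ and $\omega^k\cdot\omega = \blim(\lambda n.\,\omega^k\cdot n)$, \cref{lem:recall-basic-Brw-properties}\cref{lem:lim-under-lim-simulation} gives a simulation. In particular, for each $m$ there is some $n$ with $\omega^k \cdot m \leq x \cdot n$.

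The remaining task is to derive $\omega^k \leq x$ from these inequalities. This is the main obstacle, because it is essentially a cancellation property for multiplication, and $\leq$ on $\Brw$ is not decidable. I would prove it by induction on $x$, or equivalently by showing the contrapositive-free statement
\[
  \text{if } \omega^k \cdot m \leq x \cdot n \text{ for all } m \text{ (with } n \text{ depending on } m\text{), then } \omega^k \leq x,
\]
treating $k = 0$ separately, where the claim is just $1 \leq x$, already known since $x \neq \bzero$. For $k \geq 1$, $\omega^k$ is a limit. Write $x = \lambda + j$ using \cref{item:spit-ordinal}. Then $x \cdot n \leq \lambda \cdot n + \omega$ roughly, so $x \cdot n$ lies below $\upa(\lambda)\cdot(n+1)$.

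I would then reduce to the case where $x$ is a non-successor, using \cref{lemma:RU23} to replace $x$ by $\dwa x$ or $\upa x$ where $\omega^k$ cannot tell them apart. For a limit $x = \blim f$, I would argue by induction on $k$ and on $x$, using \cref{lem:recall-basic-Brw-properties}\cref{lem:x-under-limit-means-x-under-element} to push the inequalities $\omega^k \cdot m \leq x\cdot n$ down to the terms $f_i$. I expect this cancellation lemma, roughly ``$\omega^k \cdot \omega \leq x \cdot \omega$ implies $\omega^k \leq x$'', to be the technical heart of the proof, requiring a careful joint induction with a strengthened hypothesis.
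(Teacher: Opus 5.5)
Your witness, the forward direction via monotonicity, and the reduction of $\omega^k\cdot\omega\le x\cdot\omega$ to the simulation $\forall m.\,\exists n.\,\omega^k\cdot m\le x\cdot n$ are all fine. The backward direction, however, is never proved. It is the only non-trivial part, and you defer it to a ``careful joint induction with a strengthened hypothesis'' without stating that hypothesis, so there is a genuine gap.

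The statement you are left with is harder than it looks. It is not an instance of any cancellation law, because right multiplication by $\omega$ does not reflect $\le$: $(\omega+1)\cdot\omega=\omega\cdot\omega$ although $\omega+1\not\le\omega$. A proof therefore has to exploit the special shape $\omega^k$ (classically, additive principality). Your sketched induction does not reach this directly. Pushing $\omega^k\cdot m\le x\cdot n$ down to the terms of $x=\blim f$ via \cref{lem:recall-basic-Brw-properties}\cref{lem:x-under-limit-means-x-under-element} only gives bounds against $(\blim f)\cdot(n-1)+f_i$, which still contain $x$. This is because right multiplication by a finite $n$ is not continuous; for example, $\omega\cdot 2\neq\blim(\lambda i.\,i\cdot 2)$. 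Nor can you split $x\cdot(n-1)+x$ as in the classical argument: asserting $\omega^2\le a+b\to(\omega^2\le a\lor\omega^2\le b)$ for all $a,b$ already implies $\LPO$.

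The paper settles this direction in one line by citing left cancellability of multiplication by $\omega$: $\omega\cdot x\le\omega\cdot y\to x\le y$ for all $x,y$. That law applies verbatim if you take $\omega\cdot x$ rather than $x\cdot\omega$ as the witness. (The paper's text writes $x\cdot\omega$, but the law it cites concerns $\omega\cdot-$.) Since $\omega^{k+1}=\omega\cdot\omega^k$, you get $\omega^k\le x\leftrightarrow\omega\cdot\omega^k\le\omega\cdot x$: one direction from monotonicity of $\omega\cdot-$, the other from left cancellability.

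Left cancellability is a routine induction on $x$ with a case split on $y$. Every $\omega\cdot y$ is zero or a limit, so \cref{lem:recall-basic-Brw-properties}\cref{lem:x-under-limit-means-x-under-element}--\cref{lem:lim-under-lim-simulation} let you strip the trailing $+\,\omega$ from $\omega\cdot\bsuc x'=\omega\cdot x'+\omega$ and apply the induction hypothesis to $x'$.
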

\begin{proof}
    We prove this by showing that for any $k: \N$ and $x : \Brw$, $\omega^k \leq x$ if and only if $\omega^{k+1} \leq x \cdot \omega$. The ``only if'' case holds because multiplication on $\Brw$ is monotone, and the ``if'' case holds since multiplication by $\omega$ is left cancellable: for each pair of ordinals $x , y : \Brw$, if $\omega \cdot x \leq \omega \cdot y$ then $x \leq y$.
\end{proof}
Obtaining more general upward closure results is ongoing work. We are optimistic that it is true if $\alpha$ and $\beta$ are in Cantor Normal Form below $\epsilon_0$:
\begin{conjecture}[\flink{Conjecture-11}]
	For $\alpha,\beta : \Brw$ in Cantor Normal Form below $\epsilon_0$ such that $\alpha \leq \beta$, every $\alpha$-decidable proposition is $\beta$-decidable.
\end{conjecture}

\section{Reduction to Limit Ordinals}\label{sec:reduction-to-limits}

As we do not use an explicit notion of algorithm but work with an abstract definition of $\alpha$-decidability,
and the complexity of a function is not an internal property of type theory, it is unsurprising that the framework cannot separate $1$-decidable from $n$-decidable propositions, for any positive natural number $n$.
We generalize this observation and show that we cannot distinguish between ${(\alpha+1)}$\nobreakdash-decidability and $(\alpha + \omega)$-decidability; in other words, we do not lose anything if we restrict ourselves to $\lambda$-decidability for limit ordinals $\lambda$.

\begin{lemma}[\flink{Lemma-12}]\label{lemma:DecSucA}
	Let $P$ be a proposition and $\alpha$ be an ordinal. Then, $P$ is $(\alpha + 1)$-decidable if and only if it is $(\alpha + 2)$-decidable.
\end{lemma}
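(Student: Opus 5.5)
The plan is to prove both directions by transforming the witness ordinal. The transformations are $y \mapsto \bsuc y$ for one direction and a ``predecessor-like'' operation for the other. Throughout I will use the characterisation $\alpha + 1 \leq y \leftrightarrow \alpha < y$ and $\alpha + 2 \leq y \leftrightarrow \bsuc \alpha < y$.

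\emph{Forward direction.} Suppose $y$ witnesses $(\alpha+1)$-decidability of $P$, i.e.\ $P \leftrightarrow \alpha + 1 \leq y$. I take $y' \defeq \bsuc y$. By \cref{lem:recall-basic-Brw-properties}\cref{lem:suc-mono-inv}, $\bsuc(\alpha+1) \leq \bsuc y$ holds if and only if $\alpha + 1 \leq y$. Hence $P \leftrightarrow \alpha + 2 \leq y'$, and since the goal is a proposition, we may assume the witness $y$ from the truncation.

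\emph{Backward direction.} Suppose $P \leftrightarrow \alpha + 2 \leq y$. I will produce $y'$ with $\alpha + 1 \leq y' \leftrightarrow \alpha + 2 \leq y$, by case analysis on $y$ using the classification \cref{lem:recall-basic-Brw-properties}\cref{item:classification}. Since we are proving a proposition, we may case split on the truncated classification.
\begin{itemize}
\item If $y = \bzero$, take $y' \defeq \bzero$. Both sides are false, since successors are not below zero.
\item If $y = \bsuc z$, take $y' \defeq z$. Then \cref{lem:recall-basic-Brw-properties}\cref{lem:suc-mono-inv} gives $\bsuc(\alpha+1) \leq \bsuc z \leftrightarrow \alpha + 1 \leq z$.
\item If $y = \blim f$, take $y' \defeq y$. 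We must show $\alpha + 1 \leq \blim f \leftrightarrow \alpha + 2 \leq \blim f$. The direction from right to left follows by transitivity, using $\alpha + 1 \leq \alpha + 2$. For the other direction, $\alpha < \blim f$ implies $\bsuc \alpha < \blim f$ by \cref{lem:recall-basic-Brw-properties}\cref{lem:lim-under-suc}, which is exactly $\alpha + 2 \leq \blim f$.
\end{itemize}

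The main obstacle is that the classification only provides a truncated existence of $f$ in the limit case. Since the target $\gendec{(\alpha+1)}{P}$ is itself a proposition, this is harmless: we eliminate the truncation directly. Alternatively, one could define $y'$ by recursion on $\Brw$, which would require checking compatibility with $\bbisim$. Beyond this, the only routine check is that $\alpha + 1$ is definitionally $\bsuc \alpha$, so that the lemmas on $\bsuc$ apply verbatim.
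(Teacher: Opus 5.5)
Your proof is correct, and your backward direction is exactly the paper's argument: case analysis on the witness, taking the predecessor in the successor case via \cref{lem:recall-basic-Brw-properties}\cref{lem:suc-mono-inv} and keeping the witness unchanged in the limit case via \cref{lem:recall-basic-Brw-properties}\cref{lem:lim-under-suc}. For the forward direction the paper only says it follows ``with the same strategy'' (another case split on the witness), whereas your uniform choice of $\bsuc y$ together with \cref{lem:recall-basic-Brw-properties}\cref{lem:suc-mono-inv} needs no case analysis and is slightly cleaner.
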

\begin{proof}
	By definition, the goal is to prove:
	\begin{equation*}
		\Big( \exists u . (P \leftrightarrow (\alpha+1) \leq u) \Big) \longleftrightarrow
		\Big( \exists v . (P \leftrightarrow (\alpha+2) \leq v) \Big).
	\end{equation*}
	The implications in both directions can be shown with the same strategy.
	We only sketch the latter.
	As we are proving a proposition, we can assume that we are given an explicit witness $v$ that $P$ is $(\alpha +2)$-decidable, and it suffices to consider the cases that $v$ is zero, a~successor, or a limit.
	The first is trivial; in the second, if $v = v' + 1$, we choose $u$ to be $v'$ and have $\alpha + 2 \leq v \longleftrightarrow \alpha + 1 \leq v'$ by \cref{lem:recall-basic-Brw-properties}\cref{lem:suc-mono-inv}. If $v$ is a limit, we choose $u$ to be $v$ and apply \cref{lem:recall-basic-Brw-properties}\cref{lem:lim-under-suc}.
\end{proof}

By a straightforward induction, we have:

\begin{corollary}[\flink{Corollary-13}]\label{theorem:SucOfSucIsOrdDecEquiv}
	For every ordinal $\alpha$ and natural number $n$, a proposition $P$ is $(\alpha+1)$-decidable if and only if it is $(\alpha+1+n)$-decidable. \qed
\end{corollary}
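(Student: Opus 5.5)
We will prove the statement by induction on $n$, using \cref{lemma:DecSucA} for the inductive step. For $n = 0$ there is nothing to show, since $\alpha + 1 + 0$ is definitionally (or at least provably) equal to $\alpha + 1$. The substantive work is the inductive step, and the only subtlety there is rearranging the ordinal arithmetic so that \cref{lemma:DecSucA} applies literally.

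For the inductive step, assume that $P$ is $(\alpha+1)$-decidable if and only if it is $(\alpha+1+n)$-decidable. We want the same equivalence with $\alpha+1+(n+1)$ in place of $\alpha+1+n$. The plan is to chain three equivalences:
\[
  \gendec{(\alpha+1)}{P} \;\longleftrightarrow\; \gendec{(\alpha+1+n)}{P} \;\longleftrightarrow\; \gendec{(\alpha+1+n+1)}{P}.
\]
The first equivalence is the induction hypothesis. For the second, set $\beta \defeq \alpha + n$ and apply \cref{lemma:DecSucA} to $\beta$; this gives that $P$ is $(\beta+1)$-decidable if and only if it is $(\beta+2)$-decidable.

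To match the lemma with the chain, we need the identities $\alpha + 1 + n = (\alpha+n)+1$ and $\alpha + 1 + (n+1) = (\alpha + n) + 2$ in $\Brw$. For finite $n$, addition on $\Brw$ is defined by recursion on the right argument with $x + \bsuc y = \bsuc(x+y)$. So both identities follow by an inner induction on $n$, using associativity of $+$ for finite right summands together with $1 + n = n + 1$ on naturals. We then transport $\alpha$-decidability along these equalities of ordinals.

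The main obstacle is this bookkeeping step: making sure the ordinal arithmetic identities hold as \emph{equalities} in $\Brw$, not merely as mutual inequalities. We need equalities because we transport the predicate $\gendec{-}{P}$ along them. We expect this to be routine, since the right summand is always a finite successor and no limits are involved. If an identity turns out to hold only up to $\leq$ in both directions, antisymmetry of $\leq$ (from extensionality, \cref{lem:recall-basic-Brw-properties}) upgrades it to an equality.
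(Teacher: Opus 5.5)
Your proposal is correct and matches the paper, whose proof is just the ``straightforward induction'' on $n$, applying \cref{lemma:DecSucA} to $\alpha + n$ at each step. The identities $\alpha+1+n = (\alpha+n)+1$ and $\alpha+1+(n+1) = (\alpha+n)+2$ follow by a routine induction on $n$ because addition recurses on its right argument, so your antisymmetry fallback is never needed.
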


\begin{remark}[\flink{Remark-14}]\label{remark: SuccChangeOrdDecInLim}
	Note that \cref{lemma:DecSucA} cannot be formulated with $\alpha$ instead of $\alpha+1$ because, if $\alpha$ is a limit ordinal (or zero), the statement will not hold in general. For example, we have seen that $(\omega+1)$-decidability is just semidecidability, while $\omega$-decidability is decidability.
\end{remark}

To formalize the intuition that the hierarchy of ordinal decidability is indexed by limit ordinals,
our next goal is to show that \cref{theorem:SucOfSucIsOrdDecEquiv} holds even if $n$ is replaced by $\omega$.

\begin{theorem}[\flink{Theorem-15}]\label{theorem:RUDec}
	For any ordinal $\beta$, a proposition $P$ is $\beta$-decidable if and only if it is $\upa \beta$-decidable.
\end{theorem}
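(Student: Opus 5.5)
Since $\beta$-decidability is a proposition and we want a statement about an arbitrary $\beta$, we do induction/case analysis on $\beta$ via the classification \cref{lem:recall-basic-Brw-properties}\cref{item:classification}. If $\beta$ is zero or a limit, then $\upa\beta \equiv \beta$ and there is nothing to show. So the only real case is $\beta = \alpha + 1$ with $\upa\beta = \alpha + \omega$. Here \cref{theorem:SucOfSucIsOrdDecEquiv} already lets us pass from $\alpha+1$ to any $\alpha+1+n$. What remains is the step from all finite $n$ to $\omega$. Concretely, we show that $P$ is $(\alpha+1)$-decidable if and only if it is $(\alpha+\omega)$-decidable. Both directions assume an explicit witness, since the goal is a proposition.

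($\leftarrow$) Suppose $P \leftrightarrow \alpha+\omega \leq v$. Put $u \defeq \dwa v$. Since $\alpha+\omega$ is a non-successor, \cref{lemma:RU23}\cref{item:up-2} gives $\alpha+\omega \leq v \leftrightarrow \alpha+\omega \leq \dwa v$. We must then show $\alpha+\omega \leq \dwa v \leftrightarrow \alpha+1 \leq \dwa v$. The forward implication is monotonicity. For the backward one, note that $\alpha + 1 \leq \lambda$ with $\lambda$ a non-successor forces $\alpha+\omega \leq \lambda$. This is exactly the content of \cref{lemma:RU23}\cref{item:up-1} applied to $\alpha+1$: $\alpha+1 \leq \lambda \leftrightarrow \upa(\alpha+1) \leq \lambda$, and $\upa(\alpha+1) = \alpha+\omega$.

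($\rightarrow$) Suppose $P \leftrightarrow \alpha+1 \leq u$. Take $v \defeq \upa u$, or $u+\omega$ if that is more convenient. We want $\alpha+1 \leq u \leftrightarrow \alpha+\omega \leq \upa u$. This is the subtler direction, because $\alpha$ is an arbitrary ordinal rather than a non-successor. My plan is to first decompose $\alpha = \lambda + n$ by \cref{item:spit-ordinal}. Then $\alpha+\omega = \lambda+\omega$, and by \cref{theorem:SucOfSucIsOrdDecEquiv} $(\alpha+1)$-decidability is equivalent to $(\lambda+1)$-decidability. So we may assume $\alpha = \lambda$ is a non-successor. Now \cref{lemma:RU23}\cref{item:up-3} gives $\lambda+1 \leq u \leftrightarrow \lambda+1 \leq \upa u$, and the argument of the previous paragraph turns $\lambda+1 \leq \upa u$ into $\lambda+\omega \leq \upa u$, since $\upa u$ is a non-successor. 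Hence $v \defeq \upa u$ works. The same reduction to $\alpha = \lambda$ also simplifies the ($\leftarrow$) direction.

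The main obstacle I expect is the fact that $\alpha + 1 \leq \mu$ implies $\alpha + \omega \leq \mu$ for a non-successor $\mu$, together with the identification $\upa(\lambda+1) = \lambda + \omega$. Both are covered by \cref{lemma:RU23}\cref{item:up-1} and the definition of $\upa$, but we have to check that addition with finite $n$ interacts correctly with $\dwa$ and $\upa$, for example that $\upa(\lambda+n+1) = \lambda+\omega$. I would prove that auxiliary equation by induction on $n$, using $(\lambda+n)+\omega = \lambda+\omega$, which follows from bisimilarity of the sequences $\lambda+n+k$ and $\lambda+k$.
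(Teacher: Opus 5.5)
Your proof is correct and is essentially the paper's. The paper also reduces the successor case to $\beta = \lambda + 1$ with $\lambda$ a non-successor, via \cref{item:spit-ordinal} and \cref{theorem:SucOfSucIsOrdDecEquiv}, and takes $v \defeq \upa u$ using \cref{lemma:RU23}\cref{item:up-3} and \cref{item:up-1}; you are right that this reduction is genuinely needed in that direction.

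For the converse, your witness $u \defeq \dwa v$, justified by \cref{lemma:RU23}\cref{item:up-2} and \cref{item:up-1}, is the right one. The paper's text instead chooses $u \defeq \upa v$ using the step $\lambda + \omega \leq v \leftrightarrow \lambda + \omega \leq \upa v$, which is a slip: it fails for $\lambda = \bzero$ and $v = 5$, where $\upa v = \omega$. Rounding down, as you do (and as in the proof of \cref{thm:limit-decidable-suffices}), is what that direction actually requires.
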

\begin{proof}
	If $\beta$ is a non-successor, there is nothing to do.
	Using \cref{item:spit-ordinal}, in the final case $\beta = \lambda + n$ for a limit (or zero) $\lambda$ and a finite positive $n$, and by \cref{theorem:SucOfSucIsOrdDecEquiv}, we may assume $n = 1$.
	Therefore, our goal is to show:
	\begin{equation*}
		\Big( \exists u. \,\,P\leftrightarrow\lambda+1 \leq u \Big) \longleftrightarrow
		\Big( \exists v. \,\,P\leftrightarrow \lambda + \omega \leq v \Big)
	\end{equation*}
	From left to right, let us assume we are given $u$; using \cref{lemma:RU23}, we have ${\lambda + 1 \leq u} \longleftrightarrow {\lambda + 1 \leq \upa u} \longleftrightarrow {\upa (\lambda + 1) \leq \upa u} \longleftrightarrow {\lambda + \omega \leq \upa u}$, so we choose $v \defeq \upa u$.
	From right to left, assume we are given $v$; we have ${\lambda + \omega \leq v} \longleftrightarrow {\lambda + \omega \leq \upa v} \longleftrightarrow {\lambda + 1 \leq \upa v}$, so we (again) choose $u \defeq \upa v$.
\end{proof}

As it turns out, we can phrase ordinal decidability purely in terms of non-successors.
\begin{definition}[\flink{Definition-16} Limit-decidability]
	For a proposition $P$ and $\lambda : \Brwzl$, we say that $P$ is $\lambda$-limit-decidable if
        $\exists y : \Brwzl. (P \leftrightarrow \lambda \leq y)$.
\end{definition}

\begin{theorem}[\flink{Theorem-17}]\label{thm:limit-decidable-suffices}
	For a proposition $P$ and an ordinal $\beta$, we have that $P$ is $\beta$-decidable if and only if it is $\upa \beta$-limit-decidable.
\end{theorem}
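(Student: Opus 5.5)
By \cref{theorem:RUDec}, $P$ is $\beta$-decidable if and only if it is $\upa\beta$-decidable. So it suffices to show that for a non-successor $\lambda : \Brwzl$, $\lambda$-decidability and $\lambda$-limit-decidability coincide; we then apply this with $\lambda \defeq \upa\beta$. Both notions are propositions, so in each direction we may assume an explicit witness.

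The direction from $\lambda$-limit-decidability to $\lambda$-decidability is immediate. A witness $y : \Brwzl$ with $P \leftrightarrow \lambda \leq y$ is in particular an element of $\Brw$, so it also witnesses $\lambda$-decidability.

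For the other direction, suppose we are given $y : \Brw$ with $P \leftrightarrow \lambda \leq y$. We have to replace $y$ by a non-successor without changing the truth value of $\lambda \leq y$. \cref{lemma:RU23}\cref{item:up-2} does exactly this: since $\lambda$ is a non-successor, $\lambda \leq y \longleftrightarrow \lambda \leq \dwa y$. So we take $\dwa y : \Brwzl$ as the witness, and $P \leftrightarrow \lambda \leq \dwa y$ follows by composing the two equivalences.

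I do not expect a genuine obstacle here; the work lies in \cref{theorem:RUDec} and \cref{lemma:RU23}. One point needs care: \cref{lemma:RU23}\cref{item:up-2} must be applied with the non-successor $\upa\beta$ on the left of $\leq$, not with $\beta$ itself. This is why we first pass from $\beta$ to $\upa\beta$ via \cref{theorem:RUDec} before rounding the witness down. We use $\dwa$ rather than $\upa$ for the witness because item \cref{item:up-2} is the clause that governs $\lambda \leq {-}$.
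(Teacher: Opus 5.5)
Your proposal is correct and follows the paper's own argument. It reduces to $\upa\beta$-decidability via \cref{theorem:RUDec}, treats the direction from limit-decidability as trivial, and in the other direction rounds the witness down to $\dwa y$ using \cref{lemma:RU23}\cref{item:up-2}, applied with the non-successor $\upa\beta$ on the left.
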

\begin{proof}
	Use \cref{theorem:RUDec}, then, in the non-trivial direction, round down the witness and use \cref{lemma:RU23}\cref{item:up-2}.
\end{proof}

\section{Closure under Binary Conjunctions}\label{sec:BinConj}

It is well-known that, if $P$ and $Q$ are decidable (or semidecidable), then so is $P \land Q$.
Spelling out the generalization of this statement to $\alpha$-decidability, it says that
if there exists $x, y : \Brw$ such that $P \leftrightarrow \alpha \leq x$ and $Q \leftrightarrow \alpha \leq y$, then there exists $z$ such that ${P \land Q} \leftrightarrow \alpha \leq z$.
This is in particular the case if $z$ is the meet of $x$ and $y$ when viewing $(\Brw,\leq)$ as a poset.
To avoid confusion with the poset of propositions (or the category of sets), and to support the intuition of ordinals as numbers, we refer to the meet of ordinals as their \emph{minimum}:

\begin{definition}[\flink{Definition-18} Minimum]
	We say that $\mu : \Brw$ is the \emph{minimum} of $\alpha$, $\beta : \Brw$ if the property
	\begin{equation}\label{eq:minimum-property}
		\forall \gamma. (\alpha \geq \gamma \land \beta \geq \gamma) \leftrightarrow \mu \geq \gamma
	\end{equation}
	is satisfied.
\end{definition}
If the minimum of $\alpha$ and $\beta$ exists, then it is necessarily unique;
thus, stating that all minima exist is equivalent to giving a function that calculates them.
The current section is structured as follows:
In \cref{subsec:minima-non-constructive}, we show that $\Brw$ cannot constructively have binary minima in general; in \cref{subsec:limmin-constructive}, we show that binary minima exist for non-successors; and consequently, in \cref{subsec:conjunections}, 
that $\alpha$-decidable propositions are closed under binary conjunctions.

\subsection{Non-Constructability of Binary Minima}\label{subsec:minima-non-constructive}

We cannot expect to be able to constructively calculate general binary minima:
if we could calculate the minimum of $\beta+1$ and $\blim f$, then we could check whether this minimum is a successor or a limit, and thus decide if $\beta$ is smaller than $\blim f$ or not --- but we know that $\beta < \blim f$ is not decidable in general.
The precise argument that no such minimum function can be constructed is that it would imply $\LPO$, the \emph{limited principle of omniscience}.
Recall that $\LPO$ states that the existential quantification of any decidable proposition is again decidable, or, in other words, every semidecidable proposition is decidable: in any binary sequence, there either exists a $1$, or the sequence is constantly $0$.
This is a constructive taboo which is not derivable in type theory, so anything implying it cannot be derivable either.

\begin{proposition}[\flink{Proposition-19}]\label{prop:appendix-minimum-nonconstructive}
	Assume that, for all ordinals $\alpha$ and $\beta$, there exists a minimum.
	Then, $\LPO$ holds.
\end{proposition}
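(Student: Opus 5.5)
Given a binary sequence $s : \N \to \Bool$, I will show that $(\exists i.\, s_i = 1) \lor \neg(\exists i.\, s_i = 1)$. The plan follows the informal argument above. Take $\beta \defeq \omega$, so that $\beta + 1 = \omega + 1$, and take the limit $\blim \jump s$ from \cref{def:jumping}. By the proof of \cref{prop:DecOmegaEquivNew}, we have $\omega + 1 \leq \blim \jump s$ if and only if $\exists i.\, s_i = 1$. Since $\blim \jump s \geq \omega$ always holds (every limit is infinite), the question ``is $\omega < \blim \jump s$?'' is exactly the semidecidable proposition we want to decide.

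Let $\mu$ be the assumed minimum of $\omega + 1$ and $\blim \jump s$. Since we are proving a proposition (decidability of a proposition is itself a proposition), the case distinction of \cref{lem:recall-basic-Brw-properties}\cref{item:classification} on $\mu$ is available: $\mu$ is zero, a successor, or a limit. The zero case is impossible, because $\omega \leq \omega + 1$ and $\omega \leq \blim \jump s$ give $\omega \leq \mu$ by \eqref{eq:minimum-property}. The other two cases go as follows.
\begin{itemize}
\item If $\mu = \bsuc \mu'$, I will show that $\exists i.\, s_i = 1$. Taking $\gamma \defeq \mu$ in \eqref{eq:minimum-property} gives $\mu \leq \blim \jump s$, so $\mu' < \blim \jump s$. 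We also have $\omega \leq \mu'$: from $\omega \leq \bsuc \mu'$, \cref{lem:recall-basic-Brw-properties}\cref{lem:lim-under-suc} gives $\omega \leq \mu'$, since $\omega$ is a limit. Then the second half of \cref{lem:recall-basic-Brw-properties}\cref{lem:lim-under-suc}, or simply transitivity, gives $\omega + 1 \leq \blim \jump s$, and hence $\exists i.\, s_i = 1$.
\item If $\mu = \blim g$, I will show that $\neg\exists i.\, s_i = 1$. Suppose $s_i = 1$ for some $i$. Then $\omega + 1 \leq \blim \jump s$, and trivially $\omega + 1 \leq \omega + 1$, so \eqref{eq:minimum-property} gives $\omega + 1 \leq \blim g$. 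Also $\blim g \leq \omega + 1$, by taking $\gamma \defeq \mu$. By \cref{lem:recall-basic-Brw-properties}\cref{lem:lim-under-suc}, the latter gives $\blim g \leq \omega$, so $\omega + 1 \leq \omega$, contradicting wellfoundedness (irreflexivity of $<$).
\end{itemize}

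The main obstacle is the bookkeeping of the case split, which must be done inside the truncation, together with the fact that a limit bounded by a successor lies below its predecessor. Both of these are supplied by \cref{lem:recall-basic-Brw-properties}. Everything else is a direct instantiation of the universal property \eqref{eq:minimum-property}, using $\gamma = \mu$, $\gamma = \omega$ and $\gamma = \omega + 1$.
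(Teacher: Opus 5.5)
Your proof is correct and follows the paper's argument. You use the same pair $\blim \jump s$ and $\omega + 1$, the same zero/successor/limit case split on $\mu$ under the truncation, and the same instances of \eqref{eq:minimum-property} at $\gamma = \mu$ and $\gamma = \omega + 1$. The deviations are minor. In the successor case you use $\gamma = \omega$ to get $\omega \le \mu'$ and conclude $\exists i.\, s_i = 1$ outright, whereas the paper reduces to the decidable question $\mu' \geq \omega$. In the limit case you reach $\omega + 1 \le \omega$ directly, without first showing $\mu = \omega$.
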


\begin{proof}
	Let $s$ be a binary sequence; we aim to show that $s$ is either constantly $0$ or it contains the digit $1$.

	We choose $\alpha \defeq \blim {\jump s}$ and $\beta \defeq \omega + 1$, and assume that $\mu$ with the property \eqref{eq:minimum-property} exists.
	If $\mu = 0$, then $\gamma = 1$ gives an immediate contradiction.
	If $\mu = \mu' + 1$, we consider the case $\gamma \defeq \omega + 1$ and get that $\blim {\jump s} \geq \omega + 1$ (which is equivalent to $s$ containing the digit $1$) is logically equivalent to $\mu' \geq \omega$, which is decidable.
	Finally, if $\mu = \blim t$ for some sequence~$t$, setting $\gamma$ to $\blim t$ implies $\omega + 1 \geq \blim t$ and thus $\mu = \omega$. Therefore, the case $\gamma \defeq \omega + 1$ shows that $\blim {\jump s} \geq \omega + 1$ leads to a contradiction, which forces $s$ to be constantly $0$.
\end{proof}

\subsection{Construction of Binary Minima for Limits}\label{subsec:limmin-constructive}

While the above \cref{prop:appendix-minimum-nonconstructive}
seems like bad news for the conjunction of $\alpha$-decidable propositions,
what helps us is that general minima are not required.
Using \cref{theorem:RUDec} and \cref{thm:limit-decidable-suffices}, it suffices to define the minimum for zero and limits.

\begin{theorem}[\flink{Theorem-20}]\label{thm:limmin-exists}
	We have binary minima for non-successors: For $\alpha, \beta : \Brwzl$, there is $\limMiF \alpha \beta: \Brwzl$ satisfying \eqref{eq:minimum-property}.
\end{theorem}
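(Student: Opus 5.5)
We construct $\limMiF \alpha \beta$ by case analysis on the two non-successors, using \cref{lem:recall-basic-Brw-properties}\cref{item:classification}. Since the target must be a set-valued function and not merely a proposition, we define $\limMin$ by recursion (double induction) on $\Brw$, restricted to zero and limits. The value on the path constructor $\bbisim$ is then justified afterwards, once the minimum property is proved: the minimum is unique, so bisimilar inputs give equal outputs. Alternatively, we first prove that for fixed $\alpha,\beta$ the type of minima is a proposition, which it is by antisymmetry of $\leq$. It then suffices to show that a minimum merely exists, and we may eliminate freely into this proposition.

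The cases involving zero are trivial: $\limMiF \bzero \beta \defeq \bzero$ and $\limMiF \alpha \bzero \defeq \bzero$. Property \eqref{eq:minimum-property} holds because $\gamma \leq \bzero$ forces $\gamma = \bzero$.

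The main case is $\alpha = \blim f$ and $\beta = \blim g$. The natural candidate is a pointwise minimum $\lambda n.\, \min(f_n, g_n)$, but pointwise minima of arbitrary ordinals do not exist constructively, by \cref{prop:appendix-minimum-nonconstructive}. We therefore round first. Each $\upa f_n$ and $\upa g_n$ is a non-successor, so by recursion we may form $h_n \defeq \limMiF {\upa f_n} {\upa g_n}$, recursing on structurally smaller limits. This requires well-founded recursion on pairs, or a simultaneous induction over $\Brwzl \times \Brwzl$ with respect to $<$. We then need $h$ to be strictly increasing. It is at least monotone, because $\upa$ preserves $\leq$ by \cref{cor:y<=xThen-RDy<=RDx} and $\limMin$ is monotone, which we prove alongside. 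To force strict increase, we set $h'_n \defeq h_n + n$ or a similar padding, and then set $\limMiF{\blim f}{\blim g} \defeq \blim h'$. Some care is needed here, since adding $n$ could exceed the true minimum. Because $h_n$ is a non-successor lying below $f_{n+1}$ and $g_{n+1}$, however, $h_n + n$ stays below $\blim f$ and $\blim g$.

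Verifying \eqref{eq:minimum-property} goes as follows. For the direction from $\limMin \geq \gamma$, it suffices to prove $\blim h' \leq \blim f$ and $\blim h' \leq \blim g$, using $\lcocone$ and $\llimiting$. For the converse, suppose $\gamma \leq \blim f$ and $\gamma \leq \blim g$, and split $\gamma$ by classification. If $\gamma$ is a successor $\delta+1$, then $\delta < \blim f$ and $\delta < \blim g$, so by \cref{lem:recall-basic-Brw-properties}\cref{lem:x-under-limit-means-x-under-element} there are $n, m$ with $\delta < f_n$ and $\delta < g_m$. Taking $k = \max(n,m)$ and using \cref{lemma:RU23}, we get $\delta + 1 \leq \upa f_k$ and $\delta+1 \leq \upa g_k$. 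By the induction hypothesis, $\delta + 1 \leq h_k \leq \blim h'$. If $\gamma = \blim e$, we use \cref{lem:recall-basic-Brw-properties}\cref{lem:lim-under-lim-simulation}: each $e_i$ lies below some $f_n$ and some $g_m$, and the argument reduces to the successor case applied to $e_i + 1 \leq e_{i+1}$.

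I expect the main obstacle to be making the recursion well-defined on the quotient, namely respecting $\bbisim$ while simultaneously producing a strictly increasing sequence. I would first try the ``mere existence plus uniqueness'' route, so that the $\bbisim$ case becomes automatic. The recursion itself would be carried out by well-founded induction on $\alpha$, using the fact that $\upa f_n < \blim f$.
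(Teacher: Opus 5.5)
There is a genuine gap: you round in the wrong direction. Your recursion relies on $\upa f_n < \blim f$, and this is false. Take $f = g = \lambda n.\, n$, so $\blim f = \omega$. Then $\upa f_n = (n-1) + \omega = \omega$ for every $n \geq 1$, so computing $\limMiF{\omega}{\omega}$ requires $h_1 = \limMiF{\omega}{\omega}$ again. In fact no value satisfies your defining equation. Writing $X$ for $\limMiF{\omega}{\omega}$, the equation says $X = \blim(\lambda n.\, h_n + n)$ with $h_1 = X$, so $X + 1 \leq X$, contradicting wellfoundedness.

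The same example refutes your claim that $h_n$ lies below $f_{n+1}$: the true minimum of $\upa f_1$ and $\upa g_1$ is $\omega$, while $f_2 = 2$. It also shows that no padding can rescue the construction. Even if each $h_n$ is the genuine minimum of $\upa f_n$ and $\upa g_n$, $h_n$ may already equal the minimum of $\blim f$ and $\blim g$. Then any strictly increasing $h'$ with $h' \geq h$ has a limit that is too large; here $\blim h' = \omega \cdot 2 \not\leq \omega$. So the direction from $\limMiF{\alpha}{\beta} \geq \gamma$ fails. The ``mere existence plus uniqueness'' device only deals with $\bbisim$, so it does not address either problem.

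What works, and what the paper does, is to round \emph{down} instead.
\begin{itemize}
\item Define $\limMin$ on all of $\Brw$, discarding successors: $\limMiF{(\bsuc\alpha)}{\beta} = \limMiF{\alpha}{\beta}$ and $\limMiF{(\blim f)}{(\bsuc\beta)} = \limMiF{(\blim f)}{\beta}$.
\item Recurse on $f_n$ and $g_n$ themselves, which are structurally smaller: $\limMiF{(\blim f)}{(\blim g)} = \blim(\lambda n.\, \limMiF{f_n}{g_n} + n)$.
\item Since $\limMiF{f_n}{g_n} \leq f_n$, the padded term stays below $f_n + n \leq f_{n+n}$, so nothing overshoots.
\item The padding recovers the finite parts lost by discarding successors, via $f_n = \dwa f_n + k \leq \dwa f_{n+k} + (n+k)$. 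This gives $\limMiF{\alpha}{\alpha} = \dwa\alpha$.
\end{itemize}
The minimum property then needs no case split on $\gamma$. First reduce to non-successor $\gamma$ by \cref{lemma:RU23}\cref{item:up-1}. Then monotonicity, proved simultaneously with the definition, gives $\gamma = \limMiF{\gamma}{\gamma} \leq \limMiF{\alpha}{\beta}$. To make this mutual definition-plus-monotonicity recursion on the QIIT well-defined, the paper uses a relational presentation rather than well-founded recursion.
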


Due to the quotient-inductive-inductive construction of $\Brw$,
a rigorous proof of this theorem is more involved than it may seem on paper.
The goal of the current subsection is a sketch of our proof.
Additional type-theoretic difficulties and our solutions are discussed in \cref{sec:Agda-technical-discussion}.
The complete proof is contained in our Agda formalization (linked to above).

The basic strategy is to define the ``minimum'' function on all pairs of ordinals, i.e., as a function $\limMin : \Brw \to \Brw \to \Brw$. The function is correct (calculates minima) if the inputs are non-successors, but necessarily incorrect in general.
The point constructors of the function are given as
\begin{alignat}{2}
	& \limMiF \bzero \beta &&= \bzero \notag \\
	& \limMiF {(\bsuc \alpha)} \beta &&= \limMiF \alpha \beta \notag \\
	& \limMiF {(\blim f)} \bzero &&= \bzero \notag \\
	& \limMiF {(\blim f)} {(\bsuc \beta)} &&= \limMiF {(\blim f)} \beta \label{eq:item-blim-bsuc}\\
	& \limMiF {(\blim \,f )}{(\blim \,g )} &&= \blim\, (\lambda \,n. \, (\limMiF{f_n}{g_n}) + n), \label{eq:item-blimblim}
\end{alignat}
which, in order to be accepted by Agda, can be expressed in a relational way (cf.~\cref{sec:Agda-technical-discussion}).
We omit the proof that \eqref{eq:item-blimblim} does not distinguish between bisimilar sequences and refer to the formalization.
More interesting is the monotonicity of $\limMin$ in both arguments, which is not only key for the well-definedness of \eqref{eq:item-blimblim}, but also for the proof of \cref{thm:limmin-exists}:

\begin{lemma}[\flink{Lemma-21}]\label{lem:limmin-mono}
	For all ordinals $\alpha_1 \leq \alpha_2$ and $\beta_1 \leq \beta_2$, we have $\limMiF {\alpha_1} {\beta_1} \leq \limMiF {\alpha_2} {\beta_2}$.
\end{lemma}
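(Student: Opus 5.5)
Monotonicity splits into the two one-sided statements. If $\alpha_1 \leq \alpha_2$ then $\limMiF{\alpha_1}{\beta} \leq \limMiF{\alpha_2}{\beta}$ for all $\beta$, and symmetrically in the second argument. Transitivity ($\ltrans$) then gives the combined claim. Since the conclusion is a proposition, we may prove it by induction on the derivation of $\alpha_1 \leq \alpha_2$, i.e.\ by induction over the constructors of the relation $\leq$ in \cref{fig:Brw-def}. Point constructors of $\Brw$ suffice for the remaining ordinal arguments.

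The useful auxiliary facts, proved first, are:
\begin{itemize}
\item[(a)] $\limMiF{\alpha}{\beta} \leq \upa\alpha$ and $\limMiF{\alpha}{\beta}\leq \upa\beta$, i.e.\ $\limMin$ never exceeds its arguments up to rounding;
\item[(b)] $\limMiF{\alpha}{\beta} = \limMiF{\dwa\alpha}{\dwa\beta}$, immediate from the defining equations since successors are stripped.
\end{itemize}
By (b) and \cref{cor:y<=xThen-RDy<=RDx} (rounding down preserves $\leq$), we may reduce to inputs in $\Brwzl$. This disposes of the $\lsuccmono$ case and all successor cases.

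The constructor cases go as follows. For $\lzero$, the left side is $\bzero$ and there is nothing to show. For $\ltrans$, chain the two induction hypotheses. For $\lsuccmono$, use (b). For $\lcocone$, we have $\alpha_1 \leq f_k$ and must show $\limMiF{\alpha_1}{\beta}\leq\limMiF{(\blim f)}{\beta}$. After reducing $\beta$ to $\bzero$ (trivial) or $\blim g$, the induction hypothesis gives $\limMiF{\alpha_1}{\blim g}\leq \limMiF{f_k}{\blim g}$. It then remains to bound $\limMiF{f_k}{\blim g}$ by $\blim(\lambda n.\,\limMiF{f_n}{g_n}+n)$.

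For $\llimiting$, we have $f_k\leq\alpha_2$ for all $k$ and want $\blim(\lambda n.\,\limMiF{f_n}{g_n}+n) \leq \limMiF{\alpha_2}{\blim g}$. By the limit property it suffices to bound each term. With $\alpha_2$ reduced to $\blim h$, each $f_n\leq \blim h$, and the target is $\blim(\lambda m.\,\limMiF{h_m}{g_m}+m)$. Here one needs that the finite offset $+n$ is absorbed by a limit, via \cref{lem:recall-basic-Brw-properties}\cref{lem:lim-under-suc} and \cref{lemma:RU23}.

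The main obstacle is the mismatched-index comparison: showing $\limMiF{f_k}{\blim g}\leq \blim(\lambda n.\,\limMiF{f_n}{g_n}+n)$, and the analogous step in the $\llimiting$ case. My plan is a nested induction on the first argument, proving the stronger statement $\limMiF{\gamma}{\blim g}\leq \blim(\lambda n.\,\limMiF{f_n}{g_n}+n)$ whenever $\gamma \leq f_k$. In the limit case $\gamma=\blim e$, the terms $\limMiF{e_m}{g_m}+m$ must be dominated. Since $e$ is simulated by $f$ (\cref{lem:recall-basic-Brw-properties}\cref{lem:lim-under-lim-simulation}), each $e_m$ lies below some $f_j$, and $g_m\leq g_{\max(m,j)}$ because $g$ is increasing. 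Applying monotonicity at smaller ordinals (the induction hypothesis, justified by wellfoundedness) yields $\limMiF{e_m}{g_m}\leq \limMiF{f_{j'}}{g_{j'}}$ with $j'=\max(m,j)$, and the offset $+m\leq +j'$ is harmless.

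Making this mutual induction well-founded is the delicate part. I expect it to require proving monotonicity simultaneously with the well-definedness of \eqref{eq:item-blimblim}, which the paper indicates is done relationally in \cref{sec:Agda-technical-discussion}.
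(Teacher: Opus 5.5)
\textbf{Gap: the induction you set up does not provide the hypothesis your key step uses.} The key step itself matches the paper's argument: turn a limit inequality into a simulation via \cref{lem:recall-basic-Brw-properties}\cref{lem:lim-under-lim-simulation}, take $j'=\max(m,j)$, apply monotonicity to the components, and absorb $+m\le +j'$.

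Induction on the derivation of $\alpha_1\le\alpha_2$ only yields the lemma for subderivations.
\begin{itemize}
\item In the $\lcocone$ case it reduces the goal to $\limMiF{f_k}{(\blim g)}\le\limMiF{(\blim f)}{(\blim g)}$. This is the lemma for $f_k\le\blim f$, and that derivation is not a subderivation of the one you are inducting on.
\item In the limit case of your nested induction on $\gamma\le f_k$, you need the two-sided inequality $\limMiF{e_m}{g_m}\le\limMiF{f_{j'}}{g_{j'}}$. The derivation hypothesis does not give it. Nor does the nested hypothesis, which only concerns $\limMiF{e_m}{(\blim g)}$.
\item ``Justified by wellfoundedness'' names no measure. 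Also, since you prove the two one-sided statements separately, you would need both of them simultaneously at smaller arguments.
\item The $\llimiting$ case fails more sharply. The available hypothesis gives only $\limMiF{f_n}{g_n}\le\limMiF{(\blim h)}{g_n}$, and the offset $+n$ cannot be absorbed after this non-strict bound. Take $f_m=h_m=m$ and $g_m=\omega+m$. Then, for $n\ge 1$, $\limMiF{(\blim h)}{g_n}+n=\omega+n\not\leq\omega=\limMiF{(\blim h)}{(\blim g)}$. Absorbing the offset needs the strict $f_n<\blim h$, hence $f_n<h_j$ by \cref{lem:recall-basic-Brw-properties}\cref{lem:x-under-limit-means-x-under-element}. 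That again requires two-sided monotonicity at component ordinals.
\end{itemize}

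\textbf{Repair.} Drop the induction over $\le$ and induct on the ordinals, as the paper does. One option is structural induction on $\alpha_1$ with $\alpha_2,\beta_1,\beta_2$ universally quantified in the motive, plus nested inductions (or your fact (b)) to strip successors. Another is well-founded induction on $\alpha_2$, which suffices because $f_{j'}<\blim f\le\alpha_2$.

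After reducing to non-successors, the only real case is when all four ordinals are limits. Then \cref{lem:recall-basic-Brw-properties}\cref{lem:lim-under-lim-simulation} turns $\blim f^1\le\blim f^2$ and $\blim g^1\le\blim g^2$ into simulations. The induction hypothesis applies directly to $(f^1_n,f^2_m,g^1_n,g^2_m)$ with $m=\max(n,n_1,n_2)$. This handles both arguments at once, so the one-sided split, the $\lcocone$/$\llimiting$ cases and your auxiliary nested claim are unnecessary. Your closing concern about defining $\limMin$ mutually with its monotonicity is shared by the paper and is not the problem here.
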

\begin{proof}
	We do induction on the four ordinals.
	The cases in which one or multiple are $\bzero$ or successor are easy; for example, if $\blim f \equiv \alpha_1 \leq \alpha_2 \equiv \bsuc \gamma$, the characterization of $\leq$ implies $\blim f \leq \gamma$ (cf.~\cref{lem:recall-basic-Brw-properties}\cref{lem:lim-under-suc}), a case that is covered by the induction hypothesis.
	The trickiest case occurs when all involved ordinals are limits.
    Therefore, let us assume that $\alpha_1 \equiv \blim f^1$, $\alpha_2 \equiv \blim f^2$, $\beta_1 \equiv \blim g^1$, and $\beta_2 \equiv \blim g^2$.
    The assumption now is that $f^1$~and~$g^1$ are simulated by $f^2$~and~$g^2$, respectively, and we have to show that the sequence $\lambda n. \limMiF {f^1_n} {g^1_n} + n$ is simulated by the sequence $\lambda m. \limMiF {f^2_m} {g^2_m} + m$. Therefore, let us fix $n$; from the assumption, we get $n_1$ such that $f^1_n \leq f^2_{n_1}$, and we get $n_2$ such that $g^1_n \leq g^2_{n_2}$.
    We claim that $m \defeq \max(n,n_1,n_2)$ fulfills the requirement.
    We have $f^1_n \leq f^2_{n_1} \leq f^2_m$ as well as $g^1_n \leq g^2_{n_2} \leq g^2_m$. By the induction hypothesis, together with $n \leq m$, we get
    $\limMiF {f^1_n} {g^1_n} + n\leq \limMiF {f^2_m} {g^2_m} + m$.
\end{proof}

The next major step in the construction is to establish the following properties:
\begin{lemma}[\flink{Lemma-22}]\label{lem:limaux-props}
	For all $\alpha, \beta : \Brw$, we have:
	\begin{romanenumerate}
		\item $\limMiF \alpha \beta \leq \alpha$, \label{item:limmin-first-aux}
		\item $\limMiF \alpha \beta \leq \beta$, \label{item:limmin-second-aux}
		\item $\limMiF \alpha \alpha = \dwa \alpha$. \label{item:limmin-third-aux}
	\end{romanenumerate}
\end{lemma}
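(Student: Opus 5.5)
We prove all three properties by induction on $\alpha$ and $\beta$, following the defining clauses of $\limMin$. All three goals are propositions (or, for \cref{item:limmin-third-aux}, an equation in the set $\Brw$), so it suffices to treat the point constructors $\bzero$, $\bsuc$ and $\blim$. For \cref{item:limmin-first-aux} and \cref{item:limmin-second-aux} we first do an outer induction on $\alpha$.

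For $\alpha \equiv \bzero$, both sides give $\bzero$, and $\lzero$ settles it. For $\alpha \equiv \bsuc \alpha'$, we have $\limMiF{\alpha'}{\beta} \leq \alpha' \leq \bsuc\alpha'$ by the induction hypothesis, and for \cref{item:limmin-second-aux} the induction hypothesis is exactly the goal. For $\alpha \equiv \blim f$ we do an inner induction on $\beta$. The case $\bzero$ is trivial. The case $\bsuc\beta'$ reduces via \eqref{eq:item-blim-bsuc} to the inner hypothesis, with one extra $\leq \bsuc\beta'$ step needed for \cref{item:limmin-second-aux}.

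The main case is $\beta \equiv \blim g$. Here we must show $\blim(\lambda n.\, \limMiF{f_n}{g_n} + n) \leq \blim f$, and we use $\llimiting$ to reduce this to showing $\limMiF{f_n}{g_n} + n \leq \blim f$ for each $n$. The outer induction hypothesis gives $\limMiF{f_n}{g_n} \leq f_n$. Since $f$ is strictly increasing, $f_n + n \leq f_{2n}$, or more simply $f_n + n \leq f_{n+n}$, by a small induction on $n$ using $\bsuc f_k \leq f_{k+1}$ and monotonicity of $+$. Then $\lcocone$ gives $\limMiF{f_n}{g_n} + n \leq f_n + n \leq f_{n+n} \leq \blim f$. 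The argument is symmetric with $g$ for \cref{item:limmin-second-aux}. The only mildly technical part is that the inductive hypothesis must be available for $f_n$ and $g_n$ simultaneously, which is why the induction has to be phrased over pairs, following the same structure as the definition of $\limMin$.

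For \cref{item:limmin-third-aux} we induct on $\alpha$.
\begin{itemize}
\item $\limMiF\bzero\bzero = \bzero = \dwa\bzero$.
\item $\limMiF{(\bsuc\alpha)}{(\bsuc\alpha)} = \limMiF{\alpha}{(\bsuc\alpha)}$. Here $\alpha$ is not literally a limit, so we argue via antisymmetry. We have $\limMiF\alpha\alpha \leq \limMiF\alpha{(\bsuc\alpha)} \leq \limMiF{(\bsuc\alpha)}{(\bsuc\alpha)}$ by \cref{lem:limmin-mono}. Conversely, $\limMiF{(\bsuc\alpha)}{(\bsuc\alpha)} \leq \dwa(\bsuc\alpha) = \dwa\alpha$ will follow from a general auxiliary lemma. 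Combined with the induction hypothesis $\limMiF\alpha\alpha = \dwa\alpha$, this gives the equation.
\item For $\blim f$ we need $\blim(\lambda n.\, \dwa f_n + n) = \blim f$, using the induction hypothesis on $f_n$. The direction $\leq$ is the same argument as in \cref{item:limmin-first-aux}. For $\geq$, we use $f_n \leq \dwa f_n + n$, which holds since $f_n = \dwa f_n + k$ for some finite $k$ by \cref{item:spit-ordinal}, together with the fact that $k \leq n$ fails in general. So instead we simulate via $f_n \leq f_{n+1}$, or more carefully we show $f_n < \blim(\ldots)$ by noting $\dwa f_{m} \geq f_n$ for suitable $m > n$: since $f_m \geq f_n + (m-n)$, if $f_n$ is finite we can take $m$ large, and if it is infinite, $\dwa f_{n+1} \geq f_n$ because $f_n + 1 \leq f_{n+1}$ and $\dwa$ preserves the non-successor lower bound (\cref{lemma:RU23}\cref{item:up-2} applied to $\dwa f_n$).
\end{itemize}

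The auxiliary lemma is that $\limMiF\alpha\beta$ is always a non-successor, together with $\limMiF\alpha\beta \leq \dwa\alpha$. This follows from \cref{item:limmin-first-aux}, the fact that the output is in $\Brwzl$ (by induction), and \cref{lemma:RU23}\cref{item:up-2}. I expect the limit case of \cref{item:limmin-third-aux}, specifically the bisimulation in the $\geq$ direction, to be the main obstacle.
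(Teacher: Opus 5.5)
There is a gap in the $\geq$ direction of the limit case of (iii), which you yourself flagged as the main obstacle. The rest of your proposal follows the paper's route: (i), (ii), and the zero and successor cases of (iii) work as written. The paper likewise reduces (iii) to $\dwa\alpha \leq \limMiF\alpha\alpha$ and uses the witness $f_{n+n}$ in (i).

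\textbf{The failing step.} You claim that $\dwa f_{n+1} \geq f_n$ when $f_n$ is infinite. This is false. Take the strictly increasing sequence $f_i \defeq \omega + 5 + i$ and $n = 0$. Then $\dwa f_1 = \omega < \omega + 5 = f_0$, and even $\dwa f_1 + 1 = \omega + 1$ is not above $f_0$.

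The error is in how you use \cref{lemma:RU23}\cref{item:up-2}. It only applies to a non-successor lower bound, and $f_n$ is generally not one. Applying it to $\dwa f_n \leq f_{n+1}$ yields only $\dwa f_n \leq \dwa f_{n+1}$, which discards the finite part of $f_n$.

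The finite case has the same problem as written: $\dwa f_m \geq f_n$ can fail for every $m$. For example, with $f_i \defeq i$ every $\dwa f_m$ is $\bzero$. Your opening claim $f_n \leq \dwa f_n + n$ also fails in general, as you noted yourself.

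\textbf{The missing idea.} The finite part of $f_n$ has to be absorbed by the summand $+m$ of the sequence $\lambda m.\, \limMiF{f_m}{f_m} + m$, not by $\dwa f_m$. You achieve this by choosing the index large enough. Write $f_n = \dwa f_n + k$ using \cref{item:spit-ordinal} and take $m \defeq n + k$. Then
\[
  f_n = \dwa f_n + k \leq \dwa f_{n+k} + (n+k) = \limMiF{f_{n+k}}{f_{n+k}} + (n+k).
\]
This uses monotonicity of $\dwa$ (\cref{cor:y<=xThen-RDy<=RDx}) applied to $f_n \leq f_{n+k}$, together with $k \leq n+k$. This is exactly the paper's argument, and it treats finite and infinite $f_n$ uniformly, so no case split is needed.
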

\begin{proof}
	The property \eqref{item:limmin-first-aux} is shown by an induction on $\alpha$ and $\beta$, where the only interesting case is when both arguments are limits.
	Thus, we want to prove $\limMiF {(\blim f)} {(\blim g)} \leq \blim f$, i.e., for every $n : \N$, there exists $m : \N$ such that $(\limMiF {f_n} {g_n}) + n \leq f_m$.
	By the induction hypothesis, the left-hand side is below $f_n + n$, and choosing $m$ to be $n + n$ fulfills the requirement.

	\eqref{item:limmin-second-aux} is analogous.

	Due to the above properties, for \eqref{item:limmin-third-aux} it suffices to show $\dwa \alpha \leq \limMiF \alpha \alpha$, which we do by induction on $\alpha$.
	The least immediate case is $\alpha \equiv \blim f$.
	We need to show that $f$ is simulated by the sequence $h \defeq \lambda m. (\limMiF {f_m} {f_m}) + m$.
	Let us fix $n : \N$, and write $f_n$ as the sum of a non-successor and a natural number as given by \cref{item:spit-ordinal}, $f_n = \dwa f_n + k$.
	We then have
 	\begin{align*}
		f_n = \dwa f_n + k
		&\leq \dwa f_{n+k} + (n + k) \\
		&= \limMiF {f_{n+k}} {f_{n+k}} + (n+k)
		= h_{n+k}. \qedhere
	\end{align*}
\end{proof}

This allows us to show that we indeed have binary minima restricted to non-successors:

\begin{proof}[Proof of \cref{thm:limmin-exists}]
	We need to show that $\limMin$ calculates minima for non-successor arguments $\alpha$ and $\beta$, i.e., for every $\gamma : \Brw$, we need to prove
	\begin{equation*}
	(\gamma \leq \alpha \land \gamma \leq \beta) \longleftrightarrow (\gamma \leq \limMiF{\alpha}{\beta}).
	\end{equation*}
	The direction ``$\leftarrow$'' is a direct consequence of \cref{lem:limaux-props} \eqref{item:limmin-first-aux} and \eqref{item:limmin-second-aux}.
	For the direction ``$\rightarrow$,'' note that we can restrict ourselves to the case that $\gamma$ is zero or a limit due to
	\cref{lemma:RU23}\cref{item:up-1}.
	The assumption, together with \cref{lem:limmin-mono}, then implies $\limMiF{\gamma}{\gamma} \leq \limMiF{\alpha}{\beta}$; this suffices by \eqref{item:limmin-third-aux} of the previous lemma.
\end{proof}

\subsection{Binary Conjunctions}\label{subsec:conjunections}

The construction of minima makes it easy to show that $\alpha$-decidable propositions are closed under binary conjunctions:
\begin{theorem}[\flink{Theorem-23}]\label{thm:conjclosuremain}
	Let $\alpha : \Brw$. If $P$ and $Q$ are $\alpha$-decidable propositions, then so is $P \land Q$.
\end{theorem}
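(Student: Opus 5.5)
We reduce to the case of non-successor ordinals and then take the witness for $P \land Q$ to be the minimum given by \cref{thm:limmin-exists}.

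\textbf{Step 1: reduction.} By \cref{thm:limit-decidable-suffices}, a proposition is $\alpha$-decidable if and only if it is $\upa\alpha$-limit-decidable. So it suffices to prove the claim with $\alpha$ replaced by $\lambda \defeq \upa\alpha : \Brwzl$, using limit-decidability. Concretely, assume $P$ and $Q$ are $\lambda$-limit-decidable; we must show that $P \land Q$ is $\lambda$-limit-decidable, and then transport back along \cref{thm:limit-decidable-suffices}.

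\textbf{Step 2: the witness.} The goal is a proposition, so we may eliminate the two propositional truncations. This gives explicit $x, y : \Brwzl$ with $P \leftrightarrow \lambda \leq x$ and $Q \leftrightarrow \lambda \leq y$. We take $z \defeq \limMiF x y : \Brwzl$, which by \cref{thm:limmin-exists} satisfies the minimum property \eqref{eq:minimum-property}.

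\textbf{Step 3: correctness.} Instantiating \eqref{eq:minimum-property} at $\gamma \defeq \lambda$ gives
\[(\lambda \leq x \land \lambda \leq y) \leftrightarrow \lambda \leq z.\]
Combining this with the two given equivalences yields $P \land Q \leftrightarrow \lambda \leq z$. Hence $P\land Q$ is $\lambda$-limit-decidable, and therefore $\alpha$-decidable.

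The substantive work, namely that $\limMin$ really computes minima for non-successor arguments, is entirely contained in \cref{thm:limmin-exists}. The only point needing care is Step 1: minima do not exist for arbitrary ordinals (\cref{prop:appendix-minimum-nonconstructive}), so we cannot apply $\limMin$ to the original witnesses for $\alpha$ directly. Instead we must pass to the non-successor witnesses supplied by \cref{thm:limit-decidable-suffices}. I expect no real obstacle beyond bookkeeping the direction of that equivalence.
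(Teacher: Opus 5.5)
Your proof is correct and takes the same route as the paper. The paper likewise uses \cref{thm:limit-decidable-suffices} to assume that $\alpha$ and the witnesses $x,y$ are non-successors, and then takes $\limMiF{x}{y}$ as the witness via \cref{thm:limmin-exists}; you just make the passage to $\upa\alpha$ and back explicit.
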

\begin{proof}
	As we are proving a proposition, we may assume that we are given $x$~and~$y$ with $P \leftrightarrow \alpha \leq x$ and $Q \leftrightarrow \alpha \leq y$.
	By \cref{thm:limit-decidable-suffices}, we can assume that $\alpha$,~$x$,~and~$y$ are all limit ordinals (or zero).
	\cref{thm:limmin-exists} now yields $P \land Q \leftrightarrow {\alpha \leq \limMiF{x}{y}}$.
\end{proof}

\section{Small Closures under Binary Disjunctions}\label{sec:bin-disj}

Having shown that $\alpha$-decidable propositions are closed under binary conjunction, the next step is to study whether they are closed under binary disjunction.
This is the case for decidable and for semidecidable propositions (i.e., for $\alpha = \omega$ and $\alpha = \omega \cdot 2$), but can unfortunately not be expected in full generality.
The reason is that no function with a property dual to the one of $\limMin$ is definable.
To make precise what is and what is not possible, we define the following refined version of binary maxima:
\begin{definition}[\flink{Definition-24} Maximum]\label{def:bin-max}
	Given $\alpha, \beta : \Brw$, and a property $P: \Brw \to \Prop$ of ordinals,
        an ordinal $\nu : \Brw$ is the maximum of $\alpha$ and $\beta$ with respect to ordinals satisfying~$P$ if:
	\begin{equation*}
		\forall \gamma. P(\gamma) \to \left((\alpha \geq \gamma \lor \beta \geq \gamma) \leftrightarrow \nu \geq \gamma\right).
	\end{equation*}
\end{definition}

We show in \cref{sec:no-bin-max} that binary maxima cannot exist globally even for non-successors.
However, these maxima exist with respect to sufficiently small $\alpha$ (\cref{sec:binmax}), which means that such $\alpha$-decidable propositions are closed under binary disjunction (\cref{sec:small-bin-disj}).

\subsection{Non-Constructability of Binary Maxima}\label{sec:no-bin-max}

From the assumption that binary maxima exist for arbitrary ordinals, we could derive a similar taboo as in \cref{prop:appendix-minimum-nonconstructive}; we do not demonstrate this here.
What is more interesting is that, unlike for minima, even restricting ourselves to non-successors cannot work:

\begin{proposition}[\flink{Proposition-25}]\label{prop:maximum-nonconstructive}
	If binary maxima exist for non-successor ordinals, then $\LPO$ holds.
\end{proposition}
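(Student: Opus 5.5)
We argue as in the proof of \cref{prop:appendix-minimum-nonconstructive}. Fix a binary sequence $s$; we show that either $\exists i.\, s_i = 1$ or $s$ is constantly $0$. The idea is to choose two non-successors whose maximum would have to be a successor-free ordinal whose shape reveals whether $s$ contains a $1$. We take $\alpha \defeq \blim{\jump s}$, which is $\omega\cdot 2$ if $s$ contains a $1$ and $\omega$ otherwise. For $\beta$ we use a second limit that is incomparable in ``shape'', namely $\beta \defeq \omega\cdot 2$ is too big, so instead we try $\beta \defeq \omega$ together with testing ordinals $\gamma$ of the form $\omega+1$.

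Concretely, let $\nu \in \Brwzl$ be the maximum of $\alpha$ and $\omega$. We instantiate the maximum property at $\gamma \defeq \omega + 1$. This gives
\[
(\alpha \geq \omega+1 \lor \omega \geq \omega+1) \leftrightarrow \nu \geq \omega+1,
\]
so $\exists i.\, s_i=1$ is equivalent to $\omega+1 \leq \nu$. Instantiating at $\gamma \defeq \nu$ gives $\alpha \geq \nu \lor \omega \geq \nu$. Instantiating at $\gamma \defeq \omega\cdot 2$ gives $\alpha \geq \omega\cdot 2 \leftrightarrow \nu \geq \omega\cdot 2$. By \cref{lemma:RU23}\cref{item:up-3}, $\omega+1 \leq \nu$ is equivalent to $\omega\cdot 2 \leq \nu$, since $\nu$ is a non-successor.

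The real task is then to decide $\omega\cdot 2 \leq \nu$. We perform a case distinction on $\nu$ using \cref{lem:recall-basic-Brw-properties}\cref{item:classification}. The zero case is impossible, since $\omega \leq \nu$ follows from $\gamma \defeq \omega$. So $\nu = \blim t$. We then check whether $t_0$ (or suitable terms) is infinite; by \cref{lem:recall-basic-Brw-properties}\cref{item:finite-is-dec} this is decidable. Since $\nu$ is a limit, $\omega + 1 \leq \blim t$ holds iff $\exists n.\, t_n \geq \omega$, which is again only semidecidable. This is the main obstacle: $\omega\cdot2$ versus $\omega$ is precisely what the max would encode, so testing $\gamma=\omega+1$ alone is not enough.

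To fix this, I would choose $\beta$ so that the maximum is forced to be a successor-free ordinal only in one branch. I would take $\alpha \defeq \blim{\jump s} + \omega$-style variants, or better $\alpha \defeq \omega\cdot 2$-valued sequence and $\beta \defeq \omega^2$-type limit $\blim(\lambda n.\,\omega\cdot n)$-shifted, so that $\nu$ equals one ordinal if $s$ has a $1$ and a different one otherwise. The two candidate values should be distinguishable by a decidable test on the first terms of a presenting sequence, via \cref{lem:recall-basic-Brw-properties}\cref{lem:lim-under-lim-simulation} and decidability of finiteness. Finding such a pair, for which the maximum of the two sequences is not simply their pointwise join, is the step I expect to take the most work. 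A natural candidate is $\alpha \defeq \blim{\jump s}$ and $\beta \defeq \omega \cdot 2$, tested at $\gamma \defeq \omega\cdot 2$ and at $\gamma \defeq \alpha$ itself, which yields $\alpha \leq \omega\cdot 2$ or $\omega\cdot 2 \leq \alpha$ after case analysis.
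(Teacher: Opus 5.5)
There is a genuine gap: you never settle on a pair $(\alpha,\beta)$ whose maximum carries information about $s$.

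Your first attempt fails, as you note, because $\omega \le \blim{\jump s}$ always holds, so the maximum is just $\alpha$. Your final ``natural candidate'' fails for the symmetric reason. The inequality $\blim{\jump s} \le \omega\cdot 2$ is provable outright, since every $\jump s_n$ is at most $\omega + n$. So the maximum of $\blim{\jump s}$ and $\omega\cdot 2$ is simply $\omega\cdot 2$. The disjunction $\alpha \le \omega\cdot 2 \lor \omega\cdot 2 \le \alpha$ that you extract is then always witnessed by its left disjunct and says nothing about $s$.

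More generally, $\blim{\jump s}$ is (classically) either $\omega$ or $\omega\cdot 2$, and no non-successor lies strictly between them. So no fixed non-successor $\beta$ can separate the two cases. Case-splitting on $\nu$ or inspecting a presenting sequence does not help either, as you observed yourself. (Two minor points: the hypothesis only gives $\nu : \Brw$, not $\nu \in \Brwzl$. Also, $\omega+1\le\nu \leftrightarrow \omega\cdot 2\le\nu$ is \cref{lemma:RU23}\cref{item:up-1}, not \cref{item:up-3}.)

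The missing ideas are two.

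First, read off linearity from the maximum property. Instantiating at $\gamma \defeq \alpha$, $\gamma \defeq \beta$ and $\gamma \defeq \nu$ gives $\alpha \le \nu$, $\beta\le\nu$ and $\nu\le\alpha \lor \nu \le\beta$. Together these yield $\beta\le\alpha\lor\alpha\le\beta$ for all non-successors. In your last paragraph you used $\gamma \defeq \alpha$ where $\gamma\defeq\nu$ is what produces the disjunction.

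Second, use a sequence whose limit straddles a fixed non-successor. The paper replaces $\omega$ by $\omega\cdot 2$ in \cref{def:jumping}, giving the ``double jump'' $\doublejump s$, whose limit is $\omega$ or $\omega\cdot 3$. It then compares this with $\beta \defeq \omega\cdot 2$ (\cref{lem:linearity-to-lpo}).
\begin{itemize}
\item If $\blim{\doublejump s}\le \omega\cdot 2$, then any $s_k=1$ would give $\omega\cdot 2+1 \le \blim{\doublejump s}$, a contradiction. Hence $s$ is constantly $0$.
\item If $\omega\cdot 2\le\blim{\doublejump s}$, then \cref{lem:recall-basic-Brw-properties}\cref{lem:x-under-limit-means-x-under-element} gives some $n$ with $\omega \le \doublejump s_n$. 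A $1$ can then be found among $s_0,\ldots,s_{n-1}$.
\end{itemize}
The disjunction obtained from $\gamma\defeq\nu$ does all the deciding.
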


The proof relies on the following simple observation,
which extends known connections between properties of $\Brw$ and $\LPO$ \cite[Thms 48 and 53]{kraus2023ordinals}.

\begin{lemma}[\flink{Lemma-26}]\label{lem:linearity-to-lpo}
	The following are equivalent:
	\begin{romanenumerate}
		\item\label{item:linear}
		The order $\leq$ is linear (i.e., for all ordinals $\alpha$ and $\beta$, we have $\alpha \leq \beta \lor \beta \leq \alpha$).
		\item\label{item:nonsuc-linear}
		The order $\leq$ is linear for non-successors (i.e., if $\alpha$ and $\beta$ are non-successors, then $\alpha \leq \beta \lor \beta \leq \alpha$).
		\item\label{item:LPO}
		$\LPO$.
	\end{romanenumerate}
\end{lemma}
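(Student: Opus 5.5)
We prove the cycle \cref{item:linear} $\Rightarrow$ \cref{item:nonsuc-linear} $\Rightarrow$ \cref{item:LPO} $\Rightarrow$ \cref{item:linear}. The first implication is trivial, since linearity for all ordinals specializes to non-successors.

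For \cref{item:nonsuc-linear} $\Rightarrow$ \cref{item:LPO}, we reuse the jumping sequence trick from the proof of \cref{prop:appendix-minimum-nonconstructive}. Given a binary sequence $s$, consider the two non-successors $\alpha \defeq \blim{\jump s}$ and $\beta \defeq \omega \cdot 2$ (that is, $\upa(\omega+1) = \omega+\omega$). By \cref{item:nonsuc-linear} we have $\alpha \leq \beta \lor \beta \leq \alpha$. Since $\LPO$ for a fixed $s$ is a proposition, we may eliminate the truncation and split into two cases.

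\begin{itemize}
\item If $\omega\cdot 2 \leq \blim{\jump s}$, then $\omega+1 \leq \blim{\jump s}$. As in \cref{prop:DecOmegaEquivNew}, this yields an index $i$ with $\jump s_i \geq \omega$, and hence a $1$ among $s_0,\dots,s_{i-1}$, which we can find by decidable search.
\item If $\blim{\jump s} \leq \omega\cdot 2$, we show that $s$ is constantly $0$. Suppose $s_n = 1$ for some $n$, with $n$ taken minimal. Then $\jump s_{n+1} = \omega$, and the sequence continues $\omega+1, \omega+2, \dots$, so $\blim{\jump s} = \omega\cdot 2$. This is consistent with the case assumption, so $\beta = \omega\cdot 2$ does not separate the two cases. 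We fix this by choosing $\beta \defeq \omega$ instead. Then $\blim{\jump s} \leq \omega$ means that $f$ is simulated by $\omega$'s sequence (\cref{lem:recall-basic-Brw-properties}\cref{lem:lim-under-lim-simulation}), so every $\jump s_i$ is finite, and therefore $s$ is constantly $0$. In the other branch, $\omega \leq \blim{\jump s}$ always holds and gives no information.
\end{itemize}

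So the single comparison has to be adjusted: we compare $\blim{\jump s}$ with $\omega$ using a strict-versus-equal analysis. We want $\blim{\jump s} \leq \omega \lor \omega+1 \leq \blim{\jump s}$. Linearity of non-successors alone gives only $\leq$, not $<$. We can repair this by comparing instead $\alpha \defeq \blim{\jump s}$ with $\beta \defeq \omega$ after shifting: take $\alpha' \defeq \blim(\lambda n.\, \jump s_n + n)$, or more simply compare $\omega\cdot 2$ with $\blim{\jump t}$ for a modified jumping sequence $t$ that jumps to $\omega\cdot 2$ (or $\omega^2$) when a $1$ is found. Then $\omega\cdot 2 \leq$ it occurs exactly when a $1$ exists, while ``it $\leq \omega\cdot 2$'' rules out a jump to $\omega\cdot 2 + 1$ and beyond. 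Designing this sequence so that both disjuncts are informative is the main obstacle. Concretely, we let the sequence be $n$ before the first $1$ and $\omega\cdot 2 + k$ afterwards. This is strictly increasing, has limit $\omega$ if no $1$ occurs and limit $\omega\cdot 3$ otherwise, and is compared with the non-successor $\omega\cdot 2$.

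For \cref{item:LPO} $\Rightarrow$ \cref{item:linear}, we cite the known results \cite[Thms 48 and 53]{kraus2023ordinals}, which show that $\LPO$ makes $\leq$ decidable and hence trichotomous/linear on $\Brw$. Failing that, we would prove it directly by induction on both ordinals. In the limit–limit case, $\LPO$ lets us decide whether $f$ is simulated by $g$: the statement $\forall i\,\exists j.\, f_i \leq g_j$ reduces, via the induction hypothesis giving decidable comparisons, to a $\Pi_2$-style statement. This case needs care and is where the citation is preferable.
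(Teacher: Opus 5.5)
Your proof is correct and essentially the paper's: the sequence you finally settle on (counting $0,1,2,\dots$ until the first $1$, then continuing $\omega\cdot 2, \omega\cdot 2+1,\dots$), compared with the non-successor $\omega\cdot 2$, is exactly the paper's double-jump sequence $\doublejump f$. The paper likewise obtains (iii)$\Rightarrow$(i) by citing \cite[Thm~53]{kraus2023ordinals} ($\LPO$ implies trichotomy, hence linearity).

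For a clean write-up, delete the abandoned attempts: comparing $\blim{\jump s}$ with $\omega\cdot 2$ or with $\omega$, and the shifted sequence $\lambda n.\,\jump s_n + n$, whose limit is still $\omega\cdot 2$ once a $1$ occurs. Then state the two cases directly for the double-jump sequence $t$. From $\blim t \le \omega\cdot 2$, any $1$ would yield $\omega\cdot 2 + 1 \le \omega\cdot 2$. From $\omega\cdot 2 \le \blim t$, \cref{lem:recall-basic-Brw-properties}\cref{lem:x-under-limit-means-x-under-element} gives some $t_n \ge \omega$, so a $1$ can be found among $s_0,\dots,s_{n-1}$.
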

\begin{proof}
	The implication ``\cref{item:linear}$\to$\cref{item:nonsuc-linear}'' is trivial.
    For ``\cref{item:LPO}$\to$\cref{item:linear}'' we note that, by \cite[Thm~53]{kraus2023ordinals}, $\LPO$ implies trichotomy ($(\alpha < \beta) \lor (\alpha = \beta) \lor (\alpha > \beta)$), which implies linearity.

	For the last implication ``\cref{item:nonsuc-linear}$\to$\cref{item:LPO}'',
	let $f : \N \to \Bool$ be a binary sequence.
	In \cref{def:jumping}, we defined the sequence~$\jump f$ that ``jumps'' when a $1$ occurs in $f$;
	here, we want to ``double jump''
	by replacing $\omega$ by $\omega \cdot 2$ in \cref{def:jumping}.
	We refer to the resulting sequence as $\doublejump f$.

	Assuming linearity, we get ${(\blim {\doublejump f} \leq \omega \cdot 2)} \lor {(\omega \cdot 2 \leq \blim {\doublejump f})}$.
	If $\blim {\doublejump f} \leq \omega \cdot 2$, then $f_k = 1$ leads to a contradiction, so the sequence $f$ is constantly $0$.
	If $\omega \cdot 2 \leq \blim {\doublejump f}$, then, by \cref{lem:recall-basic-Brw-properties}\cref{lem:x-under-limit-means-x-under-element}, we have $\exists n.\, \omega \leq {\doublejump f}_n$. In this case, the values $f_0, \ldots, f_{n-1}$ cannot all be~$0$, so we can find a $1$ by checking these finitely many binary values under the truncation monad.
\end{proof}

\begin{proof}[Proof of \cref{prop:maximum-nonconstructive}]
  As binary maxima are
  unique as soon as they exist, the assumption yields a function ${m : \Brwzl \to \Brwzl \to \Brw}$ calculating maxima with trivial condition ${P \equiv \lambda \delta.\Unit}$.
	Fixing $\alpha$ and $\beta$ in \cref{def:bin-max} with $\nu = m(\alpha, \beta)$, and setting $\gamma$ first to $\alpha$, then to~$\beta$, and finally to $m(\alpha,\beta)$ gives $\alpha \leq m(\alpha,\beta)$, $\beta \leq m(\alpha,\beta)$, and $(m(\alpha,\beta) \leq \alpha) \lor (m(\alpha,\beta) \leq \beta)$. Putting these together, we get $\beta \leq \alpha \lor \alpha\leq \beta$, and \cref{lem:linearity-to-lpo} implies $\LPO$.
\end{proof}

\subsection{Binary Maxima of Small Ordinals}\label{sec:binmax}

Notwithstanding the above result,
we can calculate a function that has the universal property of the maximum with respect to ordinals of shape $\omega \cdot k$:

\begin{theorem}[\flink{Theorem-27}]\label{thm:some-max-exist}
	Binary maxima of non-successor ordinals exist with respect to ordinals of the form $\omega \cdot k$.
        That is, we can construct a function $\limMax : \Brw \to \Brw \to \Brw$
	such that, for all non-successors $\alpha$,$\,\beta$ and for all $\gamma$ such that $\exists k: \N.\, \gamma = \omega \cdot k$,
	we have
	\begin{equation}\label{eq:max-prop}
		(\alpha \geq \gamma \lor \beta \geq \gamma) \longleftrightarrow \limMaF \alpha \beta \geq \gamma.
	\end{equation}
\end{theorem}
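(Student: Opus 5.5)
The plan is to define $\limMax$ by recursion mirroring $\limMin$, with the limit--limit clause interleaving the two sequences:
\begin{align*}
  \limMaF {(\blim f)} {(\blim g)} &\defeq \blim\, (\lambda n.\, \limMaF{f_n}{g_n} + n),\\
  \limMaF \bzero \beta &\defeq \beta, \qquad \limMaF {\alpha} \bzero \defeq \alpha,
\end{align*}
with successor arguments handled by rounding down, as for $\limMin$. The defining property only matters for non-successor inputs, so the successor cases may be chosen freely for convenience. I expect this to need the same relational presentation as $\limMin$ (cf.~\cref{sec:Agda-technical-discussion}). Well-definedness under $\bbisim$ should follow from a monotonicity lemma analogous to \cref{lem:limmin-mono}, proved the same way using $\max(n,n_1,n_2)$.

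The first direction of \eqref{eq:max-prop} is the easy one. I will show $\alpha \leq \limMaF\alpha\beta$ and $\beta\leq\limMaF\alpha\beta$ for non-successors by induction. For limits, $f_n \leq \limMaF{f_n}{g_n} + n$ gives the required simulation. Hence $\alpha\geq\gamma \lor \beta\geq\gamma$ implies $\limMaF\alpha\beta\geq\gamma$ for every $\gamma$, not only $\gamma=\omega\cdot k$.

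The real content is the converse: $\omega\cdot k\leq\limMaF\alpha\beta$ implies $\omega\cdot k\leq\alpha \lor \omega\cdot k\leq\beta$. I will prove this by induction on $k$, generalized over all non-successors $\alpha,\beta$.
\begin{itemize}
  \item For $k=0$ the claim is trivial.
  \item For $k=1$, the maximum is infinite, so one of $\alpha,\beta$ must be a limit. This is decidable by \cref{lem:recall-basic-Brw-properties}\cref{item:classification}, and a limit is $\geq\omega$.
  \item For $k+1 \geq 2$, the cases where $\alpha$ or $\beta$ is zero are immediate. So let $\alpha=\blim f$ and $\beta=\blim g$.
\end{itemize}
In the limit case, since $\omega\cdot k < \omega\cdot(k+1) \leq \blim(\lambda n.\,\limMaF{f_n}{g_n}+n)$, \cref{lem:recall-basic-Brw-properties}\cref{lem:x-under-limit-means-x-under-element} gives some $n$ with $\omega\cdot k < \limMaF{f_n}{g_n} + n$. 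Since $\omega\cdot k$ is a limit, finite summands can be ignored, giving $\omega\cdot k \leq \dwa(\limMaF{f_n}{g_n})$. Here I want a lemma that $\dwa(\limMaF{f_n}{g_n}) = \limMaF{\dwa f_n}{\dwa g_n}$, or at least the inequality $\leq$. Applying the induction hypothesis to the non-successors $\dwa f_n$ and $\dwa g_n$ then yields $\omega\cdot k\leq f_n$ or $\omega\cdot k\leq g_n$. Strict increase gives $\omega\cdot k < f_{n+1}$, so $\omega\cdot k+1\leq\blim f$. By \cref{lemma:RU23}\cref{item:up-3} this yields $\omega\cdot k+\omega\leq\blim f$, and similarly for $g$. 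All of this happens under the truncation, which is fine because the goal is a proposition.

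The main obstacle is the rounding-down step: relating $\dwa$ of the maximum of successor-decorated terms to the maximum of rounded arguments. This is needed because the induction hypothesis only covers non-successors, while $f_n$ and $g_n$ may be successors. If that lemma proves awkward, my fallback is to strengthen the induction hypothesis so that it holds for arbitrary $\alpha,\beta$, stated as $\omega\cdot k\leq\limMaF\alpha\beta \to \omega\cdot k\leq\alpha\lor\omega\cdot k\leq\beta$. Since the successor clauses just pass to the predecessor, this should go through by an inner induction on $\alpha$ and $\beta$ using \cref{lem:recall-basic-Brw-properties}\cref{lem:lim-under-suc}.

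It is also worth noting why the argument does not extend to $\gamma=\omega^2$: descending into the sequence loses the fixed index $k$, consistent with \cref{prop:maximum-nonconstructive}.
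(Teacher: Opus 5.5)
The real problem is in the direction you call easy. For the clauses you mirror from $\limMin$ (which are the paper's), the inequality $f_n \leq \limMaF{f_n}{g_n} + n$ is false in general. Moreover, an induction stated only for non-successors gives you no hypothesis about $f_n$ and $g_n$ in the first place. Take $f_n \defeq \omega + n + 1$ and $g_n \defeq n$. Successor arguments are peeled off, so $\limMaF{f_n}{g_n} = \limMaF{\omega}{n} = \limMaF{\omega}{\bzero} = \omega$, and hence $\limMaF{f_n}{g_n} + n = \omega + n < f_n$ for every $n$. The conclusion $\blim f \leq \limMaF{(\blim f)}{(\blim g)}$ still holds here, since both sides are $\omega\cdot 2$; only your pointwise step fails.

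The repair is the device used for \cref{lem:limaux-props}\cref{item:limmin-third-aux}. Prove $\dwa\alpha \leq \limMaF\alpha\beta$ and $\dwa\beta \leq \limMaF\alpha\beta$ for all $\alpha,\beta : \Brw$. In the limit--limit case, write $f_n = \dwa f_n + j$ by \cref{item:spit-ordinal}. Monotonicity of $\dwa$ and the induction hypothesis then give $f_n \leq \dwa f_{n+j} + j \leq \limMaF{f_{n+j}}{g_{n+j}} + (n+j)$. In other words, simulate with index $n+j$ rather than $n$. For non-successors $\dwa\alpha = \alpha$, which yields the upper-bound property you need.

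The converse direction is correct and is the paper's argument. It uses induction on $k$, then \cref{lem:recall-basic-Brw-properties}\cref{lem:x-under-limit-means-x-under-element} and \cref{lem:lim-under-suc} to extract $\omega\cdot k \leq \limMaF{f_m}{g_m}$, and finally $\blim h \geq h_m + \omega$. The paper's sketch applies the induction hypothesis directly to $f_m$ and $g_m$. It therefore effectively uses your fallback of stating this direction for arbitrary $\alpha,\beta$. Your alternative via $\dwa(\limMaF\alpha\beta) = \limMaF{\dwa\alpha}{\dwa\beta}$ also works, since that equation holds for these clauses by induction on point constructors.

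One citation slip: the step from $\omega\cdot k + 1 \leq \blim f$ to $\omega\cdot k + \omega \leq \blim f$ is \cref{lemma:RU23}\cref{item:up-1}, applied to the non-successor $\blim f$ (because $\upa(\omega\cdot k + 1) = \omega\cdot k + \omega$). It is not \cref{item:up-3}, which only returns $\omega\cdot k + 1 \leq \upa(\blim f) = \blim f$.
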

\begin{proof}[Proof sketch]
	Much of the construction of $\limMax$ is analogous to that of $\limMin$; it is given in detail in our formalization, and our approach to a technical difficulty related to the usage of a QIIT is discussed in \cref{sec:Agda-technical-discussion}.
	The defining equations are
	\begin{alignat*}{2}
		& \limMaF \bzero \beta &&\defeq \beta \\
		& \limMaF {(\bsuc \alpha)} \beta &&\defeq \limMaF \alpha \beta \\
		& \limMaF {(\blim f)} \bzero &&\defeq \blim f \\
		& \limMaF {(\blim f)} {(\bsuc \beta)} &&\defeq \limMaF{(\blim f)}{\beta} \\
		& \limMaF {(\blim f)} {(\blim g)} &&\defeq  \blim \left(\lambda \,n. \, (\limMaF {f_n} {g_n}) + n\right). \label{eq:item:limmax-blimblim}
	\end{alignat*}

	After we have defined $\limMax$ satisfying the equations above, we need to show that it satisfies \eqref{eq:max-prop}.
	For the ``$\rightarrow$'' direction, it suffices to show $\alpha \leq \limMaF \alpha \beta$ and $\beta \leq \limMaF \alpha \beta$, which is easy to prove by induction on $\alpha$ and $\beta$.

	In the assumption $\exists k: \N. \, \gamma = \omega \cdot k$, the natural number $k$ is unique,
	and we prove the ``$\leftarrow$'' direction by induction on $k$.
	There is nothing to do for $k = 0$.
	Next, we assume that $\limMaF \alpha \beta \geq \omega \cdot (k+1)$.
	If one (or both) of $\alpha$, $\beta$ are $\bzero$, the conclusion is trivial; otherwise, assume $\alpha = \blim f$ and $\beta = \blim g$.
	By definition of $\limMax$ and the property \cref{lem:recall-basic-Brw-properties}\cref{lem:x-under-limit-means-x-under-element}, the assumption implies that there exists an $m : \N$ such that $\limMaF {f_m} {g_m} + m \geq \omega \cdot k$, and by property \cref{lem:recall-basic-Brw-properties}\cref{lem:lim-under-suc}, we get $\limMaF {f_m} {g_m} \geq \omega \cdot k$.
	The induction hypothesis implies
	\begin{equation*}
		({f_m} \geq \omega \cdot k) \lor ({g_m} \geq \omega \cdot k).
	\end{equation*}
	Because $\blim h \geq h_m + \omega$ holds for any strictly increasing sequence, this gives
	\[
          ({\blim f} \geq \omega \cdot (k+1)) \lor ({\blim g} \geq \omega \cdot (k+1)).
          \qedhere
	\]
\end{proof}

\subsection{Binary Disjunction}\label{sec:small-bin-disj}

Although $\limMax$ cannot prove general closure of $\alpha$-decidability under binary disjunction by \cref{prop:maximum-nonconstructive}, it can prove it for ordinals of the form $\alpha = \omega \cdot k + n$.
For $\alpha = \omega \cdot 1$ and $\alpha = \omega \cdot 2$, this is the usual known closure under disjunction of decidable and semidecidable propositions respectively.

\begin{theorem}[\flink{Theorem-28}]\label{thm:disjclosureunderomega^2}
	For natural numbers $k$ and $n$, the $(\omega \cdot k + n)$-decidable propositions are closed under binary disjunction.
\end{theorem}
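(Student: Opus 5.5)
Since we are proving a proposition, we may assume explicit witnesses $x, y : \Brw$ with $P \leftrightarrow (\omega\cdot k + n \leq x)$ and $Q \leftrightarrow (\omega\cdot k + n \leq y)$. The plan is to first remove the finite offset $n$, and then use $\limMax$ from \cref{thm:some-max-exist}.

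For the reduction, split on $n$. If $n = 0$ there is nothing to do. If $n \geq 1$, write $\omega\cdot k + n = (\omega\cdot k) + 1 + (n-1)$; by \cref{theorem:SucOfSucIsOrdDecEquiv}, $(\omega\cdot k + n)$-decidability then coincides with $(\omega\cdot k + 1)$-decidability. By \cref{theorem:RUDec}, this coincides with $\upa(\omega\cdot k+1)$-decidability, and $\upa(\omega \cdot k + 1) = \omega\cdot k + \omega = \omega\cdot(k+1)$. So in all cases it suffices to prove closure under $\lor$ for $(\omega\cdot m)$-decidable propositions, with $m \in \{k, k+1\}$. The only check needed here is that $\upa(\bsuc(\omega\cdot k)) \defeq \omega\cdot k + \omega$ is, by definition of multiplication on $\Brw$, equal to $\omega\cdot(k+1)$; this should hold definitionally or by a small lemma.

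Next, apply \cref{thm:limit-decidable-suffices} (more precisely, the rounding-down argument in its proof, using \cref{lemma:RU23}\cref{item:up-2}). Since $\omega\cdot m$ is a non-successor, $\upa(\omega \cdot m) = \omega\cdot m$, and we may replace $x$ and $y$ by $\dwa x$ and $\dwa y$. Thus we can assume $x, y$ are non-successors, still satisfying $P \leftrightarrow \omega\cdot m \leq x$ and $Q \leftrightarrow \omega\cdot m\leq y$. Set $z \defeq \limMaF x y$. Then \cref{thm:some-max-exist}, applied with $\gamma \defeq \omega \cdot m$ (which trivially satisfies $\exists k'.\,\gamma = \omega\cdot k'$), gives $(\omega\cdot m \leq x \lor \omega\cdot m \leq y) \leftrightarrow \omega\cdot m \leq z$. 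Composing with the two given equivalences yields $P \lor Q \leftrightarrow \omega\cdot m \leq z$, so $P\lor Q$ is $(\omega\cdot m)$-decidable. Transporting back along the equivalences of the first step gives $(\omega\cdot k+n)$-decidability.

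I expect no real obstacle, since the substantive work is in \cref{thm:some-max-exist}. The only delicate point is the bookkeeping in the reduction: making sure the rewriting $\omega\cdot k + n = (\omega\cdot k + 1) + (n-1)$ and the identification $\upa(\omega\cdot k+1) = \omega\cdot(k+1)$ are available as equalities in $\Brw$, using associativity of addition on finite summands, which is known from \cite{kraus2023ordinals}.
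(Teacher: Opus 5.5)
Your proposal is correct and follows the paper's proof. You reduce to $n = 0$ via \cref{theorem:RUDec}; your detour through \cref{theorem:SucOfSucIsOrdDecEquiv} is just what the proof of \cref{theorem:RUDec} does internally. Then you take $\limMaF{x}{y}$ and apply \cref{thm:some-max-exist} with $\gamma \defeq \omega\cdot m$. Your explicit rounding down of the witnesses via \cref{lemma:RU23}\cref{item:up-2} makes precise the non-successor hypothesis of \cref{thm:some-max-exist}, which the paper's one-line proof leaves implicit.
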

\begin{proof}
	By \cref{theorem:RUDec}, we may assume $n = 0$.
	If $P \leftrightarrow \omega \cdot k \leq x$ and $Q \leftrightarrow \omega \cdot k \leq y$, then
        we have $P \lor Q \leftrightarrow \omega \cdot k \leq \limMaF{x}{y}$
        by \cref{thm:some-max-exist}.
\end{proof}

\section{Semidecidable Families and Quantifiers}\label{sec:semidec-and-quantifiers}

Having studied binary meets and joins of $\alpha$-decidable propositions, we get back to the situation discussed in the introduction.
Assume we are given a family $P : \N \to \Prop$ of semidecidable propositions, for example the twin prime family ($n \mapsto $ ``there exists a twin prime pair above $n$'').
In \cref{subsec:countable-meets}, we show that its infinite meet $\forall i. P_i$, which for the twin prime family corresponds to the twin prime conjecture, is $\omega^2$-decidable.
In \cref{subsec:countable-joins}, we show that its join $\exists i. P_i$ is $(\omega \cdot 3)$-decidable, before presenting results for quantifier alternations in \cref{subsec:quantifier-alternations}.

All these results critically depend on the construction of the \emph{characteristic ordinal} of such a family $P$ that we explain in \cref{subsec:characteristic-ordinal}.
\cref{subsec:normalizing-families} contains the simple but useful observation that such families can often be assumed to be increasing or decreasing.

\subsection{Characteristic Ordinals of Families of Semidecidable Propositions}
\label{subsec:characteristic-ordinal}

Similar to how we combined the information given by two $\alpha$-decidable propositions in order to calculate a new proposition in \cref{sec:BinConj,sec:bin-disj}, we now need to express the information contained in a countable family $P$ of propositions with a single ordinal.
We restrict ourselves to the case that every $P_n$ is semidecidable.
A discussion of the more general case of $\alpha$-decidability can be found in \cref{sec:conclusions}.

Concretely, given a sequence $P : \N\to \Prop$ of semidecidable propositions, we want to construct a single ordinal $\Psi(P)$ that, in a certain sense, measures how many of the propositions are true --- we call it the \emph{characteristic ordinal} $\Psi(P)$ of $P$.
It is approximated by $\Psi_n(P)$, an ordinal that measures how many of the propositions $P_0, P_1, \ldots, P_{n-1}$ are true.
If these were assumed to be decidable, we could simply define $\Psi_n(P)$ to be this number, but, as the $P_i$ are only assumed to be semidecidable,
the definition of $\Psi_n(P)$ is more involved.
Afterwards, we can take the limit of a construction involving $\Psi_n(P)$ in order to define $\Psi(P)$.

Our strategy for the first part is as follows.
Assume we are given $n : \N$, together with a set
$\{P_0, P_1, \dots, P_{n-1}\}$ of propositions. We want to construct a function
\begin{equation*}
	\Psi_n : \left(\Pi_{i \in \{0,\ldots,n-1\}} \isSemiDec(P_i)\right) \to \Brw.
\end{equation*}
To do this, we take several steps:
\begin{enumerate}
	\item First, we construct a similar function assuming we have access to the data contained in the semidecidability proofs.
	Recall that $\isSemiDec(Q)$ is defined to mean
	\begin{align*}
		&\isSemiDec(Q) \defeq \exists s: \N \to \Bool. \, Q \leftrightarrow \exists i. s_i = 1; \\
		\intertext{by its \emph{data}, we mean the non-truncated version of the same type,}
		&\SemiDec(Q) \defeq \Sigma s: \N \to \Bool. \, Q \leftrightarrow \exists i. s_i = 1.
	\end{align*}
	Concretely, we construct a function
	\begin{equation*}
		\Psistruc_n: \left(\Pi_{i \in \{0,\ldots,n-1\}} \SemiDec(P_i)\right) \to \Brw.
	\end{equation*}
	\item In the second step, we then prove that $\Psistruc_n$ is weakly constant, meaning that it maps any pair of inputs to equal outputs.
	As the codomain is a set, this implies that $\Psistruc_n$ factors through the propositional truncation~\cite{kraus2013hedberg}.
	We therefore get:
	\begin{equation*}
		\Psitrunc_n : \left\| \Pi_{i \in \{0,\ldots,n-1\}} \SemiDec(P_i)\right\| \to \Brw.
	\end{equation*}
	\item Using \emph{finite choice}, i.e., the $n$-fold application of the equivalence $\| A \times B \| \simeq \|A\| \times \|B\|$, we get the equivalence
	\[
          \FChoice_n : \left(\Pi_{i \in \{0,\ldots,n-1\}} \isSemiDec(P_i)\right) \to \left\| \Pi_{i \in \{0,\ldots,n-1\}} \SemiDec(P_i)\right\|.
	\]
	Then, $\Psi_n$ is the composition of $\Psitrunc_n$ with $\FChoice_n$.
\end{enumerate}

The details for the first two steps are as follows.

\subsubsection*{Construction of $\Psistruc_n$:}

Suppose, for each $i<n$, we are given a Boolean sequence $s^i : \N \to \Bool$ that semidecides $P_i$.
We replace each sequence~$s^i$ by an increasing sequence $t^i$ that is defined by $t^i_0 \defeq s^i_0$ and $t^i_{n+1} \defeq t^i_n \lor s^i_{n+1}$; in other words, $t^i$ is the ``upwards normalization'' of $s^i$.
Let $t : \N \to \N$ be the pointwise sum of the $t^i$; i.e., $t_k \defeq t^0_k + t^1_k + \ldots + t^{n-1}_k$, where we implicitly convert $\bff$ to 0 and $\btt$ to 1.

The construction ensures that the sequence $t$ is (weakly) increasing. We set
\begin{equation}\label{eq:construction-of-Psi}
	\Psistruc_n := \,\blim (\lambda k. \,\omega \cdot t_k + k),
\end{equation}
where the final ``${\!} + k$'' ensures that the constructor takes a \emph{strictly} increasing sequence as an argument.

\subsubsection*{Weak constancy of $\Psistruc_n$:}

For the argument that $\Psistruc_n$ is weakly constant, the following observation is useful:
\begin{lemma}[\flink{Lemma-29}]\label{lemma:boolsim}
	If $(s,q)$ and $(s',q')$ are 
        semidecidability data for the same proposition $P$, i.e., terms of the type $\SemiDec (P)$, then $s$~and~$s'$ are bisimilar, i.e., we have $\forall k.\, \exists m.\, s_k \leq s'_m$ and $\forall k.\, \exists m.\, s'_k \leq s_m$.
\end{lemma}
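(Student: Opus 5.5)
By symmetry it suffices to prove the first half, $\forall k.\, \exists m.\, s_k \leq s'_m$, where $\leq$ is the order on $\Bool$ with $0 \leq 1$. Fix $k$. Since $s_k : \Bool$, we can decide whether $s_k = 0$ or $s_k = 1$, and treat the two cases separately.

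If $s_k = 0$, take $m \defeq 0$. Then $s_k \leq s'_0$ holds trivially, because $0$ is the bottom element of $\Bool$.

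If $s_k = 1$, then $\exists i.\, s_i = 1$ holds, witnessed by $i \defeq k$. From this we proceed as follows.
\begin{itemize}
\item Applying $q$ in the backward direction gives $P$.
\item Applying $q'$ in the forward direction then gives $\exists j.\, s'_j = 1$.
\item The goal $\exists m.\, s_k \leq s'_m$ is a proposition, so we may eliminate this truncated existential and obtain an explicit $j$ with $s'_j = 1$.
\item Taking $m \defeq j$ gives $s_k = 1 \leq 1 = s'_j$.
\end{itemize}

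The only point that needs care is the truncation. The truncated witness coming from $q'$ may be eliminated only because the target $\exists m.\, s_k \leq s'_m$ is itself propositionally truncated. We never need to extract an untruncated index, so no choice principle is involved. The reverse simulation is obtained from the same argument with the roles of $(s,q)$ and $(s',q')$ swapped.
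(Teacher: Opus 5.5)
Your proof is correct and follows the same route as the paper's proof. Both reduce to one direction by symmetry, split on whether $s_k$ is $0$ or $1$, and in the case $s_k = 1$ pass through $P$ via $q$ and then use $q'$ to get $\exists m.\, s'_m = 1$. Your remark about eliminating the truncation into the propositional goal spells out the justification that the paper leaves implicit.
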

\begin{proof}
	By symmetry, it suffices to show that $s$ is simulated by~$s'$.
	Let $k : \N$ be given.
	If $s_k = 0$, the statement is trivial.
	If $s_k = 1$,
	then $P$ holds by $q$.
	From $q'$, we therefore get $\exists m. {s'}_m = 1$.
\end{proof}

\begin{lemma}[\flink{Lemma-30}]
  The function $\Psistruc_n$ is weakly constant, i.e.,
  $\forall s,\overline{s}. \, \Psistruc_n(s) = \Psistruc_n(\overline{s})$.
\end{lemma}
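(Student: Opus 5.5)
Fix two inputs $s = (s^i, q^i)_{i<n}$ and $\overline{s} = (\overline{s}^i, \overline{q}^i)_{i<n}$ of $\Pi_{i<n}\SemiDec(P_i)$. Write $t^i, \overline{t}^i$ for the upwards normalizations and $t, \overline{t}$ for their pointwise sums. Both $\Psistruc_n(s)$ and $\Psistruc_n(\overline{s})$ are limits, $\blim(\lambda k.\,\omega\cdot t_k + k)$ and $\blim(\lambda k.\,\omega\cdot \overline{t}_k + k)$. By the constructor $\bbisim$, it suffices to show that these two sequences are bisimilar. By symmetry, I only need that the first is simulated by the second: for every $k$ there is $m$ with $\omega\cdot t_k + k \leq \omega\cdot\overline{t}_m + m$. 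Since the goal is a proposition, I may freely eliminate propositional truncations.

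The plan is to reduce everything to \cref{lemma:boolsim}, lifted to the normalized sequences. First I note that $t^i$ also semidecides $P_i$: $\exists j.\,t^i_j = 1$ holds iff $\exists j.\,s^i_j = 1$. So $(t^i, \cdot)$ and $(\overline{t}^i,\cdot)$ are semidecidability data for the same $P_i$, and \cref{lemma:boolsim} gives, for each $i<n$ and each $k$, some $m_i$ with $t^i_k \leq \overline{t}^i_{m_i}$. Fix $k$. Since $n$ is finite, finite choice (exactly as in the construction of $\FChoice_n$) lets me obtain all $m_0,\dots,m_{n-1}$ simultaneously under the truncation. Set $m \defeq \max(k, m_0, \dots, m_{n-1})$. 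Because each $\overline{t}^i$ is increasing, $t^i_k \leq \overline{t}^i_{m_i} \leq \overline{t}^i_m$ for every $i<n$. Summing over $i$ gives $t_k \leq \overline{t}_m$ as natural numbers, and together with $k \leq m$ this yields $\omega\cdot t_k + k \leq \omega\cdot\overline{t}_m + m$ by monotonicity of multiplication and addition on $\Brw$ (both established in \cite{kraus2023ordinals}).

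The main obstacle is the bookkeeping, not any deep argument. One point is that the bisimulation witnesses exist only propositionally, one per index $i$, so I need the finite-choice step to collect them into a single $m$. The other is ensuring that the bisimilarity hypothesis is stated for the strictly increasing sequences, so that $\bbisim$ applies. Strict increase is immediate because of the ``$+k$'' summand, and the monotonicity facts for $\omega\cdot(-)$ and $(-)+(-)$ on finite arguments are routine.
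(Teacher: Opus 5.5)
Your proposal is correct and follows the paper's proof. Like the paper, you reduce via $\bbisim$ and symmetry to showing $\forall k.\,\exists m.\,\omega\cdot t_k + k \leq \omega\cdot\overline{t}_m + m$, obtain this from \cref{lemma:boolsim}, and spell out what the paper calls straightforward: applying the lemma to the normalized sequences, collecting the finitely many witnesses under the truncation, and taking $m \defeq \max(k, m_0,\dots,m_{n-1})$.
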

\begin{proof}
	Assume that, for each $i < n$, we have two sets of sequences $s^i,{\overline s}^i : \N \to \Bool$ that both semidecide $P_i$.
	Let $t^i$, ${\overline t}^i$ be their respective normalized-up sequences. According to the second step of the previous construction, this defines functions $t, \overline t: \N \to \{0,1,\ldots,n\} $, which gives us sequences $T, \overline T : \N \to \Brw$ defined by $T_k \defeq \omega \cdot t_k + k$ and  $\overline T_k \defeq \omega \cdot \overline t_k + k$. To prove that $\Psistruc_n$ is weakly constant, we need to show that $\overline T$ simulates $T$ (and the symmetric statement, which is analogous).
	Therefore, our goal is to show:
	\begin{equation*}
		\forall k. \, \exists m. \,\omega\cdot t_k + k \leq \omega \cdot \overline t_m + m
	\end{equation*}
	For this, it suffices to show
	$	\forall k. \, \exists m. \,  t_k \leq  \overline t_m$,
	which is straightforward using \cref{lemma:boolsim}.
\end{proof}

This completes the construction of $\Psi_n$.

\begin{definition}[\flink{Definition-31} Characteristic ordinal of a sequence]
	Given a sequence of semidecidable propositions $P : \N \to \Prop$,
	the above construction gives us a sequence $n \mapsto \psiapprox n P$, where $\psiapprox n P : \Brw$ is the function applied to the assumed proof of semidecidability.
	This sequence is (weakly) increasing by construction.
	We define the \emph{characteristic ordinal of $P$}, written $\Psi(P)$, to be
	\begin{equation*}
		\Psi(P) \defeq \blim (\lambda n.\, \psiapprox n P + n).
	\end{equation*}
\end{definition}

\subsection{Normalizing Families}\label{subsec:normalizing-families}

The following observation tells us that we can assume that the family $P$ is downwards (or upwards) closed, which will be a useful technical device:

\begin{lemma}[\flink{Lemma-32-i}]\label{downward-closed-is-not-harmful}
	Let $P : \N \to \Prop$ be a family of propositions.
	\begin{romanenumerate}
		\item Let $\alpha : \Brw$. If each $P_i$ is $\alpha$-decidable, then we can construct a family $Q : \N \to \Prop$ of $\alpha$-decidable propositions such that $\forall k. Q_{k+1} \to Q_k$, and such that $\forall k. P_k \leftrightarrow \forall k. Q_k$.
		\item Let $k$, $n$ be natural numbers. If each $P_i$ is $(\omega \cdot k + n)$-decidable,
		then we can construct a family $Q : \N\to \Prop$ of $(\omega \cdot k + n)$-decidable propositions such that $\forall i. (Q_i \to Q_{i+1})$ and $\exists k. P_k \leftrightarrow \exists k. Q_k$.
	\end{romanenumerate}
\end{lemma}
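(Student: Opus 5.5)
The plan is to take cumulative conjunctions for (i) and cumulative disjunctions for (ii), and then use the binary closure results (\cref{thm:conjclosuremain} and \cref{thm:disjclosureunderomega^2}) to show that these cumulative propositions are still decidable at the same level.

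For (i), I would define $Q_k \defeq P_0 \land P_1 \land \dots \land P_k$ by recursion on $k$, setting $Q_0 \defeq P_0$ and $Q_{k+1} \defeq Q_k \land P_{k+1}$. The implication $Q_{k+1} \to Q_k$ is just projection. For the equivalence, $\forall k. P_k$ gives each $Q_k$ componentwise. Conversely, from $\forall k. Q_k$ we get $P_k$ by projecting out of $Q_k$.

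The only real content is that each $Q_k$ is $\alpha$-decidable. I would prove this by induction on $k$, applying \cref{thm:conjclosuremain} at the successor step to the $\alpha$-decidable $Q_k$ and $P_{k+1}$. One point needs care: the hypothesis ``each $P_i$ is $\alpha$-decidable'' is a family of propositions, so it provides $\forall i.\, \gendec{\alpha}{P_i}$ without choosing witnesses. This is harmless, because \cref{thm:conjclosuremain} is itself a statement between propositions. For each $k$ we only need the finitely many truncated hypotheses for $P_0,\dots,P_k$, and the induction never has to extract witnesses simultaneously. So no choice is needed.

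For (ii), I would dually define $Q_0 \defeq P_0$ and $Q_{k+1} \defeq Q_k \lor P_{k+1}$. The implication $Q_i \to Q_{i+1}$ is left injection. For the equivalence, $\exists k. P_k$ gives $Q_k$ by injecting $P_k$. Conversely, $Q_k$ unfolds, by induction on $k$ and working inside the truncation, to some $P_j$ with $j \le k$. Decidability of $Q_k$ again goes by induction on $k$, now using \cref{thm:disjclosureunderomega^2}. That result is exactly why the statement restricts to $\alpha = \omega\cdot k + n$, since general disjunction closure fails by \cref{prop:maximum-nonconstructive}.

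I expect no step to be a genuine obstacle. The most delicate point is the bookkeeping of truncations in the induction, which is where the formalization would need care.
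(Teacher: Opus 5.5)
Your proposal is correct and is essentially the paper's proof. The paper takes $Q_i$ to be the finite conjunction $\forall_{j\leq i} P_j$ (resp.\ the finite disjunction $\exists_{j\leq i} P_j$) and gets the required decidability by iterating \cref{thm:conjclosuremain} (resp.\ \cref{thm:disjclosureunderomega^2}). Your observation that no choice is needed, because each step only combines propositions, is also right.

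One small quibble with your aside: \cref{prop:maximum-nonconstructive} only shows that binary maxima of non-successors cannot exist, so the $\limMax$ method cannot give general disjunction closure. It does not show that closure of $\alpha$-decidable propositions under $\lor$ actually fails for some $\alpha$. This does not affect your proof.
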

\begin{proof}
	In the first case, we define $Q$ by $Q_i \defeq \forall_{j\leq i} \, P_j$.
	This is a finite conjunction and thus $\alpha$-decidable by \cref{thm:conjclosuremain}.
	In the second case, we set $Q_i \defeq \exists_{j\leq i} \, P_j$, and apply \cref{thm:disjclosureunderomega^2}.
\end{proof}

\subsection{Countable Meets of Semidecidable Propositions}\label{subsec:countable-meets}

Our first application of the characteristic ordinal $\Psi(P)$ is the case of countable meets:
\begin{theorem}[\flink{Theorem-33}]\label{thm:forallPnomega2}
	Let $P: \N \to \Prop$ be a family of semidecidable propositions.
	Then $\forall n. \, P_n$ is $\omega^2$-decidable.
\end{theorem}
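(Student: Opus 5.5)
We will use \cref{downward-closed-is-not-harmful} to pass to a decreasing family, and take the characteristic ordinal $\Psi(Q)$ as the witness. First apply \cref{downward-closed-is-not-harmful}(i) with $\alpha \defeq \omega+1$. Semidecidable means $(\omega+1)$-decidable by \cref{prop:DecOmegaEquivNew}, and finite conjunctions stay semidecidable by \cref{thm:conjclosuremain}. This gives a family $Q$ of semidecidable propositions with $Q_{k+1} \to Q_k$ and $(\forall n.\, P_n) \leftrightarrow (\forall n.\, Q_n)$. It then suffices to prove
\[
  (\forall n.\, Q_n) \longleftrightarrow \omega^2 \leq \Psi(Q).
\]
The goal is a proposition. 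So for any fixed $n$ we may use finite choice to extract semidecidability data $s^0,\dots,s^{n-1}$. We then compute $\psiapprox n Q$ as $\Psistruc_n$ applied to this data, namely $\blim(\lambda k.\, \omega\cdot t_k + k)$, where $t_k$ counts the indices $i<n$ for which a $1$ has appeared in $s^i$ by stage $k$. The key invariant is that $t_k \geq j$ means there are $j$ distinct indices $i<n$ for which $Q_i$ holds.

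\emph{Direction ``$\rightarrow$''.} Assume every $Q_i$ holds. Fix $n$; under the truncation, for each $i<n$ there is a stage at which $s^i$ shows a $1$. Taking the maximum $K$ of these stages gives $t_k = n$ for all $k \geq K$. Hence $\psiapprox n Q \geq \blim(\lambda k.\, \omega\cdot n + k) = \omega\cdot(n+1)$, using that $t$ is increasing and the cocone/limiting rules. Then $\Psi(Q) = \blim(\lambda n.\, \psiapprox n Q + n)$ bounds every $\omega\cdot(n+1)$, and $\omega^2 = \blim(\lambda n.\, \omega\cdot n)$ gives $\omega^2 \leq \Psi(Q)$ by the limiting rule.

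\emph{Direction ``$\leftarrow$''.} Fix $m$; we want $Q_m$. Since $\omega\cdot(m+1) < \omega^2 \leq \Psi(Q)$, \cref{lem:recall-basic-Brw-properties}\cref{lem:x-under-limit-means-x-under-element} gives $n$ with $\omega\cdot(m+1) < \psiapprox n Q + n$. By repeated use of \cref{lem:recall-basic-Brw-properties}\cref{lem:lim-under-suc}, since $\psiapprox n Q$ is a limit, we get $\omega\cdot(m+1) < \psiapprox n Q$. Applying \cref{lem:recall-basic-Brw-properties}\cref{lem:x-under-limit-means-x-under-element} again yields $k$ with $\omega\cdot(m+1) < \omega\cdot t_k + k$.

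Now $t_k$ is a natural number, so we can decide whether $t_k \leq m$. If $t_k \leq m$, then $\omega\cdot t_k + k < \omega\cdot(m+1)$, which contradicts irreflexivity by wellfoundedness. Hence $t_k \geq m+1$, so $Q_i$ holds for at least $m+1$ distinct indices $i<n$. At least one of these indices is $\geq m$, and downward closure of $Q$ gives $Q_m$.

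The main obstacle I expect is the ordinal arithmetic bookkeeping. The facts needed are $\blim(\lambda k.\, \omega\cdot n + k) = \omega\cdot(n+1)$, the strict inequality $\omega\cdot a + k < \omega\cdot(m+1)$ for $a \leq m$, and stripping the finite summand ``$+n$'' off a limit. A further technical point is handling $\psiapprox n Q$ through the truncation: since the goal is a proposition, this works by unfolding $\Psitrunc_n$ on an extracted representative.
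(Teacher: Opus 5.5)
Your overall strategy works, but one step in the direction ``$\leftarrow$'' fails as written. You pass from $\omega\cdot(m+1) < \psiapprox n Q + n$ to $\omega\cdot(m+1) < \psiapprox n Q$ by \cref{lem:recall-basic-Brw-properties}\cref{lem:lim-under-suc}. That lemma strips successors off the right-hand side only when the left-hand side is a limit, and the result is non-strict. From $\omega\cdot(m+1)+1 \leq \psiapprox n Q + n$ with $n \geq 1$ you get $\omega\cdot(m+1) \leq \psiapprox n Q + (n-1)$, and hence only $\omega\cdot(m+1) \leq \psiapprox n Q$. The strict version is false in general, since $\mu < \mu+1$ but not $\mu < \mu$. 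The non-strict version is too weak for your counting: it yields only $t_k \geq m$, and those $m$ true propositions could be exactly $Q_0,\dots,Q_{m-1}$. The fix is to start one level higher. From $\omega\cdot(m+2) < \omega^2 \leq \Psi(Q)$, \cref{lem:recall-basic-Brw-properties}\cref{lem:x-under-limit-means-x-under-element} gives $n$ with $\omega\cdot(m+2) \leq \psiapprox n Q + n$. Stripping the $+n$ (the left-hand side is a limit) gives $\omega\cdot(m+2) \leq \psiapprox n Q$, hence $\omega\cdot(m+1) < \psiapprox n Q$, and you continue as before. The rest is fine: the reduction to a decreasing family with $\alpha = \omega+1$, the bound $\psiapprox n Q \geq \omega\cdot(n+1)$ for ``$\rightarrow$'' (essentially \cref{lemma:firstKeyLemma}), the decidable split on $t_k \leq m$, and the pigeonhole step with downward closure.

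With this repair, your ``$\leftarrow$'' direction takes a different route from the paper. The paper first uses \cref{lemma:cutting} to shift to $\lambda i.\,P_{i+n}$, which still satisfies $\Psi \geq \omega^2$. It then needs only the threshold $\omega\cdot 2$, i.e.\ \cref{lemma:secondKeyLemma} (``some $P_{l+n}$ holds''), and downward closure gives $P_n$. You instead prove a graded Key Lemma directly: exceeding $\omega\cdot(m+1)$ inside $\psiapprox n Q$ forces at least $m+1$ true indices below $n$. This avoids the shifting lemma and its cancellation argument, so your proof is more self-contained. The paper's factorization isolates an ungraded Key Lemma, which it reuses unchanged for countable joins in \cref{thm:existsPnomega2}.
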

To give an example, the twin prime conjecture (cf.~the introduction) is an instance of this theorem, and thus $\omega^2$-decidable.

In order to prove \cref{thm:forallPnomega2}, we need several additional observations about the characteristic ordinal $\Psi(P)$ and its approximations $\psiapprox n P$.
For the rest of this subsection, assume $P: \N \to \Prop$ is a given family of semidecidable propositions.

\begin{lemma}[\flink{Lemma-34} Key Lemma] \label{lemma:secondKeyLemma}
	We have that
	$\exists n. \,\omega \cdot 2 \leq \psiapprox n P$ if and only if $\exists m. \,P_m$.
\end{lemma}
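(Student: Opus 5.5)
Both sides of the equivalence are propositions, so in each direction we may unfold the truncations. In particular, for a fixed $n$ we may pick semidecidability data $s^i$ for $P_0,\dots,P_{n-1}$ (finite choice, as in the construction of $\Psi_n$). We may then compute $\psiapprox n P = \Psistruc_n(s) = \blim(\lambda k.\,\omega\cdot t_k + k)$, where $t_k = \sum_{i<n} t^i_k$ and the $t^i$ are the upward normalizations of the $s^i$. The whole argument reduces to understanding when this limit is at least $\omega\cdot 2$.

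\emph{Right to left.} Assume $P_m$ for some $m$, and take $n \defeq m+1$ with data $s^0,\dots,s^m$. Since $P_m$ holds, there is $j$ with $s^m_j = 1$, so $t^m_k = 1$ for all $k\geq j$, and hence $t_k \geq 1$ for $k \geq j$. Then $\omega\cdot t_j + j \geq \omega$, and the next term satisfies $\omega\cdot t_{j+1} + (j+1) \geq \omega + 1$. Therefore $\psiapprox n P \geq \omega + 1$. Because $\psiapprox n P$ is a limit, it follows that $\psiapprox n P \geq \omega+\omega = \omega\cdot 2$. To justify this step I would use either \cref{lemma:RU23}\cref{item:up-3} or the fact, noted in the proof of \cref{thm:some-max-exist}, that $\blim h \geq h_m + \omega$. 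Concretely, $\blim h \geq h_j + \omega \geq \omega + \omega$.

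\emph{Left to right.} Assume $\omega\cdot 2 \leq \psiapprox n P = \blim(\lambda k.\,\omega\cdot t_k+k)$. Since $\omega < \omega\cdot 2$, \cref{lem:recall-basic-Brw-properties}\cref{lem:x-under-limit-means-x-under-element} gives some $k$ with $\omega < \omega\cdot t_k + k$. Now $t_k$ is a natural number, so we can decide whether it is $0$. If $t_k = 0$, then $\omega\cdot t_k + k = k$ is finite, which contradicts $\omega < k$. So $t_k \geq 1$, which means some $t^i_k = 1$ with $i<n$. Since $t^i$ is the upward normalization of $s^i$, we can find $j\leq k$ with $s^i_j = 1$, and therefore $P_i$ holds. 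This yields $\exists m.\,P_m$ inside the truncation.

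The main obstacle is bookkeeping rather than mathematics. We must carefully justify replacing $\psiapprox n P$ by $\Psistruc_n$ applied to concrete data; this is allowed because we are proving a proposition and $\Psi_n$ is defined by factoring through the truncation, so the computation rule applies to $|s|$. We also need the small ordinal-arithmetic facts $\omega\cdot 0 + k = k$ and $\omega\cdot t + k \geq \omega$ for $t\geq 1$, together with the finiteness argument via \cref{lem:recall-basic-Brw-properties}\cref{item:finite-is-dec}. All of these follow by straightforward induction.
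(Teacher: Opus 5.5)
Your proposal is correct and follows the paper's proof. Both evaluate $\psiapprox{n}{P}$ on concrete data as $\blim(\lambda k.\,\omega\cdot t_k+k)$ and extract a $k$ with $t_k\geq 1$ for the forward direction (the paper uses \cref{lem:recall-basic-Brw-properties}\cref{lem:lim-under-lim-simulation} where you use \cref{lem:x-under-limit-means-x-under-element}, which makes no difference), and both take $n\defeq m+1$ for the converse. One citation fix: the step from $\omega+1\leq\psiapprox{n}{P}$ to $\omega\cdot 2\leq\psiapprox{n}{P}$ follows from \cref{lemma:RU23}\cref{item:up-1}, since $\upa(\omega+1)=\omega+\omega$, not from item (iii); your bound $\blim h\geq h_j+\omega$ already suffices on its own.
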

\begin{proof}
	The following arguments rely heavily on the fact that the respective goals are propositions, which allows us to replace $\exists$ by $\Sigma$ several (but finitely many) times.

	We first prove the left to right direction. We can assume that we are given an $n$ such that $\omega \cdot 2 \leq \psiapprox{n}{P}$.
	If we unfold the latter expression using the notation and construction of \cref{subsec:characteristic-ordinal}, and especially recall \eqref{eq:construction-of-Psi},
	this assumption becomes
	\begin{equation*}
		\omega \cdot 2 \leq \blim (\lambda k. \omega \cdot t_k + k).
	\end{equation*}
	Using \cref{lem:recall-basic-Brw-properties}\cref{lem:lim-under-lim-simulation}, this implies that there is $k$ such that $\omega + 1 \leq \omega \cdot t_k + k$, which implies $t_k \geq 1$.
	This allows us to check what the least $m$ is such that $t_m \geq 1$, and the construction ensures that $P_m$ holds.

	For the right to left direction,
	assume we have $m$ such that $P_m$ holds.
	This implies that the sequence witnessing the semidecidability of $P_m$ contains the digit $1$, and unfolding the construction, this implies
	$\psiapprox {m+1} P \geq \omega \cdot 2$.
\end{proof}

For a given natural number $k$, the sequence $\lambda i. \, P_{i + k}$ is the sequence $P$ without the first $k$ entries.
By construction, we trivially have $\Psi(\lambda i. \, P_{i + k}) \leq \Psi(P)$.
The other direction does not hold in general, but we have the following:

\begin{lemma}[\flink{Lemma-35}] \label{lemma:cutting}
	For any natural number $k$, we have
	\begin{equation*}
		\Psi(P) \geq \omega^2
		 \; \rightarrow \;
		\Psi(\lambda i. \, P_{i + k}) \geq \omega^2.
	\end{equation*}
\end{lemma}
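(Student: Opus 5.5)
We unfold both characteristic ordinals into statements about the finite counting sequences of \cref{subsec:characteristic-ordinal} and then compare the counts. The goal is a proposition, so we may fix, for every $i$, semidecidability data $s^i$ for $P_i$. By weak constancy of $\Psistruc_n$, we may compute $\psiapprox{n}{P}$ from $s^0,\dots,s^{n-1}$ and $\psiapprox{n}{\lambda i.\,P_{i+k}}$ from $s^k,\dots,s^{k+n-1}$. Write $t^{(n)}_j$ for the number of indices $i<n$ such that the normalized sequence $t^i$ is $1$ at stage $j$. Write $u^{(n)}_j$ for the analogous count over the indices $k\le i<k+n$. Then $\psiapprox nP=\blim(\lambda j.\,\omega\cdot t^{(n)}_j+j)$, and the shifted approximation uses $u^{(n)}$.

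\emph{Step 1 (extracting counts).} We show that $\omega\cdot(m+1)\le\Psi(P)$ implies that there exist $n$ and $j$ with $t^{(n)}_j\ge m$. Since $\omega\cdot m+1\le\omega\cdot(m+1)$, \cref{lem:recall-basic-Brw-properties}\cref{lem:x-under-limit-means-x-under-element} gives an $n$ with $\omega\cdot m+1\le\psiapprox nP+n$. Peeling off the finite summand with \cref{lem:recall-basic-Brw-properties}\cref{lem:lim-under-suc} (when $m\ge1$ the left side is a successor of a limit, so it stays below the limit part) yields $\omega\cdot m+1\le\psiapprox nP$. Applying \cref{lem:recall-basic-Brw-properties}\cref{lem:x-under-limit-means-x-under-element} again gives a $j$ with $\omega\cdot m<\omega\cdot t^{(n)}_j+j$. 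We then need the arithmetic fact that $\omega\cdot m\le\omega\cdot c+j$ with $c,j$ finite implies $c\ge m$. This follows by induction on $m$ using \cref{lem:recall-basic-Brw-properties}\cref{lem:lim-under-suc} and the left cancellability of multiplication by $\omega$ already used in Proposition 10. The case $m=0$ is trivial.

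\emph{Step 2 (the counting argument).} Fix a target $m$ and apply Step 1 with $m+k$, which is allowed because $\omega\cdot(m+k+1)\le\omega^2\le\Psi(P)$. This gives $n$ and $j$ with $t^{(n)}_j\ge m+k$. We may assume $n\ge k$: otherwise $t^{(n)}_j\le n<k$, which is absurd. At most $k$ of the counted indices lie in $\{0,\dots,k-1\}$, so at least $m$ of them lie in $\{k,\dots,n-1\}$. Hence $u^{(n-k)}_j\ge m$. It follows that $\omega\cdot m\le\omega\cdot u^{(n-k)}_j+j\le\psiapprox{n-k}{\lambda i.\,P_{i+k}}\le\Psi(\lambda i.\,P_{i+k})$.

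\emph{Step 3 (conclusion).} Since $\omega^2=\blim(\lambda m.\,\omega\cdot m)$, the constructor $\llimiting$ combined with Step 2 gives $\omega^2\le\Psi(\lambda i.\,P_{i+k})$.

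We expect the main obstacle to be Step 1. The inequality chain $\omega\cdot m\le\psiapprox nP+n\Rightarrow t^{(n)}_j\ge m$ requires careful handling of the finite offsets ``$+n$'' and ``$+j$'' together with the strictness conventions. It also requires justifying that fixing one choice of data for all $P_i$ is compatible with the truncated definition of $\Psi_n$ via $\FChoice_n$, which follows from the factorization of $\Psistruc_n$ through $\Psitrunc_n$. The counting in Step 2 is elementary. If the offsets turn out to be awkward, we would first try to prove the more robust statement $\omega\cdot c+j\ge\omega\cdot m\to c\ge m$ for natural numbers $c,j$ separately, by induction on $m$.
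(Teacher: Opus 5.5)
Your Step 1 is false as stated. Take every $P_i$ to be false, semidecided by the constantly-$0$ sequence. Then all counts are $0$, so $\psiapprox{n}{P}=\blim(\lambda j.\,j)=\omega$ for every $n$, and $\Psi(P)=\blim(\lambda n.\,\omega+n)=\omega\cdot 2$. The hypothesis of Step 1 therefore holds for $m=1$, yet no $t^{(n)}_j\ge 1$ exists.

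The broken step is the peeling. From $\omega\cdot m+1\le\psiapprox{n}{P}+n$ with $n\ge 1$, \cref{lem:recall-basic-Brw-properties}\cref{lem:suc-mono-inv} and \cref{lem:recall-basic-Brw-properties}\cref{lem:lim-under-suc} only give $\omega\cdot m\le\psiapprox{n}{P}$, because the $+1$ is absorbed by the $+n$. In the example, $\omega+1\le\psiapprox{1}{P}+1$ holds but $\omega+1\not\le\psiapprox{1}{P}$. From $\omega\cdot m\le\psiapprox{n}{P}$ you only get some count $\ge m-1$, since a count that stabilizes at $c$ gives $\psiapprox{n}{P}=\omega\cdot(c+1)$.

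The correct statement is that $\omega\cdot(m+2)\le\Psi(P)$ implies $\exists n,j.\,t^{(n)}_j\ge m$; this is consistent with \cref{thm:existsPnomega2} for $m=1$. Since you assume $\Psi(P)\ge\omega^2$, the repair is cheap. In Step 2, invoke the corrected claim for $m+k$ using $\omega\cdot(m+k+2)\le\omega^2$, and your pigeonhole count and Step 3 then go through unchanged.

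The second gap is the opening move: ``the goal is a proposition, so we may fix, for every $i$, data $s^i$.'' Passing from $\forall i.\,\isSemiDec(P_i)$ to $\trunc{\Pi i.\,\SemiDec(P_i)}$ is exactly countable choice, which the paper does not assume. Eliminating truncations into a proposition only unpacks finitely many of them at a time. If your move were allowed, dovetailing the fixed sequences would make $\exists i.\,P_i$ semidecidable outright, which the paper only obtains under countable choice (\cref{lem:cc-implies-closure-under-joins}). This is precisely why $\Psi_n$ is built using only $\FChoice_n$. Your Step 1, which speaks about $t^{(n)}$ for arbitrary $n$, is not even well-formed without such global data.

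The fix is to reorder the argument:
\begin{enumerate}
\item Fix $m$ first, so that the goal $\omega\cdot m\le\Psi(\lambda i.\,P_{i+k})$ is a proposition.
\item Obtain $n$ from \cref{lem:recall-basic-Brw-properties}\cref{lem:x-under-limit-means-x-under-element} applied to $\Psi(P)$; this is pure ordinal reasoning and needs no data.
\item Only then use $\FChoice_n$ to obtain data for $P_0,\dots,P_{n-1}$.
\item Compute both $\psiapprox{n}{P}$ and $\psiapprox{n-k}{\lambda i.\,P_{i+k}}$ from that data via the factorization of $\Psistruc$ through $\Psitrunc$.
\end{enumerate}

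With both repairs, your argument is a valid alternative to the paper's. The paper instead reduces to $k=1$ and unfolds data only once, to prove $\psiapprox{n}{P}\le\psiapprox{n}{\lambda i.\,P_{i+1}}+\omega$. After that it argues purely with ordinal arithmetic: simulation at index $m+1$, peeling the finite summand, and cancelling $+\omega$. Your pigeonhole count handles all $k$ at once and is more explicit about why the lemma holds. The cost is the offset bookkeeping, which is exactly where the error crept in.
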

\begin{proof}
	It suffices to prove the lemma for $k=1$, as the general case follows by finite iteration.
	By construction, for all natural numbers $n$, we have
	\begin{equation}\label{eq:psi-shifted-inequality}
		\psiapprox{n}{P} \leq \psiapprox{n}{\lambda i. \, P_{i+1}}  + \omega.
	\end{equation}

	We assume $\Psi(P) \geq \omega^2$.
	For each $m$, by \cref{lem:recall-basic-Brw-properties}\cref{lem:lim-under-lim-simulation} applied to the index $m+1$, there exists an $n$ such that $\psiapprox{n}{P} + n \geq \omega \cdot (m+1)$.
	Combining this with \eqref{eq:psi-shifted-inequality} gives
	${{\psiapprox{n}{\lambda i. \, P_{i+1}}} + {\omega} + {n}} \geq {{\omega \cdot m} + \omega}$.
	Using \cref{lem:recall-basic-Brw-properties}\cref{lem:lim-under-suc} and cancellation of addition with $\omega$ (a consequence of \cref{lem:recall-basic-Brw-properties}\eqref{lem:lim-under-suc}--\eqref{lem:lim-under-lim-simulation}),
	we get $\psiapprox{n}{\lambda i. \, P_{i+1}} \geq \omega \cdot m$,
	which means that the sequence $m \mapsto \omega \cdot m$ is simulated by the sequence $n \mapsto \psiapprox{n}{\lambda i. \, P_{i+1}} + n$ as required.
\end{proof}

Next we observe that, if we know whether $P_0$ to $P_n$ hold, we can calculate explicitly what $\psiapprox n P$ is.
We only require the following special case:

\begin{lemma}[\flink{Lemma-36}]\label{lemma:firstKeyLemma}
	Assume that $P$ is downwards closed and $P_n$ holds. Then, we have $\psiapprox n P = \omega \cdot (n+1)$.
	\begin{proof}
		Since $P$ is downwards closed, all $P_i$ (with $i \leq n$) are true, and their corresponding sequences contain the digit $1$.
		Tracking the definition of $\Psi$, we get the claimed equation.
	\end{proof}
\end{lemma}

With the above lemmas at hand, we are ready to prove that countable meets of semidecidable propositions are $\omega^2$-decidable:

\begin{proof}[Proof of \cref{thm:forallPnomega2}]
By \cref{downward-closed-is-not-harmful}, we can assume that the sequence $P : \N \to \Prop$ is downwards closed, i.e., that $P_{k+1} \to P_k$ for each $k$.
We claim
\begin{equation}\label{eq:goal-of-inf-meet-thm}
	\forall i. P_i \longleftrightarrow \Psi(P) \geq \omega^2,
\end{equation}
which shows that $\forall i. P_i$ is $\omega^2$-decidable.

The direction ``$\to$'' is directly implied by \cref{lemma:firstKeyLemma}.
For the direction ``$\leftarrow$'', we assume $\Psi(P) \geq \omega^2$.
For an arbitrary natural number $n$, we need to show that $P_n$ holds.
By \cref{lemma:cutting},
we have $\Psi(\lambda i. \, P_{i+n}) \geq \omega^2$,
and \cref{lem:recall-basic-Brw-properties}\cref{lem:lim-under-lim-simulation} implies that there exists a $j$ such that
$\psiapprox{j}{\lambda i. \, P_{i+n}} + j \geq \omega \cdot 2$ and therefore, by \cref{lem:recall-basic-Brw-properties}\cref{lem:lim-under-suc},
$\psiapprox{j}{\lambda i. \, P_{i+n}} \geq \omega \cdot 2$.
\cref{lemma:secondKeyLemma} shows that there exists $l$ such that $P_{l+n}$ holds, and, as $P$ is downwards closed, this means that $P_n$ holds.
\end{proof}

Immediate consequences are the following:
\begin{corollary}[\flink{Corollary-37-1}]\label{cor:neg-of-semidec-is-omega2}
	The countable meet of a family of decidable propositions is $\omega^2$-decidable.
	The negation of a semidecidable proposition is also $\omega^2$-decidable.
\end{corollary}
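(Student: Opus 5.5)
Both parts reduce directly to \cref{thm:forallPnomega2}, so the plan is to exhibit each proposition as a countable meet of semidecidable propositions.

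For the first statement, let $D : \N \to \Prop$ be a family of decidable propositions. By \cref{prop:DecOmegaEquivNew}, each $D_n$ is $1$-decidable. Alternatively, and more directly, I would note that every decidable proposition is semidecidable: given a decision $D_n \lor \neg D_n$, take the constant sequence $s \defeq \lambda i.\, 1$ if $D_n$ holds and $s \defeq \lambda i.\, 0$ otherwise. Then $D_n \leftrightarrow \exists i.\, s_i = 1$. Since $\isSemiDec(D_n)$ is a proposition, we may eliminate the truncation hidden in $D_n \lor \neg D_n$. Hence $D$ is a family of semidecidable propositions, and \cref{thm:forallPnomega2} gives that $\forall n.\, D_n$ is $\omega^2$-decidable.

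For the second statement, let $P$ be semidecidable. Since the goal is a proposition, we may assume we are given an explicit $s : \N \to \Bool$ with $P \leftrightarrow \exists i.\, s_i = 1$. Then
\[
  \neg P \;\longleftrightarrow\; \neg \exists i.\, s_i = 1 \;\longleftrightarrow\; \forall i.\, s_i = 0,
\]
where the second equivalence is the constructively valid interchange of negation with the existential quantifier, together with the fact that $\neg(s_i = 1) \leftrightarrow s_i = 0$ for Booleans. Each $s_i = 0$ is a decidable proposition, because equality on $\Bool$ is decidable. Thus $\neg P$ is logically equivalent to a countable meet of decidable propositions, and the first part applies.

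Finally, I would observe that $\alpha$-decidability is invariant under logical equivalence of propositions: if $Q \leftrightarrow Q'$ and $Q \leftrightarrow \alpha \leq y$, then $Q' \leftrightarrow \alpha \leq y$. This transfers $\omega^2$-decidability from $\forall i.\, s_i = 0$ to $\neg P$.

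None of these steps is a serious obstacle. The only points needing care are the small truncation eliminations, which are justified because the goals are propositions, and the use of the constructively valid direction of the quantifier–negation interchange.
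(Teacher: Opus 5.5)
Your proposal is correct and follows the paper's own argument. Decidable propositions are semidecidable, so \cref{thm:forallPnomega2} applies; and $\neg P$ is equivalent to the countable meet $\forall i.\, s_i = 0$ of decidable propositions for any defining sequence $s$, with your truncation eliminations and transport along logical equivalence spelling out steps the paper leaves implicit.
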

\begin{proof}
	For the first statement, observe that every decidable proposition is also semidecidable; and for the second, note that the negation of a semidecidable proposition is true if and only if all digits in any defining sequence are $0$.
\end{proof}

\subsection{Countable Joins of Semidecidable Propositions}\label{subsec:countable-joins}

Having treated countable meets, the next case of interest is the one of countable joins, in the following form:

\begin{theorem}[\flink{Theorem-38}]\label{thm:existsPnomega2}
	Let $P: \N \to \Prop$ be a family of semidecidable propositions. Then, $\exists n. \, P_n$ is $(\omega \cdot 3)$-decidable.
\end{theorem}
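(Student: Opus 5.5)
The plan is to use the characteristic ordinal $\Psi(P)$ directly as the witness, and to prove
\begin{equation*}
	\exists n. P_n \longleftrightarrow \omega \cdot 3 \leq \Psi(P).
\end{equation*}
The key tool is \cref{lemma:secondKeyLemma}, which states that $\exists m. P_m$ holds if and only if $\exists n.\, \omega\cdot 2 \leq \psiapprox n P$. It therefore suffices to relate the condition $\omega \cdot 3 \leq \Psi(P)$ to the existence of an $n$ with $\psiapprox n P \geq \omega \cdot 2$.

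For ``$\rightarrow$'', assume $P_m$ holds. By \cref{lemma:secondKeyLemma} there is some $n$ with $\psiapprox n P \geq \omega\cdot 2$. Since the approximations $\psiapprox{n'}{P}$ are weakly increasing in $n'$, every later term of the sequence $\lambda n'.\, \psiapprox{n'}{P} + n'$ is at least $\omega \cdot 2 + k$ for unboundedly many finite $k$. The limit of these terms is therefore at least $\omega\cdot 2 + \omega = \omega\cdot 3$. To make this precise, I would show that $\omega \cdot 3 = \blim(\lambda k.\, \omega\cdot 2 + k)$ is simulated by our sequence: given $k$, choose the index $n + k$.

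For ``$\leftarrow$'', assume $\omega\cdot 3 \leq \Psi(P)$. The limit $\omega\cdot 3$ is the limit of $\lambda k.\, \omega\cdot 2 + k$, so \cref{lem:recall-basic-Brw-properties}\cref{lem:lim-under-lim-simulation} applied at index $k = 1$ yields some $n$ with $\omega \cdot 2 + 1 \leq \psiapprox n P + n$. Here the main obstacle appears: the finite offset $+n$ must be cancelled. The first approach I would try exploits the structure of $\psiapprox n P$, which unfolds as $\blim(\lambda k.\, \omega\cdot t_k + k)$ and is thus a limit or zero. The argument has three steps:
\begin{itemize}
	\item From $\omega\cdot 2 + 1 \leq \lambda + n$ with $\lambda$ a non-successor, I obtain $\omega \cdot 2 \leq \lambda$, using \cref{lemma:RU23}\cref{item:up-2} together with $\dwa(\lambda + n) = \lambda$. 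Alternatively, one could repeatedly apply \cref{lem:recall-basic-Brw-properties}\cref{lem:lim-under-suc}, since the left-hand side is not a limit.
	\item A cleaner route, avoiding the successor issue, is to pass instead through index $k=0$. This gives $\omega\cdot 2 \leq \psiapprox n P + n$, and rounding down via \cref{lemma:RU23}\cref{item:up-2} gives $\omega \cdot 2 \leq \dwa(\psiapprox n P + n) = \psiapprox n P$.
	\item Then \cref{lemma:secondKeyLemma} delivers $\exists m. P_m$.
\end{itemize}
The needed fact $\dwa(\lambda + n) = \lambda$ for non-successors $\lambda$ follows by induction on $n$ from the defining equations of $\dwa$.

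Why $\omega\cdot 3$ rather than $\omega \cdot 2$? Without countable choice we cannot turn the family of semidecidability witnesses into a single binary sequence. The characteristic ordinal is built precisely to sidestep this, through finite choice at each stage. The price is the extra $\omega$ layer contributed by the weak constancy and truncation step.

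In order, I will:
\begin{enumerate}
	\item record that $\psiapprox n P$ is a non-successor and that $\dwa(\lambda + n) = \lambda$;
	\item prove the ``$\leftarrow$'' direction via the simulation and rounding argument above;
	\item prove the ``$\rightarrow$'' direction using \cref{lemma:secondKeyLemma} and monotonicity of the approximations.
\end{enumerate}
All existentials can be eliminated into $\Sigma$, since the goal is a proposition.
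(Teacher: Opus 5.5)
Your proof is correct and follows the paper's argument: take $\Psi(P)$ as the witness, use \cref{lem:recall-basic-Brw-properties}\cref{lem:lim-under-lim-simulation} to show that $\Psi(P) \geq \omega\cdot 3$ is equivalent to $\exists k.\, \psiapprox{k}{P} \geq \omega\cdot 2$, and conclude with \cref{lemma:secondKeyLemma}. The only difference is minor: the paper removes the offset $+n$ by repeatedly applying \cref{lem:recall-basic-Brw-properties}\cref{lem:lim-under-suc}, which works because the limit $\omega\cdot 2$ is on the left, so it does not need your additional fact that $\psiapprox{n}{P}$ is a non-successor for the rounding-down step.
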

\begin{proof}
	We show
	\begin{equation*}
		\Psi(P) \geq \omega \cdot 3 \, \longleftrightarrow \, \exists n. \, P_n.
	\end{equation*}

	Note that the left-hand side is equivalent to $\exists k. \, \psiapprox{k}{P} \geq \omega \cdot 2$: Any sequence containing an element above $\omega \cdot 2$ has a limit of at least $\omega \cdot 3$, and \cref{lem:recall-basic-Brw-properties}\cref{lem:lim-under-lim-simulation} together with \cref{lem:recall-basic-Brw-properties}\cref{lem:lim-under-suc} shows the other direction.
	The statement is therefore given by \cref{lemma:secondKeyLemma}.
\end{proof}

\subsection{Quantifier Alternations}\label{subsec:quantifier-alternations}

Having discussed the decidability of $\forall i. \, P_i$ and $\exists i. \, P_i$ for semidecidable families,
a natural next step is to study more general formulas of the arithmetical hierarchy, i.e., multiple quantifiers and quantifier alternations.
We do not know whether the general case has a satisfactory solution;
here, we only present a special case that generalizes the decidability question of the search for counter-examples of relatively complicated problems.

As a preparation, we record the following observation that straightforwardly follows from the construction of the characteristic ordinal in \cref{subsec:characteristic-ordinal}.
\begin{lemma}[\flink{Lemma-39}]\label{lem:essentiallimit1-inproofOfOmega^2}
	For two sequences of semidecidable propositions $P,Q : \N \to \Prop$ with $\forall n. \,P_n \to Q_n$, we have $\Psi(P) \leq \Psi(Q)$. \qed
\end{lemma}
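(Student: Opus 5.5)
Since $\Psi(P) = \blim(\lambda n.\, \psiapprox n P + n)$ and $\Psi(Q) = \blim(\lambda n.\, \psiapprox n Q + n)$, the constructor $\llimiting$ together with $\lcocone$ reduces the goal to a pointwise comparison. It suffices to show $\psiapprox n P \leq \psiapprox n Q$ for every $n$. Then $\psiapprox n P + n \leq \psiapprox n Q + n$ by monotonicity of addition, and each term of the first sequence is below the corresponding term of the second, which lies below $\Psi(Q)$.

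For the pointwise inequality, note that we are proving a proposition. So we may fix semidecidability data for $P_0,\ldots,P_{n-1}$ and $Q_0,\ldots,Q_{n-1}$, i.e.\ sequences $s^i$ and $r^i$. By weak constancy (Lemma 30), $\psiapprox n P = \Psistruc_n(s)$ and $\psiapprox n Q = \Psistruc_n(r)$ for these particular choices. Let $t^i, u^i$ be the upward normalizations, and $t_k = \sum_{i<n} t^i_k$, $u_k = \sum_{i<n} u^i_k$. We then need
\[
  \blim(\lambda k.\, \omega\cdot t_k + k) \leq \blim(\lambda k.\, \omega \cdot u_k + k),
\]
which by $\llimiting$ and $\lcocone$ follows from $\forall k.\, \exists m.\, t_k \leq u_m$ together with $m \geq k$. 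Indeed, $\omega \cdot t_k + k \leq \omega\cdot u_m + m$ holds once $t_k \leq u_m$ and $k \leq m$, using monotonicity of multiplication and addition.

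The combinatorial step is the heart of the argument, though a mild one. Fix $k$. For each $i<n$ with $t^i_k = 1$, some $s^i_j = 1$ with $j \leq k$, so $P_i$ holds, hence $Q_i$ holds, hence $\exists m_i.\, r^i_{m_i} = 1$, so $u^i_{m'} = 1$ for all $m' \geq m_i$. This is the analogue of Lemma 29, with the hypothesis $P_i \to Q_i$ replacing equality of propositions. Since there are only finitely many $i$ and each $t^i_k$ is decidable, we can extract the finitely many witnesses $m_i$ under the truncation (finite choice again). We take $m \defeq \max(k, m_0,\ldots)$. Because each $u^i$ is increasing, $u^i_m \geq t^i_k$ for every $i<n$, and summing gives $t_k \leq u_m$.

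The main obstacle is bookkeeping rather than mathematics. The inequality must be stated for the truncated $\Psi_n$, so one has to be careful that eliminating out of $\FChoice_n$ and $\Psitrunc_n$ is legitimate; it is, because the target $\leq$ is a proposition. Beyond that, everything reduces to the monotonicity facts above.
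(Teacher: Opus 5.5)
Your proof is correct and takes essentially the route the paper intends. The paper omits the argument as a straightforward consequence of the construction: the binary meta-lemma from \cref{sec:Agda-technical-discussion} reduces $\psiapprox{n}{P} \leq \psiapprox{n}{Q}$ to untruncated data, and the simulation argument of \cref{lemma:boolsim} and Lemma~30 then goes through with $P_i \to Q_i$ in place of a single proposition, exactly as you do, with your choice $m \defeq \max(k, m_0, \ldots)$ giving both $t_k \leq u_m$ and $k \leq m$.
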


This allows us to show:
\begin{theorem}[\flink{Theorem-40}]\label{thm:existsforallSemidec}
	If $P: \N \times \N \to \Prop$ is a family of semidecidable propositions such that, for all $m$ and $n$, we have $ P(n , m) \to P(n,m+1)$, then $\exists m. \, \forall n. \, P(n,m)$ is $(\omega^2 + \omega)$-decidable.
\end{theorem}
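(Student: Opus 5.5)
For each $m$, I would set $P_m \defeq \lambda n.\, P(n,m)$, a family of semidecidable propositions, and consider the sequence $m \mapsto \Psi(P_m)$. The monotonicity hypothesis $P(n,m) \to P(n,m+1)$ together with \cref{lem:essentiallimit1-inproofOfOmega^2} makes this sequence weakly increasing, and the sequence $m \mapsto \Psi(P_m) + m$ is strictly increasing. The witness ordinal is then
\[ y \defeq \blim\bigl(\lambda m.\, \Psi(P_m) + m\bigr), \]
and I would aim to show $\exists m.\, \forall n.\, P(n,m) \longleftrightarrow \omega^2 + \omega \leq y$. The key auxiliary fact is the proof of \cref{thm:forallPnomega2}, which gives $\forall n.\, P(n,m) \leftrightarrow \Psi(P_m) \geq \omega^2$. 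The downward-closure reduction there replaces $P_m$ by $Q_m \defeq \lambda n.\, \forall_{j\leq n} P(j,m)$. I would apply that replacement here too: it preserves the monotonicity in $m$, and it does not change $\forall n$. So without loss of generality each $P_m$ is downwards closed in $n$.

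For the direction ``$\to$'', I would assume $m$ with $\forall n.\, P(n,m)$, which gives $\Psi(P_m) \geq \omega^2$. Then $y \geq \Psi(P_{m'}) + m'$ for all $m' \geq m$, and each of these terms is at least $\omega^2 + m'$ by monotonicity. Hence $y \geq \blim(\lambda k.\, \omega^2 + k) = \omega^2 + \omega$, by the limiting and cocone rules.

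For the direction ``$\leftarrow$'', I would assume $\omega^2 + \omega \leq y$. Since $\omega^2 + 1 < \omega^2 + \omega \leq y$, \cref{lem:recall-basic-Brw-properties}\cref{lem:x-under-limit-means-x-under-element} gives some $m$ with $\omega^2 + 1 < \Psi(P_m) + m$, so in particular $\omega^2 + 1 \leq \Psi(P_m) + m$. Because $\Psi(P_m)$ is a limit, we can strip the finite summand $m$ with \cref{lem:recall-basic-Brw-properties}\cref{lem:lim-under-suc}, applied repeatedly: writing $\omega^2 = \blim(\lambda k.\,\omega\cdot k)$ and $\Psi(P_m) + m$ as $m$ successors of a limit, we peel successors off the right-hand side. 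Alternatively, \cref{lemma:RU23}\cref{item:up-2} together with $\dwa(\Psi(P_m)+m) = \Psi(P_m)$ gives $\omega^2 \leq \Psi(P_m)$ directly. Either route yields $\Psi(P_m) \geq \omega^2$, and hence $\forall n.\, P(n,m)$ by the characterization above.

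I expect the main obstacle to be bookkeeping. We must check that the equivalence $\forall n.\, P_n \leftrightarrow \Psi(P) \geq \omega^2$ from the proof of \cref{thm:forallPnomega2} can be used as a lemma about $\Psi$ of the downward-normalized family. We must also check that this normalization is compatible with \cref{lem:essentiallimit1-inproofOfOmega^2}, i.e.\ that $Q(n,m) \to Q(n,m+1)$, which holds pointwise. Finally, $\Psi$ depends on the truncated semidecidability proofs, so $y$ must be defined from a fixed family of proofs $\forall n\, m.\, \isSemiDec(P(n,m))$, which is available by assumption.
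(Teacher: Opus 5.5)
Your proposal is correct and follows the paper's proof. You take the same witness, the limit of $m \mapsto \Psi(\lambda n.\, P(n,m)) + m$, made increasing via \cref{lem:essentiallimit1-inproofOfOmega^2}, and you characterize each $\forall n.\,P(n,m)$ by $\Psi(\lambda n.\,P(n,m)) \geq \omega^2$ as in \eqref{eq:goal-of-inf-meet-thm}. You also spell out two points the paper leaves implicit: that this characterization needs the downward normalization $\forall_{j\leq n} P(j,m)$, which preserves monotonicity in $m$, and how the finite summand $+m$ is stripped off in the ``$\leftarrow$'' direction.
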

\begin{proof}
	For every $m : \N$ we have, by the construction of \eqref{eq:goal-of-inf-meet-thm}, that $\forall n. \, P(n,m)$ is logically equivalent to $\Psi(\lambda n. \, P(n,m)) \geq \omega^2$.
	By the upwards closure assumption of the theorem statement and \cref{lem:essentiallimit1-inproofOfOmega^2},
	we get that the sequence
	\begin{equation*}
		m \mapsto \Psi(\lambda n. \, P(n,m)) + m
	\end{equation*}
	is increasing, and the summand ``${\!} + m$'' ensures that it is \emph{strictly} increasing.
	If we write $\lambda$ for the limit of this sequence, it is easy to observe that
	\begin{equation*}
		\lambda \geq \omega^2 + \omega \, \longleftrightarrow \, \exists m. \, \forall n. \, P(n,m),
	\end{equation*}
	which proves the theorem.
\end{proof}

As an instance of this, we see that the existence of a counterexample for the twin prime conjecture is $(\omega^2 + \omega)$-decidable.
Indeed, 
writing $P(n,m)$ to mean that ${n \geq m}$ implies that not both the numbers $n$ and $n+2$ are prime, 
a counterexample to the twin prime conjecture consists of a number $m$ such that $\forall n. \, P(n,m)$. Note that the proposition $P(n,m)$ is decidable and therefore semidecidable, and that it is easy to check that $P(n,m) \to P(n,m+1)$ holds.

\section{Ordinal Decidability and Countable Choice}\label{sec:CC}

In the previous section, we showed in \cref{thm:existsPnomega2} that the countable join of semidecidable propositions is $(\omega \cdot 3)$-decidable.
Without further assumptions, it does not seem possible to show that this join is semidecidable, as such a closure property would imply \emph{Escard\'o-Knapp choice}~\cite{escardo2017partial,dejong2022semidecidability}.%
\footnote{This should not be confused with a related principle also called Escard\'o-Knapp choice (EKC) as introduced by Andrew Swan in two talks from 2019~\cite{Swan2019a,Swan2019b}.}
However, it is well-known that this closure property holds in the presence of \emph{countable choice}, a principle which says that, for any type family $B : \N \to \UU$, we have
\begin{equation*}
	\left(\forall n:\N. \, \trunc{B(n)} \right) \, \to \, \trunc{\Pi n:\N.\, B(n)};
\end{equation*}
in other words, that we can commute the propositional truncation with countable products.
We sketch the argument in \cref{lem:cc-implies-closure-under-joins} below.
In \cref{subsec:CC}, we present further consequences of countable choice for the hierarchy of decidability.
Afterwards, in \cref{subsec:Sierpinski}, we discuss \emph{Sierpi\'nski-semidecidability}, an alternative definition of semidecidability that is closed under countable joins without the assumption of a choice principle, and its connection with the notions studied in this paper.

\subsection{Consequences of Countable Choice}\label{subsec:CC}

The following is Theorem~4 of Escard\'o and Knapp~\cite{escardo2017partial}.
\begin{lemma}[\flink{Lemma-41}]\label{lem:cc-implies-closure-under-joins}
	Assuming countable choice, the countable join of semidecidable propositions is semidecidable.
\end{lemma}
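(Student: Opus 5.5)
The plan is to use countable choice to turn the truncated semidecidability witnesses into an actual sequence of binary sequences, and then dovetail these into a single binary sequence. Let $P : \N \to \Prop$ be a family with $\forall n.\,\isSemiDec(P_n)$. Since $\isSemiDec(P_n) \equiv \trunc{\SemiDec(P_n)}$, countable choice applied to $B(n) \defeq \SemiDec(P_n)$ gives $\trunc{\Pi n.\, \SemiDec(P_n)}$. Our goal, $\isSemiDec(\exists n.\, P_n)$, is itself a proposition, so we may eliminate this truncation. We then assume an explicit family $(s^n, q^n)_{n : \N}$ with $s^n : \N \to \Bool$ and $q^n : P_n \leftrightarrow \exists i.\, s^n_i = 1$.

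Next we fix a bijection $\langle -,- \rangle : \N \times \N \to \N$, for instance the Cantor pairing, with inverse $k \mapsto (\pi_1 k, \pi_2 k)$. We define the dovetailed sequence $s : \N \to \Bool$ by $s_k \defeq s^{\pi_1 k}_{\pi_2 k}$. It then remains to check that $(\exists n.\, P_n) \leftrightarrow \exists k.\, s_k = 1$.

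For the forward direction, since the target is a proposition we may take explicit $n$ with $P_n$. From $q^n$ we obtain $\exists i.\, s^n_i = 1$, and again, because the goal is a proposition, an explicit $i$. Then $k \defeq \langle n, i\rangle$ satisfies $s_k = 1$. For the backward direction, from $k$ with $s_k = 1$ we get $s^{\pi_1 k}_{\pi_2 k} = 1$, so $q^{\pi_1 k}$ yields $P_{\pi_1 k}$, and hence $\exists n.\, P_n$.

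The only genuinely nontrivial step is the first one, because the witnesses must be chosen uniformly in $n$. Without countable choice we only have $\forall n.\, \trunc{\SemiDec(P_n)}$, and the sequence $s$ cannot be defined, since it depends on all $s^n$ simultaneously. This is exactly where the hypothesis enters; the remaining steps are routine truncation-elimination and the arithmetic of the pairing function.
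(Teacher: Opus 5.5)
Your proposal is correct and takes the same approach as the paper's sketch. You use countable choice to obtain $\trunc{\Pi n.\,\SemiDec(P_n)}$, eliminate this truncation into the propositional goal, and dovetail the resulting family of sequences along a bijection $\N \times \N \cong \N$; you simply spell out both directions of the equivalence, which the paper leaves implicit.
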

\begin{proof}[Proof sketch]
	If there exists a binary sequence $s_n$ for every natural number $n$, then countable choice implies that there exists a sequence $\N \times \N \to \Bool$ that, when composed with any bijection $\N \cong \N \times \N$, gives a binary sequence that semidecides the infinite join.
\end{proof}
From the point of view of \cref{thm:existsPnomega2}, the above lemma says that certain $(\omega \cdot 3)$-decidable propositions are semidecidable.
This result can be strengthened: assuming countable choice, not only $(\omega \cdot 3)$-decidable, but in fact all $(\omega \cdot k)$-decidable propositions are semidecidable.

The following auxiliary statement follows directly by unfolding and substituting:
\begin{lemma}[\flink{Lemma-42}]\label{lem:larger-alpha-is-beta-dec}
	Let $\alpha, \beta : \Brw$ be given. If, for all ordinals $x$, the proposition $x \geq \alpha$ is $\beta$-decidable, then every $\alpha$-decidable proposition is $\beta$-decidable. \qed
\end{lemma}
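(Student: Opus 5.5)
The argument is a direct unfolding of the definitions followed by a substitution of propositions; the only care needed is in handling the truncations. Assume that for every $x : \Brw$ the proposition $\alpha \leq x$ is $\beta$-decidable, and let $P$ be $\alpha$-decidable. We must show $\exists z : \Brw.\,(P \leftrightarrow \beta \leq z)$. Since this goal is a proposition, we may eliminate the truncation in the hypothesis $\gendec \alpha P$ and work with an explicit witness $y : \Brw$ together with a proof of $P \leftrightarrow \alpha \leq y$.

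Next we instantiate the assumption at $x \defeq y$. This gives $\gendec \beta {(\alpha \leq y)}$, that is, $\exists z.\,((\alpha \leq y) \leftrightarrow \beta \leq z)$. Because we are still proving a proposition, we may again eliminate the truncation and obtain an explicit $z$ with $(\alpha \leq y) \leftrightarrow (\beta \leq z)$.

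Composing the two logical equivalences yields $P \leftrightarrow \alpha \leq y \leftrightarrow \beta \leq z$. Hence $z$ witnesses $\gendec \beta P$, and we conclude by taking $\totrunc{(z, \text{this equivalence})}$.

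No step is a real obstacle. The only thing to check is that each truncation elimination is legitimate, and it is, because $\gendec \beta P$ is by definition a propositional truncation. No property of $\Brw$ beyond the definitions is used.
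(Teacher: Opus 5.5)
Your proof is correct and is exactly the ``unfolding and substituting'' argument the paper has in mind. You take a witness $y$ with $P \leftrightarrow \alpha \leq y$, instantiate the hypothesis at $y$ to obtain $z$ with $(\alpha \leq y) \leftrightarrow (\beta \leq z)$, and compose the two equivalences, with both truncation eliminations justified because the goal is a proposition.
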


This implies:
\begin{theorem}[\flink{Theorem-43}]\label{thm:CCimpliesOmega.kTobeSemidec}
	Let $k$ be a natural number. Assuming countable choice, every $(\omega \cdot k)$-decidable proposition is semidecidable.
\end{theorem}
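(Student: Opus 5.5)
By \cref{lem:larger-alpha-is-beta-dec}, with $\alpha \defeq \omega\cdot k$ and $\beta \defeq \omega+1$, and using \cref{prop:DecOmegaEquivNew} to identify semidecidability with $(\omega+1)$-decidability, it suffices to show the following: assuming countable choice, for every $k : \N$ and every $x : \Brw$, the proposition $\omega \cdot k \leq x$ is semidecidable. I will prove this by induction on $k$, for all $x$ simultaneously. Since semidecidability is a proposition, at each point we may pattern match on $x$ via the classification \cref{lem:recall-basic-Brw-properties}\cref{item:classification}.

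The cases $k=0$ and $k=1$ are direct. For $k=0$ the proposition is true, so it is semidecidable. For $k = 1$, the proposition $\omega \leq x$ is decidable by \cref{lem:recall-basic-Brw-properties}\cref{item:finite-is-dec}, hence semidecidable.

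For the step, assume that for every $y$ the proposition $\omega\cdot k \leq y$ is semidecidable, and consider $\omega\cdot(k+1) \leq x$ with $k \geq 1$.
\begin{itemize}
\item If $x = \bzero$, the proposition is false, hence semidecidable.
\item If $x = \bsuc\, x'$, then since $\omega\cdot(k+1)$ is a limit, \cref{lem:recall-basic-Brw-properties}\cref{lem:lim-under-suc} gives $\omega\cdot(k+1) \leq x \leftrightarrow \omega\cdot(k+1)\leq x'$. Hence we are reduced to a smaller ordinal, and an inner induction on $x$ handles this case. Equivalently, we may first replace $x$ by $\dwa x$ using \cref{lemma:RU23}\cref{item:up-2}, which removes the successor case entirely.
\item If $x = \blim f$, I claim that
\[
\omega\cdot(k+1) \leq \blim f \;\longleftrightarrow\; \exists n.\, \omega\cdot k \leq f_n .
\]
For the direction ``$\to$'', the inequality $\omega\cdot k + 1 \leq \omega\cdot(k+1)$ holds, so we get $\omega \cdot k < \blim f$, and \cref{lem:recall-basic-Brw-properties}\cref{lem:x-under-limit-means-x-under-element} yields some $n$ with $\omega\cdot k < f_n$. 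For the direction ``$\leftarrow$'', if $\omega\cdot k \leq f_n$, then $\blim f \geq f_n + \omega \geq \omega\cdot k + \omega$, because $f$ is strictly increasing. This is the same fact used in the proof of \cref{thm:some-max-exist}.
\end{itemize}

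By the induction hypothesis, each $\omega\cdot k\leq f_n$ is semidecidable. So in the limit case the proposition is a countable join of semidecidable propositions, and \cref{lem:cc-implies-closure-under-joins} shows that it is semidecidable under countable choice. This completes the induction, and \cref{lem:larger-alpha-is-beta-dec} finishes the proof.

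The main obstacle I expect is the bookkeeping of the induction over $\Brw$. The property being proved is proposition-valued, so it suffices to handle the point constructors. However, the limit case requires choosing an $f$, and we only know such an $f$ exists up to truncation. This is harmless because the goal is a proposition. The other delicate point is that countable choice is needed exactly once per induction step: it turns the family of truncated witnesses $\forall n.\,\isSemiDec(\omega\cdot k\leq f_n)$ into a single family of sequences, and it is applied inside \cref{lem:cc-implies-closure-under-joins}.
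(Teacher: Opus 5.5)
Your proposal is correct and follows essentially the same route as the paper. It uses \cref{lem:larger-alpha-is-beta-dec} to reduce to showing that $x \geq \omega\cdot k$ is semidecidable for all $x$, inducts on $k$ (with an inner induction on $x$), rewrites the limit case as the countable join $\exists n.\, f_n \geq \omega\cdot k$, and concludes with \cref{lem:cc-implies-closure-under-joins}; the only differences are cosmetic (you get the forward direction of the limit equivalence from \cref{lem:recall-basic-Brw-properties}\cref{lem:x-under-limit-means-x-under-element} rather than the simulation characterization, and you treat $k=2$ via the general step rather than as a base case).
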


\begin{proof}
	We show by induction on $k$ that, for all $x$, the proposition $x \geq \omega \cdot k$ is semidecidable,
	which implies the theorem by \cref{lem:larger-alpha-is-beta-dec}.
	The claimed property is trivial for $k = 0$ ($0$\nobreakdash-decidable propositions are true), $k = 1$ ($\omega$-decidable propositions are decidable), and $k = 2$ ($(\omega \cdot 2)$\nobreakdash-decidable propositions are semidecidable).
	Assume it holds for $k$.
	We show by induction on $x$ that $x \geq {{\omega \cdot k} + \omega}$ is semidecidable.
	The only interesting case is when $x$ is a limit, $x = \blim f$. We have:
        \[
		\blim f \geq \omega \cdot k + \omega
		\longleftrightarrow \forall n. \, \exists m. \, f_m \geq \omega \cdot k + n
		\longleftrightarrow \exists m. \, f_m \geq \omega \cdot k,
        \]
	where the first step holds by \cref{lem:recall-basic-Brw-properties}\cref{lem:lim-under-lim-simulation} and the second because, if $f_m \geq {\omega \cdot k}$, then $f_{m+n} \geq {f_m + n} \geq {{\omega \cdot k} + n}$ by strict monotonicity of $f$.
	The last line is the countable join of propositions that are semidecidable by the induction hypothesis, making it semidecidable by \cref{lem:cc-implies-closure-under-joins}.
\end{proof}

However, it is not the case that the whole hierarchy collapses, as $\omega^2$-decidable propositions are indeed distinguishable from $(\omega \cdot k)$-decidable ones.
Recall that \emph{Markov's principle}, written $\MP$, states that if the binary sequence $s : \N \to \Bool$ is not constantly $0$, then there exists an index $i$ such that $s_i = 1$.
This is strictly weaker than $\LPO$.
In particular, there are models where both countable choice and $\MP$ hold, but $\LPO$ does not~\cite{hendtlass_separating_LPO}, which means that the following result shows that countable choice is not sufficient to prove that $\omega^2$-decidable propositions are semidecidable.

\begin{theorem}[\flink{Theorem-44}]
	If every $\omega^2$-decidable proposition is semidecidable,
	then $\MP$ implies $\LPO$.
\end{theorem}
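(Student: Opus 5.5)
We assume that every $\omega^2$-decidable proposition is semidecidable, and that $\MP$ holds; our goal is $\LPO$. Let $s : \N \to \Bool$ be a binary sequence. The plan is to look at the negation $\neg(\exists i.\, s_i = 1)$. The proposition $\exists i.\, s_i = 1$ is semidecidable, witnessed by $s$ itself, so by \cref{cor:neg-of-semidec-is-omega2} its negation is $\omega^2$-decidable. Equivalently, $\forall i.\, s_i = 0$ is the countable meet of decidable propositions. By the assumption, $\neg(\exists i.\, s_i = 1)$ is therefore semidecidable. Since we are proving the proposition $(\exists i.\,s_i=1)\lor\neg(\exists i.\,s_i=1)$, we may unpack the truncation and obtain a sequence $r : \N \to \Bool$ with $\neg(\exists i.\, s_i = 1) \leftrightarrow \exists j.\, r_j = 1$.

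Next we combine the two sequences. Define the interleaved sequence $u : \N \to \Bool$ by $u_{2i} \defeq s_i$ and $u_{2j+1} \defeq r_j$. The key claim is that $u$ is not constantly $0$. Suppose $\forall k.\, u_k = 0$. Then in particular $\forall i.\, s_i = 0$, so $\neg(\exists i.\, s_i = 1)$ holds. By the equivalence for $r$, this gives some $j$ with $r_j = 1$, that is, $u_{2j+1} = 1$, which is a contradiction. Hence $\neg\neg(\exists k.\, u_k = 1)$.

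Applying $\MP$ to $u$ yields $\exists k.\, u_k = 1$, and since the goal is a proposition we may take such a $k$ explicitly. We then decide whether $k$ is even or odd. If $k = 2i$, then $s_i = 1$, which is the left disjunct. If $k = 2j+1$, then $r_j = 1$, so $\neg(\exists i.\, s_i = 1)$ holds, which is the right disjunct. This gives $\LPO$ for $s$.

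There is no serious obstacle here. The only point requiring care is the bookkeeping with truncations: both the semidecidability witness for the negation and the index produced by $\MP$ are truncated. Since the target $(\exists i.\,s_i=1)\lor\neg(\exists i.\,s_i=1)$ is a proposition, both can be eliminated as above.
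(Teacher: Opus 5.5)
Your proof is correct and follows the paper's argument. The negation of a semidecidable proposition is $\omega^2$-decidable by \cref{cor:neg-of-semidec-is-omega2}, hence semidecidable by hypothesis, and $\MP$ then lets you decide the original proposition. The only difference is that you prove the two facts the paper just cites: that $\LPO$ amounts to semidecidable propositions being decidable, which you handle by working directly with $\exists i.\,s_i=1$, and that under $\MP$ a proposition is decidable once it and its negation are semidecidable, which your interleaving of $s$ and $r$ establishes.
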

\begin{proof}
	$\LPO$ is equivalent to the statement that every semidecidable proposition is decidable; thus, assume $P$ is semidecidable.
	By \cref{cor:neg-of-semidec-is-omega2}, $\neg P$ is $\omega^2$-decidable, and thus, by the assumption, it is semidecidable. $\MP$ implies that a proposition is decidable when it and its negation are both semidecidable.
\end{proof}

Our next observation is that, under countable choice, every $\omega^2$\nobreakdash-decidable proposition can be assumed to be a countable meet of semidecidable propositions:
\begin{theorem}[\flink{Theorem-45}]
    Let $P$ be $\omega^2$-decidable. Assuming countable choice, there exists a family $Q : \N \to \Prop$ of semidecidable propositions such that $P \leftrightarrow \forall n. Q_n$.
\end{theorem}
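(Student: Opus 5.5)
We are given some $x : \Brw$ with $P \leftrightarrow \omega^2 \leq x$; since the goal (existence of a family $Q$) is a proposition, we may take $x$ explicitly. By \cref{thm:limit-decidable-suffices} we may also assume $x$ is zero or a limit. The plan is to characterize $\omega^2 \leq x$ as a countable meet: define
\[
  Q_n \defeq (\omega \cdot n \leq x).
\]
Then $\forall n.\, Q_n \leftrightarrow \omega^2 \leq x$. From right to left this holds because $\omega \cdot n \leq \omega^2$ for all $n$. From left to right, $\omega^2$ is (definitionally, or up to bisimilarity) $\blim(\lambda n.\, \omega \cdot n)$, so the constructor $\llimiting$ gives $\omega^2 \leq x$ from $\forall n.\, \omega \cdot n \leq x$. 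Hence $P \leftrightarrow \forall n.\, Q_n$.

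It remains to show that each $Q_n$ is semidecidable. By definition, $Q_n$ is an $(\omega \cdot n)$-decidable proposition, witnessed by $x$ itself. By \cref{thm:CCimpliesOmega.kTobeSemidec}, countable choice implies that every $(\omega\cdot n)$-decidable proposition is semidecidable. Its proof already shows, by induction on $n$, that $x \geq \omega\cdot n$ is semidecidable for every $x$, which is exactly what we need. We do not need to choose the semidecidability witnesses uniformly in $n$: the family $Q$ is a family of propositions, and the statement only asks that each $Q_n$ be semidecidable, i.e.\ $\forall n.\, \isSemiDec(Q_n)$, which is itself truncated. So no additional use of choice beyond the one inside \cref{thm:CCimpliesOmega.kTobeSemidec} is required.

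The main point to check is the identification $\omega^2 = \blim(\lambda n.\, \omega\cdot n)$, together with strict increase of $n \mapsto \omega\cdot n$ (note $\omega\cdot 0 = 0 < \omega$). This should follow directly from the definitions of multiplication and exponentiation in \cite{kraus2023ordinals}, where $\omega^2 = \omega \cdot \omega$ is defined as the limit of $\omega\cdot n$. If the definitions differ, the two sequences are easily shown to be bisimilar, and $\bbisim$ closes the gap. The genuinely nontrivial ingredient, namely using countable choice to make $x \geq \omega\cdot n$ semidecidable, is entirely supplied by \cref{thm:CCimpliesOmega.kTobeSemidec}.
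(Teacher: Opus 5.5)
Your proof is correct and is essentially the paper's argument: you write $\omega^2 \leq x$ as a countable meet of conditions of the form $\omega\cdot k \leq (\text{something})$, each of which is semidecidable by \cref{thm:CCimpliesOmega.kTobeSemidec}. The paper instead first reduces to $x = \blim f$ and uses the simulation characterization to take $Q_k \defeq \exists m.\, f_m \geq \omega\cdot k$, which is equivalent to your $Q_{k+1}$. Your choice $Q_n \defeq (\omega\cdot n \leq x)$ is directly $(\omega\cdot n)$-decidable with witness $x$, so your appeal to \cref{thm:limit-decidable-suffices} is not needed.
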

\begin{proof}
    By assumption, there exists an $x$ such that $P \leftrightarrow x \geq \omega^2$, and by \cref{thm:limit-decidable-suffices}, we can assume that $x = \blim f$ for some $f$.
    By the characterisation in \cref{lem:recall-basic-Brw-properties}\cref{lem:lim-under-lim-simulation}, the expression $\blim f \geq \omega^2$ is equivalent to $\forall k. \, \exists m. \, f_m \geq {\omega \cdot k}$, and since $\exists m. \, f_m \geq \omega \cdot k$ is semidecidable by \cref{thm:CCimpliesOmega.kTobeSemidec}, this expression is of the claimed form.
\end{proof}

A consequence is that under countable choice, $\omega^2$-decidable propositions are closed under countable meets:
\begin{theorem}[\flink{Theorem-46}]
	Let $P : \N \to \Prop$ be a family of $\omega^2$-decidable propositions.
	Assuming countable choice, $\forall n. \, P_n$ is $\omega^2$-decidable.
\end{theorem}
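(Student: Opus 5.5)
The plan is to reduce to the previous theorem (every $\omega^2$-decidable proposition is, under countable choice, a countable meet of semidecidable propositions) and then to \cref{thm:forallPnomega2}.

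For each $n$, the proposition $P_n$ is $\omega^2$-decidable. By the previous theorem there merely exists a family $Q^n : \N \to \Prop$ of semidecidable propositions with $P_n \leftrightarrow \forall k.\, Q^n_k$. This is only a truncated statement for each $n$. We therefore apply countable choice to the type family
\[
  B(n) \defeq \Sigma\, Q : \N \to \Prop.\, \bigl(\forall k.\, \isSemiDec(Q_k)\bigr) \times \bigl(P_n \leftrightarrow \forall k.\, Q_k\bigr).
\]
This yields, under the truncation, a function choosing such a family $Q^n$ for every $n$. Since the goal ``$\forall n.\, P_n$ is $\omega^2$-decidable'' is a proposition, we may assume this function is given explicitly.

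Next we fix a bijection $\langle -,- \rangle : \N \times \N \cong \N$ and define $R_{\langle n,k\rangle} \defeq Q^n_k$. Each $R_j$ is semidecidable by construction. Moreover
\[
  \forall n.\, P_n \;\longleftrightarrow\; \forall n.\, \forall k.\, Q^n_k \;\longleftrightarrow\; \forall j.\, R_j .
\]
By \cref{thm:forallPnomega2}, $\forall j.\, R_j$ is $\omega^2$-decidable, and so $\forall n.\, P_n$ is $\omega^2$-decidable as well.

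No step looks like a real obstacle. The one point needing care is the choice step. The previous theorem only gives $\forall n.\, \trunc{B(n)}$, and the family $Q^n$ must be chosen uniformly in $n$, which is exactly what countable choice provides. We keep the semidecidability proofs in $B(n)$ truncated, since \cref{thm:forallPnomega2} only requires each $R_j$ to be (truncatedly) semidecidable. Universe levels are unproblematic because $\Prop$ lives in the fixed universe and $B(n)$ lies one level up, where countable choice is assumed to apply.
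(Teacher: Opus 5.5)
Your proposal is correct and matches the paper's proof step for step. Both obtain the families $Q^n$ from the previous theorem, use countable choice to select them uniformly in $n$, reindex along a bijection $\N \times \N \cong \N$, and conclude with \cref{thm:forallPnomega2}. Your remarks on keeping semidecidability truncated and on the universe level at which countable choice is applied are accurate details that the paper leaves implicit.
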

\begin{proof}
	By the previous result, for each $n$, there exists a family $Q^n : \N \to \Prop$ of semidecidable propositions such that $P_n \leftrightarrow \forall i. \, Q^n_i$.
	Using countable choice, we can formulate this as the existence of a family $Q: \N \times \N \to \Prop$.
	Composed with any bijection $\N \times \N \cong \N$, this gives a family $\overline Q : \N \to \Prop$ of semidecidable propositions with the property $\forall n. \, P_n \leftrightarrow \forall n. \, \overline Q_n$. The right-hand side is $\omega^2$-decidable by \cref{thm:forallPnomega2}.
\end{proof}

\subsection{Avoiding Countable Choice with Sierpi\'nski-Semidecidability}\label{subsec:Sierpinski}

If one wants to study a notion of semidecidability that is closed under countable joins without the assumption of countable choice, one can consider the following:

\begin{definition}[\flink{Definition-47} Sierpi\'nski type]\label{def:sierpinski}
	The \emph{Sierpi\'nski type}, denoted by~$\Sierp$, is the free $\omega$-complete partial order on the unit type \cite{veltri2017type,chapman2019quotienting}.
	It can be implemented as a higher inductive type with the constructors $\top$, $\bot$, and $\Svee : (\N \to \Sierp) \to \Sierp$, together with the expected equations.
\end{definition}

	Alternatively, $\Sierp$ can be defined as the initial $\sigma$-frame.
	It is a sub-$\sigma$-frame of $\Prop$, and the canonical function $\Sierp \to \Prop$ is given by $s \mapsto s = \top$; see Escard\'o~\cite{EscardoQD} for more details.
The induced notion of semidecidability has previously been used by Gilbert~\cite{gilbert2017formalising} and goes back to~\cite[p.~53]{synthetictopology}.
\begin{definition}[\flink{Definition-48} Sierpi\'nski-semidecidability]
	A proposition $P$ is said to be \emph{Sierpi\'nski-semidecidable} if
	\begin{equation*}
		\Sigma s : \Sierp. \, P \leftrightarrow s = \top.
	\end{equation*}
\end{definition}

Note that even though formulated with $\Sigma$ rather than $\exists$, the type expressing Sierpi\'nski-semidecidability is still a proposition, since for $s, t : \Sierp$, we have $s = t$ if and only if $s = \top \leftrightarrow t = \top$.
The following is standard:

\begin{lemma}[\flink{Lemma-49}]
	Assuming countable choice, semidecidability and Sierpi\'nski-semidecidability coincide. \qed
\end{lemma}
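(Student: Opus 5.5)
We prove the two implications separately; both are standard, and only one of them needs countable choice.

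\emph{Semidecidable implies Sierpi\'nski-semidecidable (no choice needed).} Suppose $P$ is semidecidable. Since Sierpi\'nski-semidecidability is a proposition (as noted after the definition), we may assume we are given an explicit $s : \N \to \Bool$ with $P \leftrightarrow \exists i.\, s_i = 1$. Define $\sigma : \N \to \Sierp$ by sending $1$ to $\top$ and $0$ to $\bot$, and take $\Svee_i \sigma_i : \Sierp$. It then suffices to use that $\Svee_i \sigma_i = \top$ holds iff $\exists i.\, \sigma_i = \top$, which follows from the canonical map $\Sierp \to \Prop$ being a $\sigma$-frame homomorphism that preserves countable joins. It also uses that $\top \neq \bot$, so that $\sigma_i = \top$ iff $s_i = 1$.

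\emph{Sierpi\'nski-semidecidable implies semidecidable (uses countable choice).} The target, $\isSemiDec(s = \top)$, is a proposition, so we may prove it by induction on $s : \Sierp$.
\begin{itemize}
\item For $s \equiv \top$, take the constant sequence $1$.
\item For $s \equiv \bot$, take the constant sequence $0$.
\item For $s \equiv \Svee_n t_n$, the induction hypothesis gives $\forall n.\, \isSemiDec(t_n = \top)$. We have $\Svee_n t_n = \top \leftrightarrow \exists n.\, t_n = \top$, which is a countable join of semidecidable propositions, and this is semidecidable by \cref{lem:cc-implies-closure-under-joins}. That lemma is where countable choice enters.
\end{itemize}
The path constructors need no separate treatment because the goal is propositional, although the goal must be transported along the paths.

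The main obstacle is the induction principle of the higher inductive type $\Sierp$, more precisely whether the equations defining $\Sierp$ allow a proposition-valued elimination as directly as described. If that is awkward, the fallback is to work with the characterization of $\Sierp$ as the initial $\sigma$-frame. Using countable choice, one shows that the subset of $\Prop$ consisting of semidecidable propositions is a sub-$\sigma$-frame: it contains $\top$ and $\bot$, it is closed under binary meets (\cref{thm:conjclosuremain} with \cref{prop:DecOmegaEquivNew}), and it is closed under countable joins (\cref{lem:cc-implies-closure-under-joins}). Initiality then gives a $\sigma$-frame map from $\Sierp$ into this subset. This map composed with the inclusion into $\Prop$ is again a $\sigma$-frame map out of $\Sierp$, so by initiality it agrees with the canonical map $s \mapsto (s = \top)$. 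Hence every $s = \top$ is semidecidable.
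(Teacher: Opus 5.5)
Your argument is correct. The first direction is exactly the paper's proof of \cref{lem:semidec-implies-sierpinski}. For the converse, which the paper does not spell out and simply calls standard, your induction on $s : \Sierp$ into the proposition-valued family $s \mapsto \isSemiDec(s = \top)$ is the expected argument: \cref{lem:cc-implies-closure-under-joins} handles the $\Svee$ case, and that is precisely where countable choice enters. The obstacle you worry about is not real, because a proposition-valued motive only requires the point constructors; for an inductive-inductive presentation of $\Sierp$ one takes the trivial motive on the order. Your initiality fallback, which exhibits the semidecidable propositions as a sub-$\sigma$-frame of $\Prop$, is an equally valid repackaging of the same idea.
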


One direction holds without assuming countable choice:
\begin{lemma}[\flink{Lemma-50}]\label{lem:semidec-implies-sierpinski}
	If a proposition $P$ is semidecidable, then it is also Sierpi\'nski-semidecidable.
\end{lemma}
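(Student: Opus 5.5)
Since Sierpiński-semidecidability is itself a proposition (as noted after Definition 48), we may eliminate the propositional truncation in the hypothesis. So we assume an explicit binary sequence $s : \N \to \Bool$ with $P \leftrightarrow \exists i.\, s_i = 1$. The goal is then to produce an element $\sigma : \Sierp$ together with a proof that $\sigma = \top$ holds exactly when $P$ does.

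The construction is the obvious one. First map each digit to $\Sierp$ by sending $1$ to $\top$ and $0$ to $\bot$; write $b : \Bool \to \Sierp$ for this map. Then set
\[
  \sigma \defeq \Svee (\lambda i.\, b(s_i)).
\]
By transitivity of $\leftrightarrow$, it suffices to show that $\Svee(\lambda i.\, b(s_i)) = \top$ is equivalent to $\exists i.\, s_i = 1$.

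The key fact needed is the standard characterisation of countable joins in $\Sierp$: for any $u : \N \to \Sierp$, we have $\Svee u = \top \leftrightarrow \exists i.\, u_i = \top$. In other words, the canonical map $\Sierp \to \Prop$, given by $x \mapsto x = \top$, preserves countable joins. Given this, it only remains to check that $b(s_i) = \top \leftrightarrow s_i = 1$. This follows by case analysis on $s_i$, using $\bot \neq \top$.

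The characterisation of joins is the only real obstacle. Its backward direction is easy. If $u_i = \top$, then $\top = u_i \leq \Svee u$ by the upper-bound equation of the $\omega$-cpo structure, and $\top$ is the top element, so $\Svee u = \top$. The forward direction is where the actual content lies. I would obtain it from the $\sigma$-frame view mentioned after Definition 47: $\Sierp$ is a sub-$\sigma$-frame of $\Prop$ via $x \mapsto (x = \top)$. Since joins in $\Prop$ are truncated existentials, this gives the equivalence directly. If I instead worked directly with the higher inductive presentation, I would define by recursion a map $\Sierp \to \Prop$ sending $\top \mapsto \Unit$, $\bot \mapsto \mathbf{0}$, and $\Svee u \mapsto \exists i.\, \ldots$. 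I would then check that it respects the path constructors, which holds because $\Prop$ is an $\omega$-cpo with these operations. Finally, I would show by induction on $\Sierp$ that this map agrees with $x \mapsto (x = \top)$.
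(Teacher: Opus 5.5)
Your proof is correct and follows the paper's argument: send $0 \mapsto \bot$ and $1 \mapsto \top$, then take the countable join $\Svee$ in $\Sierp$. You also spell out the characterisation $\Svee u = \top \leftrightarrow \exists i.\, u_i = \top$, which the paper leaves implicit here; it relies on the same $\sigma$-frame fact in its proof of Proposition~51.
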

\begin{proof}
	Given a binary sequence, we replace all occurrences of $0$ by $\bot$, and $1$ by $\top$.
	The countable join of this sequence is the required element of $\Sierp$.
\end{proof}

However, the following
shows that Sierpi\'nski-semidecidability is weaker than semidecidability, as the analogous statement for semidecidability implies 
Escard\'o-Knapp choice:

\begin{proposition}[\flink{Proposition-51}]\label{prop:sierpinski-decidability-of-comparison}
	If a proposition is $(\omega \cdot n + k)$-decidable, then it is Sierpi\'nski-semidecidable.
\end{proposition}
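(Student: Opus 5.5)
Following the pattern of the proof of \cref{thm:CCimpliesOmega.kTobeSemidec}, we reduce, by \cref{theorem:RUDec}, to the case $k = 0$, i.e.\ to $(\omega \cdot n)$-decidable propositions. It then suffices to show that for every ordinal $x$ and every $n : \N$, the proposition $x \geq \omega \cdot n$ is Sierpi\'nski-semidecidable. Indeed, if $P \leftrightarrow \omega \cdot n \leq x$ and $s : \Sierp$ satisfies $(\omega \cdot n \leq x) \leftrightarrow s = \top$, then $s$ witnesses Sierpi\'nski-semidecidability of $P$. Since Sierpi\'nski-semidecidability is a proposition (as remarked after its definition), we may also use the truncated witness $x$ directly and perform induction freely.

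The main step is to prove, by induction on $n$, that for all $x$ the proposition $x \geq \omega \cdot n$ is Sierpi\'nski-semidecidable. For $n = 0$ the proposition is true, so take $\top$. For the step, the goal is a proposition, so we prove it by induction on $x$, using only the point constructors.
\begin{itemize}
\item If $x = \bzero$, then $\bzero \geq \omega \cdot (n+1)$ is false, and we take $\bot$.
\item If $x = \bsuc x'$, then by \cref{lem:recall-basic-Brw-properties}\cref{lem:lim-under-suc} we have $\bsuc x' \geq \omega\cdot(n+1) \leftrightarrow x' \geq \omega\cdot(n+1)$, and the inner induction hypothesis applies.
\item If $x = \blim f$, then exactly as in the proof of \cref{thm:CCimpliesOmega.kTobeSemidec} we have $\blim f \geq \omega \cdot n + \omega \leftrightarrow \exists m.\, f_m \geq \omega\cdot n$. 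This uses \cref{lem:recall-basic-Brw-properties}\cref{lem:lim-under-lim-simulation} for one direction and strict monotonicity of $f$ for the other. The outer induction hypothesis gives Sierpi\'nski-semidecidability of each $f_m \geq \omega \cdot n$.
\end{itemize}

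The main obstacle is in the limit case. From the outer induction hypothesis we only know that for each $m$ there is a witness in $\Sierp$, and we need to form their countable join $\Svee$ without countable choice. This is exactly where the $\Sigma$-formulation of Sierpi\'nski-semidecidability pays off. Because the type $\Sigma s : \Sierp.\, (Q \leftrightarrow s = \top)$ is a proposition, a witness can be extracted, i.e.\ we actually have a function $m \mapsto s_m : \Sierp$. We then take $\Svee_m s_m$, whose characteristic proposition is $\exists m.\, s_m = \top$, because the canonical map $\Sierp \to \Prop$ preserves countable joins ($\Sierp$ is a sub-$\sigma$-frame of $\Prop$). So we should strengthen the induction statement to produce the non-truncated $\Sigma$-data, which is harmless since that type is already a proposition.

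A secondary technical point is making this a well-defined function on $\Brw$ rather than on representatives, since $\blim$ has a $\bbisim$ path constructor. Proving a proposition-valued statement by induction avoids this issue, because only point constructors need to be treated. Finally, \cref{lem:larger-alpha-is-beta-dec}-style substitution (here with Sierpi\'nski-semidecidability in place of $\beta$-decidability) finishes the argument.
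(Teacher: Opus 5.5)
Your proposal is correct and is essentially the paper's proof. The paper also reduces to $k=0$, defines $s_{n+2}(\blim f) \defeq \Svee_{i} s_{n+1}(f_i)$ by induction on $n$ and on the ordinal, and derives the specification from $\Sierp$ being the initial $\sigma$-frame. The differences are cosmetic: you reduce $k$ via \cref{theorem:RUDec} instead of comparing the finite part of the ordinal with $k$, you handle $n=1$ uniformly instead of by decidability, and you avoid the explicit bisimilarity check by inducting into the proposition-valued $\Sigma$-type rather than defining a function on $\Brw$ directly.
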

\begin{proof}
	It suffices to show that the proposition $\alpha \geq \omega \cdot n + k$ is Sierpi\'nski-semidecidable.
	By comparing the finite part of $\alpha$ with $k$, this inequality can be seen to either be equivalent to $\alpha \geq \omega \cdot n$ or to $\alpha \geq \omega \cdot (n+1)$, which means that we can freely assume $k=0$.
	Therefore, it suffices to construct a function $s : \N \to \Brw \to \Sierp$, where we write $s \, n \, \alpha$ as $s_n(\alpha)$, with the property
	\begin{equation}\label{eq:spec-for-sierpinski-s}
		s_n(\alpha) = \top \, \longleftrightarrow \, \alpha \geq \omega \cdot n.
	\end{equation}
	As the proposition $\alpha \geq \omega \cdot n$ is decidable for $n=0$ and $n=1$,
	we can simply define $s_n(\alpha)$ to be $\bot$ or $\top$ accordingly in these cases.
	Otherwise, we do induction on $\alpha$:
        \begin{gather*}
		s_{n+2}(\bzero) \defeq \bot,\quad
		s_{n+2}(\bsuc \beta) \defeq s_{n+2}(\beta),\quad
		s_{n+2}(\blim f) \defeq \Svee_{i:\N} s_{n+1}(f_i).
        \end{gather*}
	It is easy to see that bisimilar sequences are mapped to equal values.
	The property~\eqref{eq:spec-for-sierpinski-s} follows from the construction of $\Sierp$ as the initial $\sigma$-frame; for example, we have
	\begin{align*}
		s_{n+2}(\blim f) = \top \; &\leftrightarrow \; \left(\Svee_{i:\N} s_{n+1}(f_i)\right) = \top \\
		&\leftrightarrow \; \exists i. \, s_{n+1}(f_i) = \top \\
		&\leftrightarrow \; \exists i. \, f_i \geq \omega \cdot (n+1) \\
		&\leftrightarrow \; \blim f \geq \omega \cdot (n+2). \qedhere
	\end{align*}
\end{proof}

\section{On the Agda Formalization}\label{sec:Agda-technical-discussion}

Overall, the formalization in Agda closely follows the paper development,
although we consider two techniques to be worth commenting on.

\subsection{A Relational Approach to Maps out of Higher Inductive Types}
As discussed in \cref{subsec:limmin-constructive}, the construction of the
\(\limMin\) function is subtle.
The point is that $\Brw$ is a quotient inductive-inductive type, so
that, besides defining \(\limMin\) on point constructors, we also need
to define its action on the two path constructors for bisimilarity and
set truncation.

In order to get a well-defined map, it moreover becomes necessary to
simultaneously prove that \(\limMin\) is monotone, so that we must carry out a
mutually recursive construction.
The monotonicity proof of \(\limMin\) also requires dealing with path
constructors, and this becomes combinatorially involved, since \(\limMin\) takes
two arguments, so that monotonicity in both arguments leads us to consider four
arguments of type \(\Brw\).
If two of these arguments are limits and two of them arise from the \(\bbisim\)
path constructor, then we must fill a square (as there are two interval
variables arising from the two path constructors). Similarly, if three, resp.\
four of these arguments arise from the \(\bbisim\) constructor, then we must fill
a three-, resp.\ four-dimensional cube. Moreover, there are over three dozen
cases that involve path constructors, making this approach intractable.

It is natural to ask whether we cannot dispel these path constructor cases by
appealing to the fact that type expressing monotonicity of \(\limMin\) is a
proposition, so that any cubes can always be filled.
The problem with this idea is that we could not get it to pass Agda's
termination checker.

The alternative approach that we ended up adopting is to first define a
\emph{relation}, prove it to be single-valued and total, and to extract a
function only at the very end.
The idea that inspired this approach is that all our trouble is caused by path
constructors, so that we simply define a relation on point constructors only,
which, as a type family, automatically respects equality (and hence path
constructors).
Relational approaches to recursive functions have been considered before, see
e.g.~Bove and Capretta~\cite{BoveCapretta2005}, and what we describe here may be taken as an
illustrative example of how it is beneficial for recursive functions
on higher inductive types too.

Specifically, we define, for \(x,y,z : \Brw\), a type \(\limMinRel x y z\) and
simultaneously prove that this relation is monotone in both arguments, i.e.,
if \(x \leq x'\) and \(y \leq y'\), then
\begin{fequation}{Equation-9}
  \limMinRel x y z \text{ and } \limMinRel{x'}{y'}{z'} \text{ implies } z \leq z'.
\end{fequation}
The intended reading of \(\limMinRel x y z\) is that it is inhabited precisely
when \(z = \limMiF x y\).

Now, \(\limMinRel{x}{y}{z}\) is defined as an inductive type family, where
would-be recursive function calls are recorded as hypotheses. For example, since
\eqref{eq:item-blim-bsuc} specifies that
\[
  \limMiF {(\blim\,f)}{(\bsuc\,y)} = \limMiF {(\blim\,f)} {y},
\]
we add a
constructor
\begin{align*}
  \limsucc &: (f : \N \toincr{<} \Brw) \, (y,z : \Brw) \\
  &\hspace*{-1mm}\to \limMinRel{\blim f}{y}{z} \to \limMinRel{\blim f}{\bsuc y}{z}.
\end{align*}

In total, \(\limMinRel x y z\) has only six constructors. We consider the cases
where the first argument is zero, a successor, or a limit. Only in the latter
case do we further case split on the second argument. Finally, we add a
truncation constructor that makes \(\limMinRel x y z\) proposition-valued.

As mentioned, we simultaneously prove that this relation is monotone and we note
that, in contrast to the direct functional approach discussed earlier, this does
\emph{not} require us to deal with any complicated path constructors.
This completes the mutually inductive part of the construction.

A convenient and direct consequence of monotonicity is that the relation is
single-valued: if \(\limMinRel x y z\) and \(\limMinRel x y z'\) both hold, then
\(z = z'\).
Hence, for any two \(x,y : \Brw\), the type \(\Sigma z : \Brw. \,\limMinRel x y z\)
is a proposition.

Totality of the relation expresses that we have an element of this type for any \(x,y : \Brw\).
Thanks to the fact that the type is a proposition, we can prove this inductively
by \emph{only} considering point constructors for \(x\) and \(y\), and this is
easy thanks to the constructors of \(\limMinRel x y z\).

In summary, the type \(\Sigma z : \Brw. \, \limMinRel x y z\) is contractible for
every \(x,y : \Brw\), and the first projection of the type yields a function
\(\limMin : \Brw \to \Brw \to \Brw\) that moreover has the right computational
behaviour, e.g.
\[
  \limMiF {(\blim\,f)}{(\bsuc\,y)} = \limMiF {(\blim\,f)} {y}
\]
holds definitionally, as well as all other expected equations. This is obviously convenient in the formalization.

\subsection{Properties of Maps out of Truncated Types}
Recall that in \cref{subsec:characteristic-ordinal} we define a function out of a
truncated type via a weak constancy argument.
Abstracting away from specifics, consider such a function $g : \trunc{A} \to B$ defined via the weak constancy of $f : A \to B$. The situation is as depicted in the following
commutative diagram:
\[
  \begin{tikzcd}
    A \ar[rr,"f"] \ar[dr,"\totrunc{-}"'] & & B \\
    & \trunc{A} \ar[ur,"g"']
  \end{tikzcd}
\]
When proving a property of values of \(g\), we may assume that we have an actual
element \(a\) of \(A\) and prove the relevant property for \(f\,a\) instead.
Rather than repeating this setup for various properties, we state, prove and
appeal to a ``meta''-lemma: for any proposition-valued family \(P\) over \(B\),
we have \(P(g\,t)\) for all \(t : \trunc{A}\), as soon as \(P(f\,a)\) holds for
all \(a : A\).
One can similarly define and prove a binary version for families indexed over
\(B \times B\), which in the setting of \cref{subsec:characteristic-ordinal} allows
one to easily prove a monotonicity property for the map out of the truncated type.

In the formalization, this meta-lemma for the characteristic ordinal
construction is called
{\small\texttt{characteristic-ordinal-up-to-reduction-lemma}} \flink{meta-lemma} (with its
binary version having a subscript {\small\texttt{2}}).

\section{Conclusions}\label{sec:conclusions}

We have introduced and studied the notion of $\alpha$-decidability, which generalizes existing notions.
We view this paper as the first step towards a theory of generalized decidability, and at the moment, many questions remain open.
In particular, we have not answered the question how the various notions of decidability assemble in a hierarchy, in the sense of whether $\alpha$-decidable propositions are always $\beta$-decidable as long as $\alpha \leq \beta$.
We have only given first results in that direction (cf.~\cref{subsec:first-steps-towards-hierarchy}).

A further question is whether it is possible to calculate implications between propositions:
If $P$ is $\alpha$-decidable and $Q$ is $\beta$-decidable, what can be said about the proposition $P \to Q$?
The important special case that arises when $Q$ is empty is partially answered by \cref{cor:neg-of-semidec-is-omega2}, which says that the negation of a semidecidable proposition is $\omega^2$-decidable.

In \cref{sec:semidec-and-quantifiers}, we work with a family $P: \N \to \Prop$ of semidecidable propositions.
While this might very well be the most interesting case, more general statements would involve families of $\alpha$-decidable propositions.
We do not know what can be said in the general case, but if one could, for example, show that taking countable join raises the ``decidability level'' by $\omega$, then this would give a more direct argument for \cref{thm:existsforallSemidec}.

A deeper but justified question is whether the type $\Brw$ of Brouwer tree ordinals is the best possible notion of ordinals for our purposes,
as any notion of ordinals gives rise to a theory of $\alpha$-decidability.
As we have discussed in the introduction, some of these theories collapse the hierarchy of decidability (and are thus less interesting). This happens, for example, for ``purely syntactic'' notions that represent Cantor Normal Forms and ensure that $\leq$ is always decidable, and for ``purely semantic'' notions that cause $\leq$ to not have any decidability properties.
Between these two extremes, there might be notions alternative to $\Brw$ that give rise to interesting theories of $\alpha$-decidability, with properties different from ours.

As discussed in \cref{subsec:Sierpinski}, switching from binary sequences (semidecidability) to the initial $\sigma$-frame (Sierpi\'nski-semidecidability) makes it possible to avoid the assumption of countable choice in certain situations.
We have not managed to formulate a variation of $\Brw$ with the same property, but it is possible that such a variation exists.



\bibliography{references}

\end{document}